\newtheorem{thm}{Theorem}
\newtheorem{cor}{Corollary}
\newtheorem{prop}{Proposition}
\newtheorem{defi}{Definition}
\newtheorem{lem}{Lemma}
\theoremstyle{remark}
\newtheorem{remark}{Remark}
\newcommand{\R}{\mathbb{R}}
\newcommand{\N}{\mathbb{N}}
\newcommand{\E}{\operatorname{E}}
\newcommand{\vct}[1]{\boldsymbol{#1}}
\newcommand{\mtx}[1]{\boldsymbol{#1}}
\newcommand{\T}{\mathrm{T}}
\newcommand{\trace}{\operatorname{trace}}
\newcommand{\rank}{\operatorname{rank}}
\DeclareMathOperator*{\minimize}{\text{minimize}}
\DeclareMathOperator*{\argmin}{\text{arg~min}}
\def \st {\operatorname*{subject\ to\ }}
\newcommand{\calA}{\mathcal{A}}
\newcommand{\calP}{\mathcal{P}}
\newcommand{\calC}{\mathcal{C}}
\newcommand{\calO}{\mathcal{O}}
\newcommand{\calE}{\mathcal{E}}
\newcommand{\vx}{\vct{x}}
\newcommand{\vy}{\vct{y}}
\newcommand{\bbmone}{\mathbbm{1}}
\newcommand{\mA}{\mtx{A}}
\newcommand{\mB}{\mtx{B}}
\newcommand{\mC}{\mtx{C}}
\newcommand{\mD}{\mtx{D}}
\newcommand{\mG}{\mtx{G}}
\newcommand{\mH}{\mtx{H}}
\newcommand{\mL}{\mtx{L}}
\newcommand{\mP}{\mtx{P}}
\newcommand{\mQ}{\mtx{Q}}
\newcommand{\mR}{\mtx{R}}
\newcommand{\mU}{\mtx{U}}
\newcommand{\mV}{\mtx{V}}
\newcommand{\mW}{\mtx{W}}
\newcommand{\mX}{\mtx{X}}
\newcommand{\mY}{\mtx{Y}}
\newcommand{\mZ}{\mtx{Z}}
\newcommand{\mDelta}{\mtx{\Delta}}
\newcommand{\mOmega}{\mtx{\Omega}}
\newcommand{\mPhi}{\mtx{\Phi}}
\newcommand{\mPsi}{\mtx{\Psi}}
\newcommand{\mSigma}{\mtx{\Sigma}}
\newcommand{\mId}{{\bf I}}
\newcommand{\mzero}{{\bf 0}}
\newlength{\imgwidth}
\newcommand{\twoCol}[2]{\ifthenelse{\boolean{twoColVersion}} {#1} {#2} }
\title{Global Optimality in Low-rank Matrix Optimization}
\author{Zhihui Zhu, Qiuwei Li, Gongguo Tang, and Michael B. Wakin
\thanks{This work was supported by NSF grant CCF-1409261, NSF grant CCF-1464205, and NSF CAREER grant CCF-1149225.}
\thanks {Z. Zhu, Q. Li, G. Tang, and M. B. Wakin are with the Department of Electrical Engineering, Colorado School of Mines, Golden, CO 80401 USA. Email: \{zzhu, qiuli, gtang, mwakin\}@mines.edu.}
}
\begin{document}

\maketitle
\begin{abstract}
This paper considers the minimization of a general objective function $f(\mX)$ over the set of rectangular $n\times m$ matrices that have rank at most $r$. To reduce the computational burden, we factorize the variable $\mX$ into a product of two smaller matrices and optimize over these two matrices instead of $\mX$. Despite the resulting nonconvexity, recent studies in matrix completion and sensing have shown that the factored problem has no spurious local minima and obeys the so-called strict saddle property (the function has a directional negative curvature at all critical points but local minima). We analyze the global geometry for a general and yet well-conditioned objective function $f(\mX)$ whose restricted strong convexity and restricted strong smoothness constants are comparable. In particular, we show that the reformulated objective function has no spurious local minima and obeys the strict saddle property. These geometric properties imply that a number of iterative optimization algorithms (such as gradient descent) can provably solve the factored problem with global convergence.
\end{abstract}

\begin{IEEEkeywords}
Low-rank matrix optimization, matrix sensing, noncovnex optimization, optimization geometry, strict saddle
\end{IEEEkeywords}

\IEEEpeerreviewmaketitle
\section{Introduction}
Consider the minimization of a general objective function $f(\mX)$ over all low-rank $n\times m$ matrices:{
\begin{equation}\begin{split}
& \minimize_{\mX\in\R^{n\times m}} f(\mX)\\
& \st \rank(\mX)\leq r,
\label{eq:original problem}\end{split}\end{equation}
where the objective function $f:\R^{n\times m}\rightarrow \R$ is smooth.} Low-rank matrix optimizations of the form \eqref{eq:original problem} appear in a wide variety of applications, including quantum tomography \cite{aaronson2007learnability,flammia2012quantum}, collaborative filtering \cite{srebro2004maximum,decoste2006collaborative}, sensor localization \cite{biswas2004semidefinite}, low-rank matrix recovery from compressive measurements \cite{tang2011lower,recht2010guaranteed}, and matrix completion \cite{candes2009exact,liu2016low}. {Due to the rank constraint, however, low-rank matrix optimizations of the form~\eqref{eq:original problem} are highly nonconvex and computationally NP-hard in general~\cite{fazel2004rank} even if $f$ itself is convex.} In order to deal with the rank constraint and to find a low-rank solution, the nuclear norm is widely used in matrix inverse problems \cite{candes2011tight,recht2010guaranteed} arising in machine learning \cite{harchaoui2012large}, signal processing \cite{davenport2016overview},  and control \cite{mohan2010reweighted}. Although nuclear norm minimization enjoys strong statistical guarantees~\cite{candes2009exact}, its computational complexity is very high (as most algorithms require performing an expensive singular value decomposition (SVD) in each iteration), prohibiting it from scaling to practical problems.

To relieve the computational bottleneck {and provide an alternative way of dealing with the rank constraint,} recent studies propose to factorize the variable into the Burer-Monteiro type decomposition~\cite{burer2003nonlinear,burer2005local} with $\mX = \mU\mV^\T$, and optimize over the $n\times r$ and $m\times r$ matrices $\mU$ and $\mV$. With this parameterization of $\mX$, we can recast \eqref{eq:original problem} into the following program:
\begin{align}
\minimize_{\mU\in\R^{n\times r},\mV\in\R^{m\times r}} h(\mU,\mV):=f(\mU\mV^\T).
\label{eq:factored problem no regularizer}\end{align}
The bilinear nature of the parameterization renders the objective function of \eqref{eq:factored problem no regularizer} nonconvex even when $f(\mX)$ is a convex function. Hence, the objective function in~\eqref{eq:factored problem no regularizer} can  potentially have spurious local minima (i.e., local minimizers that are not global minimizers) or ``bad'' saddle points that prevent a number of iterative algorithms from converging to the global solution. By analyzing the landscape of nonconvex functions, several recent works have shown that  the factored objective function $h(\mU,\mV)$ in  certain matrix inverse problems has no spurious local minima~\cite{bhojanapalli2016lowrankrecoveryl,ge2016matrix,park2016non}.

We generalize this line of work by focusing on a general objective function $f(\mX)$ in the optimization \eqref{eq:original problem}, not necessarily a quadratic loss function coming from a matrix inverse problem. By focusing on a general objective function, we attempt to provide a unifying framework for low-rank matrix optimizations with the factorization approach. We provide a geometric analysis for the factored program~\eqref{eq:factored problem no regularizer} and show that, under certain conditions {on} $f(\mX)$, all critical points of the objective function $h(\mU,\mV)$ are well-behaved. Our characterization of the geometry of the objective function ensures that a number of iterative optimization algorithms converge to a global minimum.

\subsection{Summary of Results}
The purpose of this paper is to analyze the geometry of the factored problem $h(\mU,\mV)$ in~\eqref{eq:factored problem no regularizer}. In particular, we attempt to understand the behavior of all of the critical points of the objective function in the reformulated problem~\eqref{eq:factored problem no regularizer}.

Before presenting our main results, we lay out the necessary assumptions on the objective function $f(\mX)$. As is known, without any assumptions on the problem, even minimizing traditional quadratic objective functions is challenging. For this purpose, we focus on the model where
$f(\mX)$ is $(2r,4r)$-restricted strongly convex and smooth, i.e., for any $n\times m$ matrices $\mX, \mG$ with $\rank(\mX)\leq 2r$ and $\rank(\mG)\leq 4r$, the Hessian of $f(\mX)$ satisfies
\begin{align}
\alpha\left\|\mG\right\|_F^2 \leq [\nabla^2 f(\mX)](\mG,\mG) \leq \beta \left\|\mG\right\|_F^2
\label{eq:RIP like}\end{align}
for some positive $\alpha$ and $\beta$.  A similar assumption is also utilized in~\cite[Conditions 5.3 and 5.4]{wang2016unified}. With this assumption on $f(\mX)$, we summarize our main results in the following informal theorem.

\begin{thm}{\em (informal)} Suppose the function $f(\mX)$ satisfies the $(2r,4r)$-restricted strong convexity and smoothness condition~\eqref{eq:RIP like} and has a critical point $\mX^\star\in\R^{n\times m}$ with $\rank(\mX^\star) = r^\star\leq r$. Then the factored objective function $h(\mU,\mV)$ (with an additional regularizer, see Theorem~\ref{thm:stricit saddle}) in \eqref{eq:factored problem no regularizer}  has no spurious local minima and obeys the strict saddle property (see Definition~\ref{def:strict saddle property} in Section~\ref{sec:Preliminaries}).
\label{thm:informal}\end{thm}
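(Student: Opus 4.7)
The plan is to study the regularized factored objective $g(\mU,\mV) := h(\mU,\mV) + \rho(\mU,\mV)$, where (as is standard in results of this type and as presumably appears in Theorem~\ref{thm:stricit saddle}) $\rho(\mU,\mV) = \frac{\mu}{4}\|\mU^T\mU - \mV^T\mV\|_F^2$ is the balancing regularizer that removes the $\mU \mapsto \mU\mS$, $\mV \mapsto \mV\mS^{-T}$ scale ambiguity. Stack variables as $\mW = \begin{bmatrix}\mU\\ \mV\end{bmatrix} \in \R^{(n+m)\times r}$ and a balanced factorization of the reference critical point as $\mW^\star = \begin{bmatrix}\mU^\star\\ \mV^\star\end{bmatrix}$ with $\mU^\star(\mV^\star)^T = \mX^\star$ and $(\mU^\star)^T \mU^\star = (\mV^\star)^T \mV^\star$. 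The theorem reduces to: at every critical point $\mW$ of $g$, either $\mU \mV^T = \mX^\star$ (so $\mW$ is a global minimizer, since $\mX^\star$ is the unique rank-$r$ minimizer of $f$ by~\eqref{eq:RIP like}) or $[\nabla^2 g(\mW)](\mDelta,\mDelta) \le -\tau \|\mDelta\|_F^2$ for a carefully chosen direction $\mDelta = [\mDelta_U;\mDelta_V]$ and some $\tau>0$.

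First, I would write out the first- and second-order information explicitly. The gradient of $h$ is $\begin{bmatrix}\nabla f(\mU\mV^T)\mV,\; \nabla f(\mU\mV^T)^T \mU\end{bmatrix}$, giving the critical-point identities $\nabla f(\mU\mV^T)\mV + \mu\mU(\mU^T\mU-\mV^T\mV) = 0$ and its transpose analogue. The Hessian quadratic form along $\mDelta$ decomposes as
\[
[\nabla^2 h(\mW)](\mDelta,\mDelta) = [\nabla^2 f(\mU\mV^T)](\mDelta_U \mV^T + \mU\mDelta_V^T,\; \mDelta_U \mV^T + \mU\mDelta_V^T) + 2\inner{\nabla f(\mU\mV^T),\; \mDelta_U \mDelta_V^T},
\]
plus the contribution from $\nabla^2 \rho$, which is a known indefinite quadratic in $\mDelta$ and in $\mU^T\mU-\mV^T\mV$. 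The first term is sandwiched between $\alpha$ and $\beta$ times $\|\mDelta_U \mV^T + \mU\mDelta_V^T\|_F^2$ by~\eqref{eq:RIP like}, provided the two rank-$2r$ and rank-$4r$ conditions are respected.

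Second, for a non-optimal critical point, I would pick $\mDelta = \mW - \mW^\star \mR$ where $\mR$ is the $r\times r$ orthogonal Procrustes rotation minimizing $\|\mW - \mW^\star \mR\|_F$. This choice forces $(\mW^\star \mR)^T \mDelta$ to be symmetric, a property essential for controlling both the cross term and the indefinite $\rho$ contribution. Expanding $\mDelta_U\mV^T + \mU\mDelta_V^T = \mU\mV^T - \mU^\star(\mV^\star)^T + \mDelta_U \mDelta_V^T$ links the curvature term to the error $\mU\mV^T - \mX^\star$. Substituting the critical-point equations lets me rewrite $\inner{\nabla f(\mU\mV^T), \mDelta_U\mDelta_V^T}$ against rank-$2r$ arguments and in terms of $\mU^T\mU-\mV^T\mV$; combined with the regularizer Hessian this produces a definite quadratic form in $\|\mU\mV^T - \mX^\star\|_F^2$ and $\|\mDelta\|_F^2$, from which strict negative curvature follows whenever $\mU\mV^T \ne \mX^\star$, provided $\beta/\alpha$ is sufficiently close to one (i.e., $f$ is well-conditioned, consistent with the statement of the theorem).

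The main obstacle is precisely the indefinite cross term $\inner{\nabla f(\mU\mV^T), \mDelta_U\mDelta_V^T}$: the residual gradient $\nabla f(\mU\mV^T)$ is not itself low rank off the solution set, so~\eqref{eq:RIP like} does not apply directly, and this term can in principle be negative and large. The workaround is to use the critical-point equations to replace $\nabla f(\mU\mV^T) \mV$ and $\nabla f(\mU\mV^T)^T \mU$ by rank-$r$ expressions in $\mU,\mV$, after which the cross term becomes a combination of inner products against rank-$2r$ matrices bounded by~\eqref{eq:RIP like} and of quartic terms in $\mU^T\mU - \mV^T\mV$ absorbed by $\rho$. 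A secondary difficulty is that $\nabla^2 \rho$ is itself indefinite; the optimal alignment $\mR$ guarantees the cancellations needed to keep its net contribution controlled, and choosing $\mu$ in a range depending on $\alpha$ ensures none of the error terms swamps the dominant $-\alpha\|\mU\mV^T-\mX^\star\|_F^2$ contribution.
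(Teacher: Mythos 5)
Your overall strategy coincides with the paper's proof of the formal version (Theorem~\ref{thm:stricit saddle}): the same balancing regularizer (your $\rho$ is the paper's $g$, and vice versa), the same Procrustes-aligned direction $\mDelta = \mW - \mW^\star\mR$, the same decomposition of the Hessian quadratic form into the $f$-curvature term, the $\langle \nabla f(\mX),\mDelta_{\mU}\mDelta_{\mV}^\T\rangle$ cross term, and the regularizer terms, and the same key mechanism: after substituting the stationarity equations, the cross term collapses to $-\langle \nabla f(\mX)-\nabla f(\mX^\star),\mX-\mX^\star\rangle \leq -\alpha\|\mX-\mX^\star\|_F^2$, which is the sole source of negative curvature. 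That said, two concrete ingredients are missing, and they are where most of the actual work lies. First, you never establish that every critical point is balanced, i.e., $\mU^\T\mU=\mV^\T\mV$. The paper proves this up front (multiplying the two stationarity equations appropriately to get $\mU^\T\mU\mU^\T\mU=\mV^\T\mV\mV^\T\mV$ and invoking uniqueness of the PSD square root); only then do $\nabla f(\mX)\mV=\mzero$ and $\nabla f(\mX)^\T\mU=\mzero$ hold exactly. Balancedness is used not only in the cross term but also to get $\|\mD\mU^\T\|_F=\|\mD\mV^\T\|_F$ (needed to bound the $f$-curvature term by $\beta\|\mW\mDelta^\T\|_F^2$) and to prove $\|\mW\mW^\T-\mW^\star\mW^{\star\T}\|_F^2\leq 4\|\mX-\mX^\star\|_F^2$. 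Your alternative of carrying the residual $\mu\mU(\mU^\T\mU-\mV^\T\mV)$ terms and ``absorbing'' them into the regularizer cannot work as described: at a critical point the regularizer's Hessian contribution is nonnegative (the paper shows $\Pi_3+\Pi_4\geq 0$), so it has no negative slack with which to absorb anything.

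The more serious gap is quantitative. You assert that the pieces combine into a negative definite form ``provided $\beta/\alpha$ is sufficiently close to one,'' but you supply no mechanism for bounding the positive term $\beta\|\mDelta_{\mU}\mV^\T+\mU\mDelta_{\mV}^\T\|_F^2=\beta\|\mW\mDelta^\T\|_F^2$ by a sufficiently small multiple of $\|\mX-\mX^\star\|_F^2$; the naive estimate $\|\mW\mDelta^\T\|_F^2\leq\|\mW\|_2^2\|\mDelta\|_F^2$ does not close the argument. The paper needs two dedicated lemmas here: Lemma~\ref{lem:bound:WW}, which combines the stationarity equations with the isometry consequence of~\eqref{eq:RIP like} (Proposition~\ref{prop:RIP like}) to show $\|(\mW\mW^\T-\mW^\star\mW^{\star\T})\mP_{\mW}\|_F\leq 2\frac{\beta-\alpha}{\beta+\alpha}\|\mX-\mX^\star\|_F$, and Lemma~\ref{lem:bound WDelta}, whose $\frac{1}{8}$ coefficient on $\|\mW\mW^\T-\mW^\star\mW^{\star\T}\|_F^2$ is exactly what keeps the total positive contribution below the available $2\alpha\|\mX-\mX^\star\|_F^2$. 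Without these (or an equivalent), the claimed ``definite quadratic form'' does not follow, and one cannot even conclude $[\nabla^2\rho(\mW)](\mDelta,\mDelta)<0$, let alone the quantitative eigenvalue bound in~\eqref{eq:strict saddle}, which additionally requires Lemma~\ref{lem:CC - DD to WDelta} to pass from $\|\mW\mW^\T-\mW^\star\mW^{\star\T}\|_F^2$ to $\|\mDelta\|_F^2$.
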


\begin{remark} {As guaranteed by Proposition~\ref{prop:RIP to unique} (in Section \ref{sec:main results}), the $(2r,4r)$-restricted strong convexity and smoothness property~\eqref{eq:RIP like} ensures that $\mX^\star$ is the unique global minimum of \eqref{eq:original problem}.}
Theorem~\ref{thm:informal} then implies that we can recover the rank-$r^\star$ global minimizer $\mX^\star$ of \eqref{eq:original problem} by many iterative algorithms (such as the trust region method~\cite{sun2016geometric} and stochastic gradient descent~\cite{ge2015escaping}) even from a random initialization. This is because 1) as guaranteed by Theorem~\ref{thm:global convergence}, the strict saddle property ensures local search algorithms converge to a local minimum, and 2) there are no spurious local minima.
\end{remark}
\begin{remark}
Since our main result only requires the $(2r,4r)$-restricted strong convexity and smoothness property~\eqref{eq:RIP like}, aside from low-rank matrix recovery~\cite{candes2011tight}, it can also be applied to many other low-rank matrix optimization problems~\cite{udell2014generalized} which do not necessarily involve quadratic loss functions. Typical examples include robust PCA~\cite{candes2011robust,bouwmans2016handbook}, 1-bit matrix completion~\cite{davenport20141,cai2013max} and Poisson principal component analysis (PCA)~\cite{salmon2014poisson}.
\end{remark}

\begin{remark}
{
Similar results on positive semi-definite (PSD) matrix optimization problems (but without the rank constraint) with generic objective functions were obtained in \cite{li2016}. We note that one cannot directly apply the results in \cite{li2016} to the optimization \eqref{eq:original problem} when the matrices under consideration are nonsymmetric or rectangular, even if we ignore the rank constraint. One could attempt to convert minimizing $f(\mX)$ over general $n\times m$ matrices into minimizing $q(\mZ)$ over the cone of PSD matrices of size $(m+n) \times (m+n)$,  where  $\mX$ and $\mX^\T$ form the upper right and lower left blocks of $\mZ$. The problem with this transformation, however, is that $q(\mZ)$ will no longer satisfy the same properties as $f(\mX)$, in particular the restricted strong convexity and smoothness condition~\eqref{eq:RIP like} which is a key assumption utilized in \cite{li2016}. For this reason, one cannot apply the results for the PSD optimization in \cite{li2016} directly to our problem. In terms of the proof techniques, although the generalization from the PSD case might not seem technically challenging at first sight, quite a few technical difficulties had to be overcome to develop the theory for the general case in this paper. In fact, the non-triviality of extending to the nonsymmetric case is also highlighted in \cite{tu2015low,park2016non}.
}
\end{remark}

\subsection{Related Works}
Compared with the original program~\eqref{eq:original problem},  the factored form~\eqref{eq:factored problem no regularizer} typically involves many fewer variables (or variables with much smaller size) and  can be efficiently solved by simple but powerful methods (such as gradient descent \cite{ge2015escaping,lee2016gradient}, the trust region method \cite{sun2015nonconvex}, and alternating methods \cite{jain2013low}) for large-scale settings, though it is nonconvex. In recent years, tremendous effort has been devoted to analyzing nonconvex optimizations by exploiting the geometry of the corresponding objective functions. These works can be separated into two types based on whether the geometry is analysed locally or globally. One type of work analyzes the behavior of the objective function in a small neighborhood containing the global optimum and {requires} a good initialization that is close enough to a global minimum. Problems such as phase retrieval~\cite{candes2015Wirtinger}, matrix sensing~\cite{tu2015low}, and semi-definite optimization~\cite{bhojanapalli2015dropping} have been studied.

Another type of work attempts to analyze the landscape of the objective function and show that it obeys the strict saddle property. If this particular property holds, then simple algorithms such as gradient descent and the trust region method are guaranteed to converge to a local minimum from a random initialization~\cite{lee2016gradient,ge2015escaping,sun2015complete} rather than requiring a good guess.  We approach low-rank matrix optimization with general objective functions~\eqref{eq:original problem} via a similar geometric characterization. Similar geometric results are known for a number of problems including complete dictionary learning~\cite{sun2015complete}, phase retrieval~\cite{sun2016geometric}, orthogonal tensor decomposition~\cite{ge2015escaping}, and matrix inverse problems~\cite{bhojanapalli2016lowrankrecoveryl,ge2016matrix,li2017geometry}. Empirical evidence also supports using the factorization approach for estimating a low-rank PSD matrix from a set of rank-one measurements corrupted by arbitrary outliers~\cite{li2017low} and for recovering a dynamically evolving low-rank matrix from incomplete observations~\cite{xu2016dynamic}.

Our work is most closely related to certain recent works in low-rank matrix optimization. Bhojanapalli et al.~\cite{bhojanapalli2016lowrankrecoveryl} showed that the low-rank, PSD matrix sensing problem has no spurious local minima and obeys the strict saddle property. Similar results were exploited for PSD matrix completion \cite{ge2016matrix}, PSD matrix factorization \cite{li2016symmetry} and low-rank, PSD matrix optimization problems with generic objective functions \cite{li2016}. Our work extends this line of analysis to general low-rank matrix (not necessary PSD or even square) optimization problems. Another closely related work considers the low-rank, non-square matrix sensing problem and matrix completion with the factorization approach~\cite{park2016non,zhu2017global,ge2017no}.  We note that our general objective function framework includes the low-rank matrix sensing problem as a special case (see Section~\ref{sec:Stylized Applications}). Furthermore, our result covers both over-parameterization where $r>r^\star$ and exact parameterization where $r=r^\star$. Wang et al.~\cite{wang2016unified} also considered the factored low-rank matrix minimization problem with a general objective function which satisfies the restricted strong convexity and smoothness condition. Their algorithms require good initializations for global convergence since they characterized only the local landscapes around the global optima. By categorizing the behavior of all the critical points, our work differs from \cite{wang2016unified} in that we instead characterize the global landscape of the factored objective function.

This paper continues in Section~\ref{sec:Preliminaries} with formal definitions for strict saddles and the strict saddle property. We present the main results and their implications in matrix sensing, weighted low-rank approximation, and 1-bit matrix completion in Section~\ref{sec:main results}. The proof of our main results is given in Section~\ref{sec:proof}. We conclude the paper in Section~\ref{sec:conclusion}.

\section{Preliminaries}\label{sec:Preliminaries}
\subsection{Notation}
To begin, we first briefly introduce some notation used throughout the paper. The symbols $\mId$ and $\mzero$ respectively represent the identity matrix and zero matrix with appropriate sizes. The set of $r\times r$ orthonormal matrices is denoted by $\calO_r:=\{\mR\in\R^{r\times r}:\mR^\T\mR = \mId\}$. If a function $h(\mU,\mV)$ has two arguments, $\mU\in\R^{n\times r}$ and $\mV\in\R^{m\times r}$, we occasionally use the notation $h(\mW)$ when we put these two arguments into a new one as $\mW=\begin{bmatrix}\mU \\ \mV \end{bmatrix}$.  For a scalar function $f(\mZ)$ with a matrix variable $\mZ\in\R^{n\times m}$, its gradient is an $n\times m$ matrix whose $(i,j)$-th entry is $[\nabla f(\mZ)]_{ij} = \frac{\partial f(\mZ)}{\partial Z_{ij}}$ for all $i\in [n], j\in [m] $. Here $[n] = \{1,2,\ldots,n\}$ for any $n\in \N$ and $Z_{ij}$ is the $(i,j)$-th entry of the matrix $\mZ$.
The Hessian of $f(\mZ)$ can be viewed as an $nm\times nm$ matrix $[\nabla^2 f(\mZ)]_{ij} = \frac{\partial^2 f(\mZ)}{\partial z_{i}\partial z_j}$ for all $i,j\in [nm]$, where $z_i$ is the $i$-th entry of the vectorization of $\mZ$. An alternative way to represent the Hessian is by a bilinear form defined via $[\nabla^2f(\mZ)](\mA,\mB) = \sum_{i,j,k,l}\frac{\partial^2 f(\mZ)}{\partial Z_{ij}\partial Z_{kl}} A_{ij} B_{kl}$ for any $\mA,\mB\in\R^{n\times m}$. The bilinear form for the Hessian is widely utilized through the paper.

\subsection{Strict Saddle Property}
Suppose $h:\R^{n}\rightarrow \R$ is a twice continuously differentiable objective function. We begin with the notion of strict saddles and the strict saddle property.
\begin{defi}[Critical points]
We say $\vx$ a critical point if the gradient at $\vx$ vanishes, i.e., $\nabla h(\vx) = \mzero$.
\end{defi}

\begin{defi}[Strict saddles]
A critical point $\vx$ is a strict saddle if the Hessian matrix evaluated at this point has a strictly negative eigenvalue, i.e., $\lambda_{\min}(\nabla^2 h(\vx))<0$.
\end{defi}

\begin{defi}[Strict saddle property~\cite{ge2015escaping}]\label{def:strict saddle property}
A twice differentiable function satisfies the strict saddle property if each critical point either corresponds to a local minimum or is a strict saddle.
\end{defi}
Intuitively, the strict saddle property requires a function to have a directional negative curvature at all critical points but local minima. This property allows a number of iterative algorithms such as noisy gradient descent~\cite{ge2015escaping} and the trust region method~\cite{conn2000trust} to further decrease the function value at all the strict saddles and thus converge to a local minimum.

\begin{thm}\label{thm:global convergence}
\cite{sun2015nonconvex,ge2015escaping,lee2016gradient} (informal)
For a twice continuously differentiable objective function satisfying the strict saddle property, a number of iterative optimization algorithms (such as gradient descent and the the trust region method) can find a local minimum.
\end{thm}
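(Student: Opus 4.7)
The plan is to split the argument into two essentially orthogonal pieces: (i) show that the iterates of the algorithm cannot linger at or converge to a strict saddle point, and (ii) combine this with a standard convergence-to-critical-point result. The strict saddle property then bootstraps these into convergence to a local minimum, since every critical point is, by hypothesis, either a local minimum or a strict saddle, so any critical point reached by the iteration that is \emph{not} a strict saddle must be a local minimum.

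For the trust region method, both pieces are handled by the classical Conn--Gould--Toint style analysis. At iterate $\vx_k$, an (approximate) solution of the subproblem
\[
\min_{\|\vs\|\le \Delta_k}\ m_k(\vs) := \inner{\nabla h(\vx_k),\vs} + \tfrac12 \vs^\T \nabla^2 h(\vx_k)\vs
\]
is computed and the radius $\Delta_k$ updated based on the agreement between actual and predicted decrease. Standard arguments, using the Cauchy point to handle the gradient term and the eigenvector associated with $\lambda_{\min}(\nabla^2 h(\vx_k))$ to handle negative curvature, guarantee that any accumulation point $\vx^\infty$ of $\{\vx_k\}$ satisfies the second-order necessary conditions $\nabla h(\vx^\infty)=\vzero$ and $\nabla^2 h(\vx^\infty)\succeq\mzero$. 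Under the strict saddle property, these two conditions together force $\vx^\infty$ to be a local minimum, as any other critical point would have a strictly negative Hessian eigenvalue and thereby violate $\nabla^2 h\succeq\mzero$.

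For gradient descent with constant step size $\eta<1/L$ (where $L$ is a Lipschitz constant for $\nabla h$ on the relevant sublevel set), I would follow Lee \emph{et al.}\ and invoke the stable manifold theorem from dynamical systems. The iteration map $g(\vx)=\vx-\eta\nabla h(\vx)$ is a diffeomorphism for $\eta<1/L$, and at a strict saddle $\vx^s$ the linearization $Dg(\vx^s)=\mId-\eta\nabla^2 h(\vx^s)$ has an eigenvalue strictly greater than $1$, coming from the strictly negative eigenvalue of $\nabla^2 h(\vx^s)$. Hence $\vx^s$ is an unstable fixed point of $g$, its local stable set is an embedded submanifold of codimension at least one, and pulling back under the diffeomorphism $g^{-k}$ for each $k$ and taking a union produces a global set of initializations converging to $\vx^s$ that still has Lebesgue measure zero. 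A countable covering of the set of strict saddles (e.g.\ by a second-countable base) then shows that for a.e.\ initialization the iterates avoid all strict saddles, so any limit point is necessarily a local minimum. A complementary argument for noisy or stochastic gradient descent, due to Ge \emph{et al.}, yields a quantitative polynomial-time escape by injecting isotropic noise so that iterates quickly leave the unstable subspace of the Hessian at a saddle.

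The main obstacle is the global-to-local bridge: the linearization and second-order stationarity arguments above describe behavior near individual critical points, but the theorem asserts something about entire trajectories that may wander through many regions of space. Converting pointwise instability of strict saddles into a trajectory-level statement requires either coercivity of $h$ (so the iterates stay in a compact set and accumulate at critical points), separability of the set of strict saddles to make the countable union argument go through, or explicit noise injection for a quantitative convergence rate. These technical side hypotheses, rather than the core linearization and trust region arguments, are what the word ``informal'' in the theorem statement is tacitly absorbing.
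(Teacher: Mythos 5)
The paper does not prove this statement at all: it is quoted verbatim as an informal citation of prior work (\cite{sun2015nonconvex,ge2015escaping,lee2016gradient}), so there is no in-paper proof to compare against. Your sketch is a correct and faithful reconstruction of the arguments in those cited references --- second-order stationarity of trust-region accumulation points, the stable/center-stable manifold argument of Lee et al.\ for gradient descent with random initialization, and noise injection for a quantitative escape rate --- and you correctly identify the compactness, measure-zero-union, and coercivity caveats that the word ``informal'' is absorbing, which is exactly the right level of care for a result the paper only invokes as background.
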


\section{Problem Formulation and Main Results} \label{sec:main results}
\subsection{Problem Formulation}
{This paper considers the problem \eqref{eq:original problem} of minimizing a general function $f(\mX)$ (over the set of low-rank matrices) which is assumed to have a low-rank critical point $\mX^\star$ with $\rank(\mX^\star) = r^\star \leq r$ such that  $\nabla f(\mX^\star) = \mzero$. Because of the restricted strong convexity and smoothness condition \eqref{eq:RIP like}, the following result establishes that if $f(\mX)$  has a critical point $\mX^\star$ with $\rank(\mX^\star)\leq r$, then it is the unique global minimum of \eqref{eq:original problem}.
\begin{prop}\label{prop:RIP to unique} Suppose $f(\mX)$ satisfies the $(2r,4r)$-restricted  strong convexity and smoothness condition \eqref{eq:RIP like} with positive  $\alpha$ and $\beta$. {Assume $\mX^\star$ is a critical point of $f(\mX)$ with $\rank(\mX^\star) = r^\star \leq r$. Then $\mX^\star$ is the global minimum of \eqref{eq:original problem}, i.e.,
\[
f(\mX^\star)\leq f(\mX), \ \forall \ \mX\in\R^{n\times m}, \rank(\mX)\leq r
\]
and the equality holds only at $\mX = \mX^\star$.}
\end{prop}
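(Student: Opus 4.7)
The plan is to invoke Taylor's theorem along the segment from $\mX^\star$ to an arbitrary competitor $\mX$ of rank at most $r$, and then apply the restricted strong convexity half of \eqref{eq:RIP like} to conclude that the quadratic remainder is strictly positive whenever $\mX \neq \mX^\star$. The linear term in the expansion will drop out because $\mX^\star$ is a critical point, so the entire gap $f(\mX)-f(\mX^\star)$ reduces to an integrated Hessian quadratic form in the perturbation, which is where the assumption does all the work.

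Concretely, first I would set $\mG := \mX - \mX^\star$ and $\mX(t) := \mX^\star + t\mG = (1-t)\mX^\star + t\mX$ for $t\in[0,1]$. The key rank bookkeeping is that $\rank(\mX(t)) \leq \rank(\mX^\star) + \rank(\mX) \leq 2r$ and $\rank(\mG) \leq \rank(\mX^\star) + \rank(\mX) \leq 2r \leq 4r$, so both arguments meet the requirements of \eqref{eq:RIP like} along the entire segment. Next, I would write the integral form of Taylor's theorem
\begin{equation*}
f(\mX) = f(\mX^\star) + \langle \nabla f(\mX^\star), \mG\rangle + \int_0^1 (1-t)\,[\nabla^2 f(\mX(t))](\mG,\mG)\,\d t,
\end{equation*}
and use $\nabla f(\mX^\star) = \mzero$ (the critical-point hypothesis) to eliminate the linear term.

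Finally, invoking the lower bound in \eqref{eq:RIP like} pointwise in $t$ yields $[\nabla^2 f(\mX(t))](\mG,\mG) \geq \alpha \|\mG\|_F^2$, so
\begin{equation*}
f(\mX) - f(\mX^\star) \geq \alpha \|\mG\|_F^2 \int_0^1 (1-t)\,\d t = \frac{\alpha}{2}\|\mX - \mX^\star\|_F^2 \geq 0,
\end{equation*}
with equality forcing $\mX = \mX^\star$. This simultaneously establishes that $\mX^\star$ is a global minimizer over the rank-$\leq r$ set and that it is the unique such minimizer.

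I do not anticipate a real obstacle here; the only delicate step is the rank accounting along the line segment, and it is clean because convexity of the segment keeps $\rank(\mX(t))$ bounded by $2r$ without needing any projection or smoothing argument. The fact that \eqref{eq:RIP like} explicitly allows a perturbation of rank up to $4r$ (whereas we only need $2r$ here) is generous for this particular proposition, which suggests the stronger $4r$ assumption is being reserved for the subsequent landscape analysis rather than for this uniqueness statement.
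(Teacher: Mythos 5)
Your proposal is correct and follows essentially the same route as the paper's own proof: a second-order Taylor expansion about $\mX^\star$, cancellation of the linear term via $\nabla f(\mX^\star)=\mzero$, the observation that both the intermediate point on the segment and the difference $\mX-\mX^\star$ have rank at most $2r$, and the lower bound in \eqref{eq:RIP like} to get $f(\mX)-f(\mX^\star)\geq \frac{\alpha}{2}\|\mX-\mX^\star\|_F^2$. The only cosmetic difference is that you use the integral form of the Taylor remainder where the paper uses the Lagrange (mean-value) form; both are valid and the rank bookkeeping is identical.
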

\begin{proof}[Proof of Proposition~\ref{prop:RIP to unique}]
First note that if $\mX^\star$ is a critical point of $f(\mX)$, then
\[
\nabla f(\mX^\star) = \mzero.
\]
Now for any $\mX\in\R^{n\times m}$ with $\rank(\mX)\leq r$, the second order Taylor expansion gives
\begin{align*}
f(\mX) =& f(\mX^\star) + \left\langle \nabla f(\mX^\star), \mX-\mX^\star \right\rangle \\
&+ \frac{1}{2}[\nabla^2 f(\widetilde \mX)](\mX-\mX^\star, \mX-\mX^\star),
\end{align*}
where $\widetilde \mX = t\mX^\star + (1-t)\mX$ for some $t\in[0,1]$. This Taylor expansion together with  $\nabla f(\mX^\star) = \mzero$ and \eqref{eq:RIP like} (both $\widetilde\mX$ and  $\mX'-\mX^\star$ have rank at most $2r$) gives
\begin{align*}
f(\mX) - f(\mX^\star) & = \frac{1}{2}[\nabla^2 f(\widetilde \mX)](\mX-\mX^\star, \mX-\mX^\star)\\ & \geq \frac{\alpha}{2}\|\mX - \mX^\star\|_F^2.
\end{align*}
\end{proof}
With this, in the sequel, we use $\mX^\star$ to denote the global minimum of \eqref{eq:original problem} (i.e., the low-rank critical point of $f(\mX)$), unless stated otherwise. We note that the assumption of the existence of a low-rank critical point $\mX^\star$ is very mild and holds in many matrix inverse problems \cite{recht2010guaranteed,candes2009exact}, where the unknown matrix to be recovered is a critical point of $f$.
}
We factorize the variable $\mX = \mU\mV^\T$ with $\mU\in\R^{n\times r}, \mV\in\R^{m\times r}$ and transform~\eqref{eq:original problem} into its factored counterpart \eqref{eq:factored problem no regularizer}.  Throughout the paper, $\mX$, $\mW$ and $\widehat\mW$ are matrices depending on $\mU$ and $\mV$:
\[
\mW = \begin{bmatrix} \mU \\ \mV \end{bmatrix}, ~ \widehat\mW = \begin{bmatrix} \mU \\ -\mV \end{bmatrix},~ \mX = \mU\mV^\T.
\]
Although the new variable  $\mW$ has much smaller size than $\mX$ when $r\ll \min\{n,m\}$, the objective function in the factored problem \eqref{eq:factored problem no regularizer} may have a much more complicated landscape due to the bilinear form about $\mU$ and $\mV$.  The reformulated objective function $h(\mU,\mV)$ could introduce spurious local minima or degenerate saddle points even when $f(\mX)$ is convex. Our goal is to guarantee that this does not happen.

Let $\mX^\star = \mQ_{\mU^\star}\mSigma^\star\mQ_{\mV^\star}^\T$ denote an SVD of $\mX^\star$, where $\mQ_{\mU^\star}\in\R^{n\times r}$ and $\mQ_{\mV^\star}\in\R^{m\times r}$ are orthonormal matrices of appropriate sizes, and $\mSigma^\star\in\R^{r\times r}$ is a diagonal matrix with non-negative diagonal (but with some zeros on the diagonal if $r>r^\star = \rank(\mX^\star)$). We denote
\[
\mU^\star = \mQ_{\mU^\star}{\mSigma^\star}^{1/2}, \quad \mV^\star = \mQ_{\mV^\star}{\mSigma^\star}^{1/2},
\]
where $\mX^\star = \mU^\star\mV^{\star\T}$ forms a balanced factorization of $\mX^\star$ since $\mU^\star$ and $\mV^\star$ have the same singular values. Throughout the paper, we utilize the following two ways to stack $\mU^\star$ and $\mV^\star$ together:
\[
\mW^\star = \begin{bmatrix} \mU^\star \\ \mV^\star \end{bmatrix}, \quad \widehat\mW^\star = \begin{bmatrix} \mU^\star \\ -\mV^\star \end{bmatrix}.
\]
Before moving on, we note that for any solution $(\mU,\mV)$ to~\eqref{eq:factored problem no regularizer}, $(\mU\mPsi,\mV\mPhi)$ is also a solution to \eqref{eq:factored problem no regularizer} for any $\mPsi,\mPhi\in\R^{r\times r}$ such that $\mU\mPsi\mPhi^\T\mV^\T = \mU\mV^\T$. In order to address this ambiguity (i.e., to reduce the search space of $\mW$ for~\eqref{eq:factored problem no regularizer}), we utilize the trick in \cite{tu2015low,park2016non,wang2016unified} by introducing a regularizer
\begin{align}
g(\mU,\mV) = \frac{\mu}{4}\left\|\mU^\T\mU - \mV^\T \mV\right\|_F^2
\label{eq:define g}\end{align}
 and  solving the following problem
\begin{align}
\minimize_{\mU\in\R^{n\times r},\mV\in\R^{m\times r}} \rho(\mU,\mV):= f(\mU\mV^\T) + g(\mU,\mV),
\label{eq:factored problem}\end{align}
where $\mu>0$ controls the weight for the term $\left\|\mU^\T\mU - \mV^\T \mV\right\|_F^2$, which will be discussed soon.

We remark that $\mW^\star$ is still a global minimizer {of} the factored problem \eqref{eq:factored problem} since $f(\mX)$ achieves its  global minimum over the low-rank set of matrices at $\mX^\star$ and $g(\mW)$ also achieves its  global minimum at $\mW^\star$. The regularizer $g(\mW)$ is applied to force the difference between the two Gram matrices of $\mU$ and $\mV$ to be as small as possible. The global minimum of $g(\mW)$ is $0$, which is achieved when $\mU$ and $\mV$ have the same Gram matrices, i.e., when $\mW$ belongs to
\begin{align}\label{eq:set of balanced factors}
\calE: = \left\{\mW = \begin{bmatrix} \mU \\ \mV \end{bmatrix}: \mU^\T\mU - \mV^\T \mV =  \mzero\right\}.
\end{align}
Informally, we can view \eqref{eq:factored problem} as finding a point from $\calE$ that also minimizes $f(\mU\mV^\T)$. This is formally established in Theorem~\ref{thm:stricit saddle}.

\subsection{Main Results}
Our main argument is that, under certain conditions on $f(\mX)$, the objective function $\rho(\mW)$ has no spurious local minima and satisfies the strict saddle property.  This is equivalent to categorizing all the critical points into two types: 1) the global minima which correspond to the global solution of the original convex problem \eqref{eq:original problem} and 2) strict saddles such that the Hessian matrix $\nabla^2\rho(\mW)$ evaluated at these points has a strictly negative eigenvalue. We formally establish this in the following theorem, whose proof is given in the next section.

\begin{thm}   For any $\mu>0$, each critical point $\mW= \begin{bmatrix} \mU \\ \mV \end{bmatrix}$ of $\rho(\mW)$ defined in~\eqref{eq:factored problem} satisfies
\begin{align}\label{eq:thm eq 1}
\mU^\T\mU - \mV^\T\mV = \mzero.
\end{align}
Furthermore, suppose that the function $f(\mX)$ satisfies the $(2r,4r)$-restricted strong convexity and smoothness condition \eqref{eq:RIP like} with positive constants $\alpha$ and $\beta$ satisfying $\frac{\beta}{\alpha}\leq 1.5$ and that the function $f(\mX)$ has a critical point $\mX^\star\in\R^{n\times m}$ with $\rank(\mX^\star) = r^\star\leq r$. Set $\mu \leq \frac{\alpha}{16}$ for the factored problem \eqref{eq:factored problem}. Then $\rho(\mW)$ has no spurious local minima, i.e., any local minimum of $\rho(\mW)$ is a global minimum corresponding to the global solution of the original problem \eqref{eq:original problem}:
$
\mU\mV^\T = \mX^\star.
$
In addition, $\rho(\mW)$ obeys the strict saddle property that any critical point not being a local minimum is a strict saddle with
\begin{equation}\begin{split}
&\lambda_{\min}\left(\nabla^2\left(\rho(\mW)\right) \right)\leq \\
&\left\{\begin{matrix} -0.08\alpha\sigma_{r}(\mX^\star), & r = r^\star \\
-0.05\alpha\cdot \min\left\{\sigma_{r^c}^2(\mW),2\sigma_{r^\star}(\mX^\star)\right\}, & r>r^\star\\
 -0.1\alpha \sigma_{r^\star}(\mX^\star), & r_c = 0,
\end{matrix}\right.
\label{eq:strict saddle}\end{split}\end{equation}
where $r^c\leq r$ is the rank of $\mW$, $\lambda_{\min}(\cdot)$ represents the smallest eigenvalue, and $\sigma_\ell(\cdot)$ denotes the $\ell$-th largest singular value.
\label{thm:stricit saddle}\end{thm}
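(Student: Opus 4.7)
I split the theorem into two arguments: first the balance equation \eqref{eq:thm eq 1} that holds at every critical point, then the classification of critical points into global minima and strict saddles with the quantitative bound \eqref{eq:strict saddle}.

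\textbf{Step 1 (Balance at critical points).} I would compute the two block gradients explicitly,
\begin{align*}
\nabla_{\mU}\rho(\mW) &= \nabla f(\mX)\,\mV + \mu\,\mU(\mU^\T\mU - \mV^\T\mV),\\
\nabla_{\mV}\rho(\mW) &= \nabla f(\mX)^\T\mU + \mu\,\mV(\mV^\T\mV - \mU^\T\mU),
\end{align*}
and combine them in the form $\mU^\T\nabla_\mU\rho - (\nabla_\mV\rho)^\T\mV$. The $\nabla f$ pieces cancel, producing the Lyapunov-type identity $\mu(\mU^\T\mU\,\mD + \mD\,\mV^\T\mV)=0$ with $\mD:=\mU^\T\mU-\mV^\T\mV$. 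Taking the inner product with $\mD$ gives $\|\mU\mD\|_F^2 + \|\mV\mD\|_F^2 = 0$, so that $\mU\mD = \mV\mD = 0$; then $\mD^2 = \mU^\T(\mU\mD) - \mV^\T(\mV\mD) = 0$, and symmetry of $\mD$ forces $\mD = 0$, i.e.\ $\mW\in\calE$.

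\textbf{Step 2 (Negative curvature direction).} For a critical point $\mW$ with $\mU\mV^\T\neq\mX^\star$, I would exhibit an explicit direction of negative Hessian curvature. The natural candidate is $\mDelta := \mW - \mW^\star \mR$, where $\mR \in \calO_r$ minimizes $\|\mW - \mW^\star\mR\|_F$; by a standard calculation $\mDelta$ satisfies a useful orthogonality $(\mW^\star\mR)^\T\mDelta = \mDelta^\T(\mW^\star\mR)$. The Hessian decomposes as
\[
[\nabla^2\rho(\mW)](\mDelta,\mDelta) = [\nabla^2 f(\mX)](\mDelta_{\mU}\mV^\T + \mU\mDelta_\mV^\T,\,\mDelta_\mU\mV^\T + \mU\mDelta_\mV^\T) + 2\langle\nabla f(\mX),\,\mDelta_\mU \mDelta_\mV^\T\rangle + [\nabla^2 g(\mW)](\mDelta,\mDelta).
\]
At a balanced critical point, $[\nabla^2 g(\mW)](\mDelta,\mDelta)$ simplifies to $\tfrac{\mu}{2}\|\mU^\T\mDelta_\mU + \mDelta_\mU^\T\mU - \mV^\T\mDelta_\mV - \mDelta_\mV^\T\mV\|_F^2\ge 0$, so one needs the $f$-contribution to be negative enough to dominate it; the choice $\mu\le\alpha/16$ is what makes this absorbable. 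The first-order criticality combined with Step~1 gives $\nabla f(\mX)\,\mV = 0$ and $\nabla f(\mX)^\T\mU = 0$, which lets me rewrite $\langle\nabla f(\mX),\mDelta_\mU\mDelta_\mV^\T\rangle$ as $-\langle\nabla f(\mX),\mU^\star\mR_\mU\mR_\mV^\T\mV^{\star\T}\rangle$ (after partitioning $\mR$), and then as $-\langle\nabla f(\mX) - \nabla f(\mX^\star),\,\mU^\star\mR_\mU\mR_\mV^\T\mV^{\star\T}\rangle$ since $\nabla f(\mX^\star)=0$.

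\textbf{Step 3 (Quantitative bound and the three regimes).} I would apply restricted strong convexity/smoothness \eqref{eq:RIP like} to a mean-value Hessian between $\mX$ and $\mX^\star$ (which has rank at most $2r$), using $\beta/\alpha\le 1.5$ to keep the smoothness cross-terms from overwhelming the convex lower bound. Assembling the pieces produces an inequality
\[
[\nabla^2\rho(\mW)](\mDelta,\mDelta)\;\le\;-c\,\alpha\,\sigma_{r^\star}(\mX^\star)\,\|\mDelta\|_F^2
\]
for a numerical constant $c$, which gives the $r=r^\star$ branch of \eqref{eq:strict saddle}. For the over-parameterized case $r>r^\star$ with $r^c:=\rank(\mW)<r$, the matrix $\mW$ has a nontrivial right-null vector $\vq$, and I would test along the rank-one direction $\mDelta = \mw\vq^\T$ for a suitably chosen vector $\mw$; this produces the $\sigma_{r^c}^2(\mW)$ branch. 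The degenerate case $r^c=0$ (i.e.\ $\mW=0$) is handled by testing along $\mW^\star$ itself and observing that the $g$-contribution vanishes entirely while the $f$-term supplies curvature proportional to $\alpha\sigma_{r^\star}(\mX^\star)$. Once the Hessian is negative on these directions, any critical point with $\mU\mV^\T\neq\mX^\star$ must be a strict saddle, and Proposition~\ref{prop:RIP to unique} pins down the remaining case as $\mU\mV^\T = \mX^\star$.

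\textbf{Main obstacle.} The delicate step is bounding the asymmetric cross-term $\langle \nabla f(\mX), \mDelta_\mU\mDelta_\mV^\T\rangle$: unlike in the PSD case of \cite{li2016}, it cannot be turned into a symmetric form where $\nabla f$ acts on a low-rank symmetric perturbation. I expect to spend most of the effort rewriting this term using the balance condition from Step~1 (which implies the singular values of $\mU$ and $\mV$ coincide, so $\mDelta_\mU\mDelta_\mV^\T$ can be related to the symmetrized quantity $\mDelta\mDelta^\T$ up to terms controlled by $\mW - \mW^\star\mR$), and in tuning the constants $\beta/\alpha\le 1.5$ and $\mu\le\alpha/16$ so the asymmetric leftover is absorbed into the dominant negative contribution.
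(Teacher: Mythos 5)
Your overall route matches the paper's: the same Lyapunov-type identity $\mu(\mU^\T\mU\,\mD + \mD\,\mV^\T\mV)=\mzero$ for the balance equation (your finish via $\|\mU\mD\|_F^2+\|\mV\mD\|_F^2=0$ is a perfectly valid alternative to the paper's appeal to uniqueness of PSD square roots), the same test direction $\mDelta=\mW-\mW^\star\mR$, and the same four-term Hessian decomposition. However, there are two concrete problems. First, a consequential sign/bookkeeping error in the one computation you carry out: expanding $\mDelta_\mU\mDelta_\mV^\T=\mX-\mU\mR^\T\mV^{\star\T}-\mU^\star\mR\mV^\T+\mX^\star$ and using $\nabla f(\mX)\mV=\mzero$, $\nabla f(\mX)^\T\mU=\mzero$ gives $\langle\nabla f(\mX),\mDelta_\mU\mDelta_\mV^\T\rangle=+\langle\nabla f(\mX),\mX^\star\rangle=-\langle\nabla f(\mX)-\nabla f(\mX^\star),\mX-\mX^\star\rangle$, which restricted strong convexity bounds by $-\alpha\|\mX-\mX^\star\|_F^2$. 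Your version, $-\langle\nabla f(\mX)-\nabla f(\mX^\star),\mU^\star\mR_\mU\mR_\mV^\T\mV^{\star\T}\rangle$, has the wrong sign (there is no partition of $\mR$; both blocks of $\mW^\star$ carry the same rotation) and, more importantly, has $\mX^\star$ rather than $\mX-\mX^\star$ in the second slot, so monotonicity of the restricted-strongly-convex gradient cannot be applied to it. You also need the observation $\langle\nabla f(\mX),\mX\rangle=0$ to land on the usable form.

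Second, and this is the real gap, you have misplaced the main difficulty. The cross term is dispatched in three lines; what actually consumes the hypotheses $\beta/\alpha\le 1.5$ and $\mu\le\alpha/16$ is absorbing the smoothness term $\beta\|\mW\mDelta^\T\|_F^2$ (your $[\nabla^2 f](\mDelta_\mU\mV^\T+\mU\mDelta_\mV^\T,\cdot)$ contribution) into $-2\alpha\|\mX-\mX^\star\|_F^2$. This requires a chain you never mention: (a) a bound of the form $\|(\mW\mW^\T-\mW^\star\mW^{\star\T})\mP_\mW\|_F\le 2\tfrac{\beta-\alpha}{\beta+\alpha}\|\mX-\mX^\star\|_F$ valid at critical points (Lemma~\ref{lem:bound:WW}, itself derived from criticality plus the isometry-type consequence of \eqref{eq:RIP like} in Proposition~\ref{prop:RIP like}); (b) the deterministic inequality of Lemma~\ref{lem:bound WDelta} bounding $\|\mW\mDelta^\T\|_F^2$ by $\tfrac18\|\mW\mW^\T-\mW^\star\mW^{\star\T}\|_F^2$ plus a multiple of its projection onto $\mathrm{range}(\mW)$; and (c) the identity $\|\mW\mW^\T-\mW^\star\mW^{\star\T}\|_F^2\le 4\|\mX-\mX^\star\|_F^2$ at balanced points. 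Without these, "assembling the pieces" cannot produce the stated constants. Finally, the three branches of \eqref{eq:strict saddle} all come from a single comparison lemma (Lemma~\ref{lem:CC - DD to WDelta}) converting $\|\mW\mW^\T-\mW^\star\mW^{\star\T}\|_F^2$ into $\sigma$-weighted $\|\mDelta\|_F^2$ along the \emph{same} direction $\mDelta$; your proposed switch to a rank-one null-space direction for the over-parameterized case is a different (unverified) argument and is not needed.
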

\begin{remark}
Equation~\eqref{eq:thm eq 1} shows that any critical point $\mW$ belongs to $\calE$ for the objective function in the factored problem~\eqref{eq:factored problem} with any positive $\mu$. This
demonstrates the reason for adding the regularizer $g(\mU,\mV)$. Thus, any iterative optimization algorithm converging to some critical point of $\rho(\mW)$ results in a solution within $\calE$. Furthermore, the strict saddle property along with the lack of spurious local minima ensures that a number of iterative optimization algorithms find the global minimum.
\end{remark}
\begin{remark}
For any critical point $\mW\in\R^{(n+m)\times r}$ that is not a local minimum, the right hand side of \eqref{eq:strict saddle} is strictly negative, implying {that} $\mW$ is a strict saddle. We also note that Theorem~\ref{thm:stricit saddle} not only covers exact parameterization where $r=r^\star$, but also includes the over-parameterization {case} where $r>r^\star$.
\end{remark}
\begin{remark}
The constants appearing in Theorem~\ref{thm:stricit saddle} are not optimized. We use $\mu\leq \frac{1}{16} \alpha$ simply to include $\mu = \frac{1}{16}$ which is utilized for the matrix sensing problem in~\cite{tu2015low}. If the ratio between the restricted strong convexity and smoothness constants $\frac{\beta}{\alpha}\leq 1.4$, then we can show that $\rho(\mW)$ has no spurious local minima and obeys the strict saddle property for any $\mu\leq \frac{1}{4}\alpha$ (where $\mu = \frac{1}{4}$ is utilized for the matrix sensing problem in~\cite{park2016non}). In all cases, a smaller $\mu$ yields a more negative constant in~\eqref{eq:strict saddle}; see Section~\ref{sec:proof} for more discussion on this. This implies that when the restricted strong convexity constant $\alpha$ is not provided a priori, one can always choose a small $\mu$ to ensure the strict saddle property holds, and hence guarantee the global convergence of many iterative optimization algorithms.

{The constant $1.5$ for the dynamic range $\frac{\beta}{\alpha}$ in Theorem~\ref{thm:stricit saddle} is also not optimized and it is possible to slightly relax this constraint with more sophisticated analysis. However, the following example involving weighted symmetric matrix factorization implies that the room for improving this constant is rather limited. Let
\[
\mOmega=\begin{bmatrix}\sqrt{1+a}&1\\1&\sqrt{1+a}\end{bmatrix}
\]
for some $a\geq0$,
\[
\mX^\star=\begin{bmatrix}1&1\\1&1\end{bmatrix}, ~\text{and}~ \mU=\begin{bmatrix}x\\y\end{bmatrix}.
\]
Now consider the following weighted low-rank matrix factorization:
	\begin{equation}\begin{split}
	& h(\mU)= \frac{1}{2}\|\mOmega\odot(\mU\mU^\T-\mX^\star)\|_F^2 \\ &=\frac{1+a}{2}\left(x^2-1\right)^2+\frac{1+a}{2}\left(y^2-1\right)^2+  (x y-1)^2,
\end{split}\label{eq:symmetric wpca}	\end{equation}
whose gradient $\nabla h(\mU)$ and Hessian $\nabla^2 h(\mU)$ are given by:
\[
		\nabla h(\mU)
		=2\begin{bmatrix}
		(a+1) \left(x^2-1\right) x+  y (x y-1)\\
		(a+1) \left(y^2-1\right) y+  x (x y-1)
		\end{bmatrix},
\]
and
\begin{align*}
&		\nabla^2h(\mU)= \\
&\quad		2\begin{bmatrix}
		y^2+\left(3 x^2-1\right) (a+1) & 2 x y-1 \\
		2 x y-1 &   x^2+\left(3 y^2-1\right) (a+1)\\
		\end{bmatrix}.
\end{align*}
Then,
\[
\mU = \begin{bmatrix}\sqrt{\frac{a}{a+2}}\\  -\sqrt{\frac{a}{a+2}}\end{bmatrix}
\]
is a critical point with
\[
\nabla^2 h(\mU)=
			\begin{bmatrix}
			4 a+\frac{8}{a+2}-6 & \frac{8}{a+2}-6 \\
			\frac{8}{a+2}-6 & 4 a+\frac{8}{a+2}-6 \\
			\end{bmatrix},
\]
which has eigenvalues
\[
\lambda_1=\frac{4 (a-2) (a+1)}{a+2}\begin{cases}<0, & a\in[0,2),\\ >0, &a>2, \end{cases}
\]
and $\lambda_2=4 a>0$. We conclude that this $\mU$ is a strict saddle point when $a<2$ and a spurious local minimum when $a>2$. This weighted symmetric matrix factorization problem \eqref{eq:symmetric wpca} satisfies the restricted strong convexity and smoothness condition~\eqref{eq:RIP like} with constants $\alpha = \|\mOmega\|_{\min}^2 = 1$ and $\beta = \|\mOmega\|_{\max}^2 =1+a$ (where $\|\mOmega\|_{\min}$ and $\|\mOmega\|_{\max}$ represent the smallest and largest entries in $\mOmega$; see Section \ref{sec:Stylized Applications}). Thus, we have a counter example which demonstrates the existence of spurious local minima when $\frac{\beta}{\alpha}>3$.}
\end{remark}

\begin{remark}
We finally remark that although Theorem~\ref{thm:stricit saddle} requires the additional regularizer \eqref{eq:define g}, empirical evidence (see experiments in Section~\ref{sec:experiments}) shows we can get rid of this regularizer for many iterative algorithms with random initialization.
\end{remark}

We prove Theorem~\ref{thm:stricit saddle} in Section~\ref{sec:proof}. Before proceeding, we present two stylized applications of Theorem~\ref{thm:stricit saddle} in matrix sensing and weighted low-rank approximation.

\subsection{Stylized Applications}\label{sec:Stylized Applications}
\subsubsection{Matrix Sensing}
We first consider the implication of Theorem~\ref{thm:stricit saddle} in the matrix sensing problem where
\begin{align*}
 f(\mX) = \frac{1}{2}\left\|\calA\left(\mX - \mX^\star\right)\right\|_2^2.
\end{align*}
Here $\calA:\R^{n\times m}\rightarrow \R^p$ is a known measurement operator satisfying the following restricted isometry property.

\begin{defi}(Restricted Isometry Property (RIP)~\cite{recht2010guaranteed}) The map $\calA:\R^{n\times m}\rightarrow \R^p$ satisfies the $r$-RIP with constant $\delta_r$ if
\begin{align}\label{eq:RIP}
\left(1 - \delta_r\right) \left\|\mX\right\|_F^2 \leq \left\|\calA(\mX)\right\|^2\leq \left(1+\delta_r\right) \left\|\mX\right\|_F^2
\end{align}
holds for any $n\times m$ matrix $\mX$ with $\rank(\mX)\leq r$.
\end{defi}
{Note that, in this case, the gradient of $f(\mX)$ at $\mX^\star$ is
\[
\nabla f(\mX^\star) = \calA^*\calA(\mX^\star - \mX^\star) = \mzero,
\]
which implies that $\mX^\star$ is a critical point of $f(\mX)$.}
The Hessian quadrature form $\nabla^2 f(\mX)[\mY,\mY]$ for any $n\times m$ matrices $\mX$ and $\mY$ is given by
\[
 \nabla^2 f(\mX)[\mY,\mY] = \left\|\calA(\mY)\right\|^2.
\]
If $\calA$ satisfies the $4r$-restricted isometry property with constant $\delta_{4r}$, then $f(\mX)$ satisfies the $(2r,4r)$-restricted strong convexity and smoothness condition~\eqref{eq:RIP like} with constants $\alpha = 1-\delta_{4r}$ and $\beta = 1-\delta_{4r}$ since
\begin{align*}
(1-\delta_{4r})\left\|\mY\right\|_F^2 \leq \left\|\calA(\mY)\right\|^2\leq (1+\delta_{4r})\left\|\mY\right\|_F^2
\end{align*}
for any rank-$4r$ matrix $\mY$. Now, applying Theorem~\ref{thm:stricit saddle}, we can characterize the geometry for the following matrix sensing problem with the factorization approach:
\begin{align}
\minimize_{\mU\in\R^{n\times r},\mV\in \R^{n\times r}}  \frac{1}{2}\left\|\calA(\mU\mV^\T - \mX^\star)\right\|_2^2 + g(\mU,\mV),
\label{eq:matrix sensing}\end{align}
where $g(\mU,\mV)$ is the added regularizer defined in~\eqref{eq:define g}.
\begin{cor}
Suppose $\calA$ satisfies the $4r$-RIP with constant $\delta_{4r}\leq \frac{1}{5}$, and set $\mu \leq \frac{1-\delta_{4r}}{16}$. Then the objective function in~\eqref{eq:matrix sensing} has no spurious local minima and satisfies the strict saddle property.
\label{cor:matrix sensing}\end{cor}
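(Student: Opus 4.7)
The plan is to verify that the matrix sensing objective $f(\mX) = \tfrac{1}{2}\|\calA(\mX-\mX^\star)\|_2^2$ fits into the framework of Theorem~\ref{thm:stricit saddle}, and then simply invoke that theorem. Three items need to be checked: (i) $\mX^\star$ is a critical point of $f$, (ii) $f$ satisfies the $(2r,4r)$-restricted strong convexity and smoothness condition~\eqref{eq:RIP like} with explicit constants $\alpha,\beta$, and (iii) the ratio $\beta/\alpha$ is at most $1.5$, so that the hypothesis of Theorem~\ref{thm:stricit saddle} is met.

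First I would observe that $\nabla f(\mX) = \calA^*\calA(\mX - \mX^\star)$, so $\nabla f(\mX^\star) = \mzero$, which immediately identifies $\mX^\star$ as a critical point of $f$. Next I would compute the bilinear Hessian form, which for a quadratic loss of this type is independent of the base point:
\begin{equation*}
[\nabla^2 f(\mX)](\mG,\mG) \;=\; \|\calA(\mG)\|_2^2 \qquad \text{for any } \mX,\mG\in\R^{n\times m}.
\end{equation*}
Then for any $\mX$ with $\rank(\mX)\leq 2r$ and any $\mG$ with $\rank(\mG)\leq 4r$, the $4r$-RIP~\eqref{eq:RIP} of $\calA$ with constant $\delta_{4r}$ gives
\begin{equation*}
(1-\delta_{4r})\|\mG\|_F^2 \;\leq\; [\nabla^2 f(\mX)](\mG,\mG) \;\leq\; (1+\delta_{4r})\|\mG\|_F^2,
\end{equation*}
so $f$ satisfies the $(2r,4r)$-restricted strong convexity and smoothness condition~\eqref{eq:RIP like} with $\alpha = 1-\delta_{4r}$ and $\beta = 1+\delta_{4r}$. (This corrects the apparent typo in the preceding paragraph, where $\beta$ was written as $1-\delta_{4r}$.)

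To finish I would verify the ratio constraint $\beta/\alpha \leq 1.5$. Since $\delta_{4r}\leq \tfrac{1}{5}$, we have
\begin{equation*}
\frac{\beta}{\alpha} \;=\; \frac{1+\delta_{4r}}{1-\delta_{4r}} \;\leq\; \frac{1+1/5}{1-1/5} \;=\; \frac{6/5}{4/5} \;=\; \frac{3}{2},
\end{equation*}
as required. The assumption $\mu \leq (1-\delta_{4r})/16 = \alpha/16$ matches the hypothesis of Theorem~\ref{thm:stricit saddle}, and $\rank(\mX^\star)\leq r$ is assumed in the setup of the matrix sensing problem. Applying Theorem~\ref{thm:stricit saddle} directly yields both conclusions: the factored objective in~\eqref{eq:matrix sensing} has no spurious local minima and obeys the strict saddle property.

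There is no real obstacle here; the proof is essentially a three-line verification that the RIP constant $\delta_{4r}\leq 1/5$ exactly implements the $\beta/\alpha \leq 3/2$ requirement of the master theorem, with the only point worth flagging being that $\alpha,\beta$ for matrix sensing are $1\mp\delta_{4r}$ rather than the value suggested by the typo in the text.
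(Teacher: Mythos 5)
Your proof is correct and follows the same route as the paper: verify that $\mX^\star$ is a critical point, identify $\alpha = 1-\delta_{4r}$ and $\beta = 1+\delta_{4r}$ from the $4r$-RIP, check $\frac{\beta}{\alpha} = \frac{1+\delta_{4r}}{1-\delta_{4r}} \leq 1.5$ when $\delta_{4r}\leq \frac{1}{5}$, and invoke Theorem~\ref{thm:stricit saddle}. You are also right that the paper's statement $\beta = 1-\delta_{4r}$ is a typo for $1+\delta_{4r}$, as the subsequent ratio computation in the paper confirms.
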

This result follows directly from Theorem~\ref{thm:stricit saddle} by noting that
$\frac{\beta}{\alpha}= \frac{1+\delta_{4r}}{1-\delta_{4r}}\leq 1.5$ if $\delta_{4r}\leq \frac{1}{5}$. We remark that Park et al.~\cite[Theorem 4.3]{park2016non} provided a similar geometric result for~\eqref{eq:matrix sensing}. Compared to their result which requires $\delta_{4r}\leq \frac{1}{100}$, our result has a much weaker requirement on the RIP of the measurement operator.

\subsubsection{Weighted Low-Rank Matrix Factorization}
We now consider the implication of Theorem~\ref{thm:stricit saddle} in the weighted matrix factorization problem~\cite{srebro2003weighted}, where
\begin{align*}
 f(\mX) &:= \frac{1}{2}\left\|\mOmega\circ\left(\mX - \mX^\star\right)\right\|_F^2.
\end{align*}
Here $\mOmega$ is an $n\times m$ weight matrix consisting of positive elements and $\circ$ denotes the point-wise product between two matrices. { In this case, the gradient of $f(\mX)$ at $\mX^\star$ is
\[
\nabla f(\mX^\star) = \mOmega\circ\mOmega \circ(\mX^\star - \mX^\star) = \mzero,
\]
which implies that $\mX^\star$ is a critical point of $f(\mX)$.} The Hessian quadrature form $\nabla^2 f(\mX)[\mY,\mY]$ for any $n\times m$ matrices $\mX$ and $\mY$ is given by
\[
 \nabla^2 f(\mX)[\mY,\mY] = \left\|\mOmega \circ \mY\right\|_F^2.
\]
Thus $f(\mX)$ satisfies the $(2r,4r)$-restricted strong convexity and smoothness condition~\eqref{eq:RIP like} with constants $\alpha = \|\mOmega\|_{\min}^2$ and $\beta = \|\mOmega\|_{\max}^2$ since
\[
\left\|\mOmega\right\|_{\min}^2 \left\|\mY\right\|_F^2 \leq \left\|\mOmega \circ \mY\right\|_F^2  \leq \left\|\mOmega\right\|_{\max}^2 \left\|\mY\right\|_F^2,
\]
where $\|\mOmega\|_{\min}$ and $\|\mOmega\|_{\max}$ represent the smallest and largest entries in $\mOmega$, respectively.
Now we consider the following weighted matrix factorization problem:
\begin{align}
\minimize_{\mU\in\R^{n\times r},\mV \in\R^{n\times r}} \frac{1}{2} \left\|\mOmega \circ(\mU\mV^\T - \mX^\star)\right\|_F^2 + g(\mU,\mV),
\label{eq:weighted matrix fact}\end{align}
where $g(\mU,\mV)$ is the added regularizer defined in~\eqref{eq:define g}.
For an arbitrary weight matrix $\mOmega$, it is proven that the weighted low-rank factorization can be NP-hard~\cite{gillis2011low} and has spurious local minima.
When the elements in the weight matrix $\mOmega$ are concentrated, it is expected that~\eqref{eq:weighted matrix fact} can be efficiently solved by a number of iterative optimization algorithms as it is close to an (unweighted) matrix factorization problem (where $\mOmega$ is a matrix of ones) which obeys the strict saddle property~\cite{li2016symmetry}. The following result characterizes the geometric structure in the objection function of~\eqref{eq:weighted matrix fact} by directly applying Theorem~\ref{thm:stricit saddle}.
\begin{cor}
Suppose $\mOmega$ satisfies $\frac{\|\mOmega\|_{\max}^2}{\|\mOmega\|_{\min}^2}\leq 1.5$. Set $\mu \leq \frac{\|\mOmega\|_{\min}^2}{16}$. Then the objective function in~\eqref{eq:weighted matrix fact} has no spurious local minima and satisfies the strict saddle property.
\end{cor}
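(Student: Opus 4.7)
The plan is to apply Theorem~\ref{thm:stricit saddle} directly, since the paragraphs preceding the corollary statement have already done essentially all of the setup. First I would verify that $\mX^\star$ is a critical point of $f(\mX) = \tfrac{1}{2}\|\mOmega \circ (\mX - \mX^\star)\|_F^2$: the text already computes $\nabla f(\mX^\star) = \mOmega \circ \mOmega \circ (\mX^\star - \mX^\star) = \mzero$. The rank assumption $\rank(\mX^\star) = r^\star \leq r$ is part of the standing setup of Section~\ref{sec:main results} and is inherited here, so the hypothesis on the existence of a low-rank critical point is met.

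Next I would check the $(2r,4r)$-restricted strong convexity and smoothness condition~\eqref{eq:RIP like}. The preceding calculation gives the Hessian bilinear form $[\nabla^2 f(\mX)](\mY,\mY) = \|\mOmega \circ \mY\|_F^2$, and the elementary bound
\[
\|\mOmega\|_{\min}^2\,\|\mY\|_F^2 \leq \|\mOmega \circ \mY\|_F^2 \leq \|\mOmega\|_{\max}^2\,\|\mY\|_F^2
\]
holds for every $\mY \in \R^{n\times m}$ (in particular for those of rank at most $4r$), which identifies the restricted convexity and smoothness constants as $\alpha = \|\mOmega\|_{\min}^2$ and $\beta = \|\mOmega\|_{\max}^2$. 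With this identification the corollary's hypothesis $\|\mOmega\|_{\max}^2/\|\mOmega\|_{\min}^2 \leq 1.5$ is literally the condition $\beta/\alpha \leq 1.5$ required by Theorem~\ref{thm:stricit saddle}, and the hypothesis $\mu \leq \|\mOmega\|_{\min}^2/16$ is literally the condition $\mu \leq \alpha/16$.

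With all hypotheses in hand, invoking Theorem~\ref{thm:stricit saddle} yields that the objective $\rho(\mU,\mV) = f(\mU\mV^\T) + g(\mU,\mV)$, which is exactly~\eqref{eq:weighted matrix fact}, has no spurious local minima and obeys the strict saddle property, completing the proof. There is no real obstacle here: the corollary is a pure specialization of the main theorem, and the only content is the translation $\alpha \mapsto \|\mOmega\|_{\min}^2$, $\beta \mapsto \|\mOmega\|_{\max}^2$. One small bookkeeping remark worth making explicit in the write-up is that the strict saddle quantitative bound~\eqref{eq:strict saddle} then automatically scales with $\|\mOmega\|_{\min}^2 \cdot \sigma_{r^\star}(\mX^\star)$, inherited unchanged from the theorem.
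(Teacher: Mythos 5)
Your proposal is correct and matches the paper's own treatment exactly: the paper proves this corollary by the same direct specialization of Theorem~\ref{thm:stricit saddle}, using the already-computed facts that $\nabla f(\mX^\star)=\mzero$ and $[\nabla^2 f(\mX)](\mY,\mY)=\|\mOmega\circ\mY\|_F^2$ to identify $\alpha=\|\mOmega\|_{\min}^2$ and $\beta=\|\mOmega\|_{\max}^2$. Nothing is missing.
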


\subsubsection{1-bit Matrix Completion}
Finally, we consider the problem of completing a low-rank matrix from a subset of 1-bit measurements~\cite{davenport20141}. Given $\mX^\diamond\in\R^{n\times m}$, a subset of indices $\Omega\subset[m]\times [n]$, and a differentiable function $q:\R \rightarrow [0,1]$, we observe
\begin{align}
Y_{i,j} = \left\{\begin{array}{ll} +1 & \text{with~probability} \ q(X^\diamond_{i,j}), \\ -1 &  \text{with~probability} \ 1- q(X^\diamond_{i,j}), \end{array}\right.
\label{eq:1bit observ}\end{align}
for all $(i,j)\in \Omega$. Typical choices for $q$ include the logistic regression model where $q(x) = \frac{e^{x}}{ 1+ e^x}$ and the probit regression model where $q(x) = 1 - \Phi(-x/\sigma) = \Phi(x/\sigma)$. Here $\Phi$ is the cumulative distribution function (CDF) of a mean-zero Gaussian distribution with variance $\sigma^2$.  In \cite{davenport20141}, the authors attempt to recover $\mX^\diamond$ from the incomplete nonlinear measurements $\{Y_{ij}\}_{(i,j)\in\Omega}$ by minimizing the negative log-likelihood function
\begin{align*}
F_{\Omega,\mY}(\mX): = -\sum_{(i,j)\in\Omega}\big(&\bbmone_{(Y_{i,j}=1)}\log(q(X_{i,j})) \\ & + \bbmone_{(Y_{i,j}=-1)}\log(1 - q(X_{i,j}))\big)
\end{align*}
which results in a maximum likelihood (ML) estimate.

We note that $F_{\Omega,\mY}$ is a convex function for both the logistic model and the probit model. The following result also establishes that $F_{\Omega,\mY}$ satisfies the restricted strong convexity and smoothness condition if we observe full 1-bit measurements, i.e., $\Omega = [n]\times [m]$.
\begin{lem} 
Suppose $\Omega = [n]\times [m]$.
Let \begin{align*}
\alpha_{q,\gamma} = \min_{|x|\leq \gamma} \min\bigg(&\frac{(q'(x))^2 - q(x)q''(x)}{q^2(x)},\\& \frac{(q'(x))^2 +(1- q(x))q''(x)}{(1-q(x))^2} \bigg)
\end{align*}
 and
\begin{align*}
\beta_{q,\gamma} = \max_{|x|\leq \gamma} \max\bigg(&\frac{(q'(x))^2 - q(x)q''(x)}{q^2(x)},\\& \frac{(q'(x))^2 +(1- q(x))q''(x)}{(1-q(x))^2} \bigg).
\end{align*}
Then $F_{\Omega,\mY}$ satisfies the restricted strong convexity and smoothness condition:
\[
\alpha_{q,\gamma}\|\mG\|_F^2 \leq [\nabla^2 F_{\Omega,\mY}(\mX)](\mG,\mG) \leq \beta_{q,\gamma} \|\mG\|_F^2
\]
for any $\mG\in \R^{n\times m}$ and $\|\mX\|_\infty\leq \gamma$.
\label{lem:1bit}\end{lem}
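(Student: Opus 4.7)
The plan is to exploit the full-observation hypothesis $\Omega=[n]\times[m]$, which makes $F_{\Omega,\mY}$ \emph{separable} across entries. Writing the $(i,j)$-th summand as
\[
\phi_{ij}(x) := -\bbmone_{(Y_{i,j}=1)}\log q(x) - \bbmone_{(Y_{i,j}=-1)}\log(1-q(x)),
\]
we have $F_{\Omega,\mY}(\mX)=\sum_{i,j}\phi_{ij}(X_{i,j})$. Because each $\phi_{ij}$ depends only on its own entry of $\mX$, every mixed second partial $\partial^2 F_{\Omega,\mY}/\partial X_{i,j}\,\partial X_{k,l}$ vanishes whenever $(i,j)\neq(k,l)$, so the Hessian acts diagonally in the entrywise basis and the bilinear form collapses to
\[
[\nabla^2 F_{\Omega,\mY}(\mX)](\mG,\mG)\;=\;\sum_{i,j}\phi_{ij}''(X_{i,j})\,G_{i,j}^2.
\]

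Next I would compute $\phi_{ij}''$ explicitly. A direct application of the quotient rule gives
\[
-\tfrac{d^2}{dx^2}\log q(x)=\tfrac{(q'(x))^2-q(x)q''(x)}{q^2(x)},\qquad -\tfrac{d^2}{dx^2}\log(1-q(x))=\tfrac{(q'(x))^2+(1-q(x))q''(x)}{(1-q(x))^2},
\]
which are exactly the two expressions appearing inside the $\min$ and $\max$ in the definitions of $\alpha_{q,\gamma}$ and $\beta_{q,\gamma}$. Hence, according to whether $Y_{i,j}=+1$ or $Y_{i,j}=-1$, the quantity $\phi_{ij}''(X_{i,j})$ equals one of those two expressions evaluated at $X_{i,j}$. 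Under the hypothesis $\|\mX\|_\infty\leq\gamma$ every such argument satisfies $|X_{i,j}|\leq\gamma$, so the very definition of $\alpha_{q,\gamma}$ and $\beta_{q,\gamma}$ as the worst-case $\min$ / $\max$ over both branches and over $|x|\leq\gamma$ yields the uniform pointwise bound
\[
\alpha_{q,\gamma}\;\leq\;\phi_{ij}''(X_{i,j})\;\leq\;\beta_{q,\gamma},
\]
valid for every $(i,j)$ and \emph{regardless} of the realization of the random sign $Y_{i,j}$.

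Substituting this into the diagonalized quadratic form and using $\sum_{i,j}G_{i,j}^2=\|\mG\|_F^2$ then delivers $\alpha_{q,\gamma}\|\mG\|_F^2\leq[\nabla^2 F_{\Omega,\mY}(\mX)](\mG,\mG)\leq\beta_{q,\gamma}\|\mG\|_F^2$, which is the claim. The argument is essentially mechanical, so there is no real obstacle; the only things worth being careful about are (i) that the full-observation assumption $\Omega=[n]\times[m]$ is what guarantees \emph{every} coordinate of $\mG$ is weighted, allowing us to upper- and lower-bound by the full Frobenius norm without any rank restriction on $\mG$, and (ii) that the symmetric $\min/\max$ over both branches in the definitions of $\alpha_{q,\gamma}$, $\beta_{q,\gamma}$ is required because the sign of $Y_{i,j}$ (and hence which of the two formulas for $\phi_{ij}''$ is active) is random and cannot be controlled a priori.
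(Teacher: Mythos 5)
Your proposal is correct and follows essentially the same route as the paper's proof: compute the entrywise second derivatives of the negative log-likelihood (which are exactly the two expressions inside the $\min$/$\max$), observe that the mixed partials vanish so the Hessian quadratic form diagonalizes to $\sum_{i,j}\phi_{ij}''(X_{i,j})G_{i,j}^2$, and bound each coefficient by $\alpha_{q,\gamma}$ and $\beta_{q,\gamma}$ using $\|\mX\|_\infty\leq\gamma$. No gaps.
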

The proof of Lemma~\ref{lem:1bit} is given in Appendix~\ref{sec:prf 1bit}. Now we {consider} the logistic regression model where $q(x) = \frac{e^x}{1+e^x}$.
\begin{cor} Suppose $\Omega = [n]\times [m]$ and $\gamma \leq 1.3$. Consider the logistic regression model where $q(x) = \frac{e^x}{1+e^x}$. Then $F_{\Omega,\mY}$ satisfies the restricted strong convexity and smoothness condition with
\[
\frac{\beta_{q,\gamma}}{\alpha_{q,\gamma}} \leq 1.5.
\]
\label{cor:1bit logistic}\end{cor}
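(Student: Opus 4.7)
The plan is to specialize the general restricted strong convexity and smoothness bounds in Lemma~\ref{lem:1bit} to the logistic model and to show that both quantities inside the minima/maxima defining $\alpha_{q,\gamma}$ and $\beta_{q,\gamma}$ collapse to the single simple quantity $q'(x) = q(x)(1-q(x))$. This is the key algebraic simplification that makes the bound tractable.

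First, I would record the two identities for the sigmoid: $q'(x) = q(x)(1-q(x))$ and $q''(x) = q'(x)(1 - 2q(x)) = q(x)(1-q(x))(1-2q(x))$. Writing $p := q(x) \in (0,1)$, a direct computation yields
\[
\frac{(q'(x))^2 - q(x) q''(x)}{q^2(x)} = \frac{p^2(1-p)^2 - p^2(1-p)(1-2p)}{p^2} = (1-p)\,p,
\]
and symmetrically
\[
\frac{(q'(x))^2 + (1-q(x)) q''(x)}{(1-q(x))^2} = \frac{p^2(1-p)^2 + p(1-p)^2(1-2p)}{(1-p)^2} = p(1-p).
\]
So both expressions are equal to $q(x)(1-q(x)) = q'(x)$, and therefore
\[
\alpha_{q,\gamma} = \min_{|x|\leq \gamma} q'(x), \qquad \beta_{q,\gamma} = \max_{|x|\leq \gamma} q'(x).
\]

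Next, I would observe that $q'(x) = e^x/(1+e^x)^2$ is symmetric about $0$, attains its maximum value $1/4$ at $x=0$, and is strictly decreasing in $|x|$. Consequently $\beta_{q,\gamma} = 1/4$ and $\alpha_{q,\gamma} = q'(\gamma) = e^\gamma/(1+e^\gamma)^2$, so
\[
\frac{\beta_{q,\gamma}}{\alpha_{q,\gamma}} = \frac{(1+e^\gamma)^2}{4 e^\gamma}.
\]
Requiring this ratio to be at most $1.5$ is equivalent, after multiplying through by $4e^\gamma$ and setting $y = e^\gamma$, to the quadratic inequality $y^2 - 4y + 1 \le 0$, i.e. $y \in [2-\sqrt{3},\, 2+\sqrt{3}]$. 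Hence the condition becomes $\gamma \le \ln(2+\sqrt{3}) \approx 1.3170$, which is implied by the hypothesis $\gamma \le 1.3$.

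There is no real obstacle here; the only subtle point is the algebraic cancellation that makes both quantities inside the $\min$/$\max$ coincide with $q'(x)$, which is a logistic-specific feature (since $q''=q'(1-2q)$). Once that is noticed, the rest is a one-variable monotonicity argument and a small numerical check on $\ln(2+\sqrt{3})$.
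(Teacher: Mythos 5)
Your proof is correct and follows essentially the same route as the paper's: the paper's ``direct calculation'' is exactly your observation that both quantities in the $\min$/$\max$ collapse to $q'(x)=q(x)(1-q(x))$, after which $\alpha_{q,\gamma}=q'(\gamma)$, $\beta_{q,\gamma}=q'(0)=1/4$, and the ratio is checked at $\gamma=1.3$. Your version is slightly more complete in that you make the algebraic cancellation explicit and identify the exact threshold $\ln(2+\sqrt{3})$, whereas the paper only performs the numerical check (and in fact displays the ratio with numerator and denominator inverted, a typo your derivation avoids).
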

\begin{proof}[Proof of Corollary~\ref{cor:1bit logistic}] Applying Lemma~\ref{lem:1bit} with direct calculation gives
\begin{align*}
&\alpha_{q,\gamma} = q'(\gamma) =  \frac{e^\gamma}{(1+e^\gamma)^2},\\
&\beta_{q,\gamma} = q'(0) =  \frac{e^0}{(1+e^0)^2} = \frac{1}{4},
\end{align*}
where $q'(x) = \frac{e^x}{(1 + e^x)^2}$. Now if we restrict $\|\mX\|_\infty \leq 1.3$, we have
\[
\frac{\beta_{q,\gamma}}{\alpha_{q,\gamma}} = 4 \frac{e^{1.3}}{(1+e^{1.3})^2} \leq 1.5.
\]
\end{proof}
Under the assumption that $\mX^\diamond$ is low-rank, a nuclear norm constraint is utilized in \cite{davenport20141} to force a low-rank solution. Corollary \ref{cor:1bit logistic} implies that we can apply matrix factorization for 1-bit matrix recovery given that the elements of $\mX$ are bounded. { For the setting where $\Omega$ is only a subset of $[n]\times [m]$, \cite{bhaskar20151} considered the 1-bit matrix {\em completion} problem with the rank constraint and established a stronger statistical recovery guarantee than that in \cite{davenport20141}. Empirical evidence (see \cite{bhaskar20151} and Section~\ref{sec:exp 1bit}) supports that matrix factorization also works for 1-bit matrix completion.
}

\section{Proof of Theorem \ref{thm:stricit saddle}}\label{sec:proof}
In this section, we provide a formal proof of Theorem~\ref{thm:stricit saddle}. The main argument involves showing that each critical point of $\rho(\mW)$ either corresponds to the global solution of~\eqref{eq:original problem} or is a strict saddle whose Hessian $\nabla^2 \rho (\mW)$ has a strictly negative eigenvalue. 
Specifically, we show that $\mW$ is a strict saddle by arguing that the Hessian $\nabla^2 \rho (\mW)$ has a strictly negative curvature along $\mDelta:= \mW -\mW^\star\mR$, i.e., $[\nabla^2 \rho (\mW)](\mDelta,\mDelta)\leq -\tau \|\mDelta\|_F^2$ for some $\tau>0$. Here $\mR$ is an $r\times r$ orthonormal matrix such that the distance between $\mW$ and $\mW^\star$ rotated through $\mR$ is as small as possible. 

\subsection{Supporting Results}
We first present some useful results. The $(2r,4r)$-restricted strong convexity and smoothness assumption \eqref{eq:RIP like} implies the following isometry property, whose proof is given in Appendix~\ref{sec:prf RIP like}.
\begin{prop}
Suppose the function $f(\mX)$ satisfies the $(2r,4r)$-restricted  strong convexity and smoothness condition \eqref{eq:RIP like} with positive $\alpha$ and $\beta$. Then for any $n\times m$ matrices $\mZ,\mG,\mH$ of rank at most $2r$, we have
\begin{align*}
&\left|\frac{2}{\alpha+\beta}[\nabla^2f(\mZ)](\mG,\mH) - \langle \mG,\mH \rangle\right|\leq \frac{\beta - \alpha}{\beta + \alpha}\left\|\mG\right\|_F \left\|\mH\right\|_F.
\end{align*}
\label{prop:RIP like}\end{prop}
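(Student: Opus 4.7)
The plan is to prove Proposition~\ref{prop:RIP like} via a standard polarization argument applied to a normalized bilinear form. Define
\[
B(\mG,\mH) := \frac{2}{\alpha+\beta}[\nabla^2 f(\mZ)](\mG,\mH) - \langle \mG,\mH\rangle,
\]
which is a symmetric bilinear form in $(\mG,\mH)$ (by symmetry of the Hessian). The goal is to bound $|B(\mG,\mH)|$ by $\frac{\beta-\alpha}{\beta+\alpha}\|\mG\|_F\|\mH\|_F$ for any $\mG,\mH$ of rank at most $2r$.

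First, I would establish the ``diagonal'' version of the inequality: for any matrix $\mK$ of rank at most $4r$, applying the restricted strong convexity and smoothness condition~\eqref{eq:RIP like} yields
\[
\frac{2\alpha}{\alpha+\beta}\|\mK\|_F^2 \leq \frac{2}{\alpha+\beta}[\nabla^2 f(\mZ)](\mK,\mK) \leq \frac{2\beta}{\alpha+\beta}\|\mK\|_F^2,
\]
and after subtracting $\|\mK\|_F^2$ from each side, both the lower and upper bound collapse in absolute value to $\frac{\beta-\alpha}{\beta+\alpha}\|\mK\|_F^2$. Hence $|B(\mK,\mK)|\leq \frac{\beta-\alpha}{\beta+\alpha}\|\mK\|_F^2$ for any rank-$4r$ matrix $\mK$.

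Next, I would extend to the off-diagonal case by polarization. Assume first that $\|\mG\|_F=\|\mH\|_F=1$ (we may do so by rescaling; the degenerate cases $\mG=\mzero$ or $\mH=\mzero$ are trivial). Since $\mG\pm\mH$ both have rank at most $4r$, applying the diagonal bound to each of them and using the parallelogram identity $B(\mG+\mH,\mG+\mH)-B(\mG-\mH,\mG-\mH)=4B(\mG,\mH)$ gives
\[
4|B(\mG,\mH)| \leq \frac{\beta-\alpha}{\beta+\alpha}\bigl(\|\mG+\mH\|_F^2+\|\mG-\mH\|_F^2\bigr) = \frac{\beta-\alpha}{\beta+\alpha}\cdot 2\bigl(\|\mG\|_F^2+\|\mH\|_F^2\bigr) = 4\cdot\frac{\beta-\alpha}{\beta+\alpha}.
\]
Rescaling back to arbitrary $\mG,\mH$ (replacing $\mG$ by $\mG/\|\mG\|_F$ and $\mH$ by $\mH/\|\mH\|_F$, and using bilinearity of $B$) yields the desired estimate.

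The only subtlety to guard against is the rank bookkeeping: it is essential that $\mG\pm\mH$ lies in the regime where the restricted convexity/smoothness hypothesis applies, which is why the hypothesis was stated with the asymmetric pair $(2r,4r)$ rather than $(2r,2r)$. Beyond this, the argument is routine; I do not expect any genuine obstacle.
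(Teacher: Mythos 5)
Your argument is correct and is essentially the paper's own proof: the paper likewise normalizes $\|\mG\|_F=\|\mH\|_F=1$, applies the two-sided bound \eqref{eq:RIP like} to $\mG\pm\mH$ (each of rank at most $4r$), and combines the resulting inequalities, which is exactly your polarization step written out for the unnormalized form $2[\nabla^2 f(\mZ)](\mG,\mH)-(\alpha+\beta)\langle\mG,\mH\rangle$. Your rank bookkeeping and the reduction to unit Frobenius norm also match the paper's treatment.
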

The following result provides an upper bound on the energy of the difference $\mW\mW^\T - \mW^\star\mW^{\star\T}$ when projected onto the column space of $\mW$. Its proof is given in Appendix~\ref{sec:proof bound WW - W*W*QQ}.
\begin{lem}\label{lem:bound:WW}
Suppose $f(\mX)$ satisfies the $(2r,4r)$-restricted strong convexity and smoothness condition \eqref{eq:RIP like}. For any critical point $\mW$ of \eqref{eq:factored problem}, let $\mP_{\mW}\in\R^{(m+n)\times (m+n)}$ be the orthogonal projector onto the column space of $\mW$. Then
\[
\left\|(\mW\mW^\T - \mW^\star\mW^{\star\T})\mP_{\mW} \right\|_F \leq 2\frac{\beta - \alpha}{\beta + \alpha} \left\|\mX -\mX^\star\right\|_F.
\]
 \end{lem}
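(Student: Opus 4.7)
The strategy combines the critical point conditions with a duality argument for the Frobenius norm and Proposition~\ref{prop:RIP like}.

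First, I would derive the gradient identities at the critical point $\mW$. Since the proof of Theorem~\ref{thm:stricit saddle} establishes that every critical point of $\rho$ satisfies $\mU^\T\mU = \mV^\T\mV$, the stationarity equation $\nabla_\mU \rho(\mW) = \nabla f(\mX)\mV + \mu\mU(\mU^\T\mU - \mV^\T\mV) = \mzero$ together with its $\mV$-analog collapse to
\[
\nabla f(\mX)\mV = \mzero,\qquad \nabla f(\mX)^\T\mU = \mzero.
\]
Consequently, $\langle \nabla f(\mX),\,\mA\mV^\T + \mU\mB^\T\rangle = 0$ for every $\mA\in\R^{n\times r}$ and $\mB\in\R^{m\times r}$, and a Taylor expansion around $\mX^\star$ (which has $\nabla f(\mX^\star) = \mzero$) combined with Proposition~\ref{prop:RIP like} gives
\[
|\langle \mX-\mX^\star,\mA\mV^\T+\mU\mB^\T\rangle| \le \tfrac{\beta-\alpha}{\beta+\alpha}\|\mX-\mX^\star\|_F\|\mA\mV^\T+\mU\mB^\T\|_F.
\]

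Second, I would use the duality characterization $\|\mM\mP_\mW\|_F = \sup\{\langle \mM,\mN\rangle:\|\mN\|_F\le 1,\,\mN\mP_\mW=\mN\}$, where $\mM := \mW\mW^\T - \mW^\star\mW^{\star\T}$. Any admissible $\mN$ has rows in the column span of $\mW$ and therefore admits a representation $\mN = \mT\mW^\T$ with $\mT = \begin{bmatrix}\mA\\\mB\end{bmatrix}$, giving the block form $\mN = \begin{bmatrix}\mA\mU^\T & \mA\mV^\T\\\mB\mU^\T & \mB\mV^\T\end{bmatrix}$. Expanding $\langle \mM, \mN\rangle$ in blocks produces four inner products: the two off-diagonal ones combine into $\langle \mX-\mX^\star,\mA\mV^\T+\mU\mB^\T\rangle$, which is already controlled by the RIP-like estimate above.

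Third, the diagonal inner products $\langle \mU\mU^\T - \mU^\star\mU^{\star\T},\mA\mU^\T\rangle$ and $\langle \mV\mV^\T - \mV^\star\mV^{\star\T},\mB\mV^\T\rangle$ need to be controlled. Exploiting the critical point balance $\mU^\T\mU = \mV^\T\mV$ together with the balanced parameterization $\mU^{\star\T}\mU^\star = \mV^{\star\T}\mV^\star=\mSigma^\star$, I would rewrite each diagonal term using identities like $(\mU\mU^\T - \mU^\star\mU^{\star\T})\mU = \mU\mTheta - \mU^\star\mU^{\star\T}\mU$ with $\mTheta = \mU^\T\mU$, then reapply the gradient identity with a new test matrix so that each of these terms is bounded by a Proposition~\ref{prop:RIP like}-type quantity of the form $\tfrac{\beta-\alpha}{\beta+\alpha}\|\mX-\mX^\star\|_F\cdot\|\cdot\|_F$. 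Finally, the block-Frobenius identity $\|\mN\|_F^2 = \|\mA\mU^\T\|_F^2 + \|\mA\mV^\T\|_F^2 + \|\mB\mU^\T\|_F^2 + \|\mB\mV^\T\|_F^2$ combined with Cauchy-Schwarz converts every auxiliary Frobenius quantity into a fixed multiple of $\|\mN\|_F$; summing the four contributions and taking the supremum over $\|\mN\|_F\le 1$ yields the claimed constant $2$.

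The main obstacle is the third step: the diagonal blocks of $\mM$ do not natively take the tangent-space form $\mA\mV^\T + \mU\mB^\T$, so converting the RIP-like inequality into bounds on them requires a careful symmetrization that uses both balance conditions simultaneously together with the fact that $\mW$ and $\mW^\star$ share the balanced parameterization. This is where I expect the proof to concentrate most of its technical effort.
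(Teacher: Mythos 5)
Your first two steps coincide with the paper's proof: you use the critical-point identities $\nabla f(\mX)\mV=\mzero$, $\nabla f(\mX)^\T\mU=\mzero$ to kill $\langle\nabla f(\mX),\mA\mV^\T+\mU\mB^\T\rangle$, apply the mean value theorem and Proposition~\ref{prop:RIP like} to get $|\langle\mX-\mX^\star,\mA\mV^\T+\mU\mB^\T\rangle|\leq\frac{\beta-\alpha}{\beta+\alpha}\|\mX-\mX^\star\|_F\|\mA\mV^\T+\mU\mB^\T\|_F$, and test against $\mN=(\mW\mW^\T-\mW^\star\mW^{\star\T})\mP_{\mW}$ (the paper takes $\mZ=(\mW\mW^\T-\mW^\star\mW^{\star\T})(\mW^\T)^\dagger$ so that $\mZ\mW^\T=\mN$, which is your duality step in disguise). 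The gap is your third step. The restricted strong convexity/smoothness of $f$ constrains only quantities built from $\mX=\mU\mV^\T$, i.e., the \emph{off-diagonal} blocks of $\mW\mW^\T-\mW^\star\mW^{\star\T}$; there is no "new test matrix" you can feed into the gradient identity that produces information about $\mU\mU^\T-\mU^\star\mU^{\star\T}$ or $\mV\mV^\T-\mV^\star\mV^{\star\T}$, because $\nabla f$ and $\nabla^2 f$ simply do not see $\mU$ and $\mV$ separately. So the plan to bound each diagonal inner product by a Proposition~\ref{prop:RIP like}-type quantity cannot be executed, and the claimed constant $2$ would not follow from "summing four contributions."

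What the paper does instead is an algebraic sign argument that makes the diagonal blocks disappear rather than bounding them. Since every critical point satisfies \eqref{eq:thm eq 1}, one has $\widehat\mW^\T\mW=\mzero$, hence $\langle\widehat\mW\widehat\mW^\T,\mZ\mW^\T\rangle=0$ may be added for free. Using $2\begin{bmatrix}\mzero & \mX-\mX^\star\\(\mX-\mX^\star)^\T & \mzero\end{bmatrix}=(\mW\mW^\T-\mW^\star\mW^{\star\T})+\widehat\mW^\star\widehat\mW^{\star\T}-\widehat\mW\widehat\mW^\T$, twice the off-diagonal pairing equals $\langle\mW\mW^\T-\mW^\star\mW^{\star\T},\mN\rangle+\langle\widehat\mW^\star\widehat\mW^{\star\T},\mN\rangle$, and the second term is nonnegative because $\widehat\mW^{\star\T}\mW^\star=\mzero$ annihilates the $\mW^\star\mW^{\star\T}$ part of $\mN$ while $\langle\widehat\mW^\star\widehat\mW^{\star\T},\mW\mW^\T\rangle\geq0$ as an inner product of PSD matrices. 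This yields $\|(\mW\mW^\T-\mW^\star\mW^{\star\T})\mP_{\mW}\|_F^2\leq 2\langle\mX-\mX^\star,\mZ_{\mU}\mV^\T+\mU\mZ_{\mV}^\T\rangle$, and the factor $2$ in the lemma comes exactly from this "off-diagonal is half of the whole" inequality, combined with $\|\mZ_{\mU}\mV^\T+\mU\mZ_{\mV}^\T\|_F\leq\|\mZ\mW^\T\|_F$ (which again uses the balance $\mU^\T\mU=\mV^\T\mV$). You correctly located where the technical effort concentrates, but the missing idea is this $\widehat\mW$/$\widehat\mW^\star$ correction and PSD-sign argument, not a second application of the RIP-like estimate.
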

We remark that Lemma~\ref{lem:bound:WW} is a variant of \cite[Lemma 3.2]{park2016non}. While the result there requires the $4r$-RIP condition of the objective function, our result depends on the $(2r,4r)$-restricted strong convexity and smoothness condition. Our result is also slightly tighter than \cite[Lemma 3.2]{park2016non}.

In addition, for any matrices $\mC,\mD \in\R^{n\times r}$, the following result relates the distance between $\mC\mC^\T$ and $ \mD\mD^\T$ to the distance between $\mC$ and $\mD$.

\begin{lem}\label{lem:CC - DD to WDelta} For any matrices $\mC,\mD \in\R^{n\times r}$ with ranks $r_1$ and $r_2$, respectively,  let $\mR = \argmin_{\mR'\in\calO_r}\|\mC - \mD\mR'\|_F$. Then
 \begin{align*}
 &\|\mC\mC^\T - \mD\mD^\T\|_F^2{\large/}\|\mC - \mD\mR\|_F^2 \\
 &\geq \max\left\{2(\sqrt{2}-1)\sigma_r^2(\mD),\min\left\{\sigma_{r_1}^2(\mC), \sigma_{r_2}^2(\mD)\right\}\right\}.
\end{align*}
If $\mC = \mzero$, then we have
 \begin{align*}
 &\left\|\mC\mC^\T - \mD\mD^\T \right\|_F^2\geq \sigma_{r_2}^2(\mD)\left\|\mC - \mD\mR\right\|_F^2.
\end{align*}
 \end{lem}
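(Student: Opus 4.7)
The plan is as follows. Write $\mD' := \mD\mR$ and $\mDelta := \mC - \mD'$. Since $\mR$ minimizes $\|\mC - \mD\mR'\|_F$ over $\mR'\in\calO_r$ (equivalently, maximizes $\trace(\mC^\T\mD\mR')$), if $\mC^\T\mD = \mU_0\mSigma_0\mV_0^\T$ is an SVD then $\mR = \mV_0\mU_0^\T$ and $\mC^\T\mD' = \mU_0\mSigma_0\mU_0^\T \succeq 0$. Combined with the elementary identity
\[
\mC\mC^\T - \mD\mD^\T \;=\; \mC\mC^\T - \mD'\mD'^\T \;=\; \mC\mDelta^\T + \mDelta\,\mD'^\T,
\]
expanding the squared Frobenius norm and noting that the cross term $2\trace(\mDelta^\T\mDelta\,\mD'^\T\mC)$ is the trace of a product of two PSD matrices (hence $\geq 0$) yields the base inequality
\[
\|\mC\mC^\T - \mD\mD^\T\|_F^2 \;\geq\; \trace\!\bigl(\mDelta^\T\mDelta\,(\mC^\T\mC + \mD'^\T\mD')\bigr).
\]

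For the $\min\{\sigma_{r_1}^2(\mC),\sigma_{r_2}^2(\mD)\}$ piece, I would argue that any eigenvector $\vv$ of $\mDelta^\T\mDelta$ with nonzero eigenvalue lies in $\mathrm{row}(\mC)+\mathrm{row}(\mD')$ (since $\mDelta = \mC - \mD'$), and then show that the PSD alignment $\mC^\T\mD' \succeq 0$ forces $\|\mC\vv\|^2 + \|\mD'\vv\|^2 \geq \min\{\sigma_{r_1}^2(\mC),\sigma_{r_2}^2(\mD)\}\|\vv\|^2$ on this subspace, by decomposing $\vv$ into its projection onto $\mathrm{row}(\mC)$ and the orthogonal complement within $\mathrm{row}(\mC)+\mathrm{row}(\mD')$ and bounding each piece. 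Summing over eigenvectors weighted by the eigenvalues of $\mDelta^\T\mDelta$ delivers the bound. For the $2(\sqrt 2-1)\sigma_r^2(\mD)$ piece (trivial when $r_2<r$, so assume $r_2 = r$) I would switch to the expanded identity $\mC\mC^\T - \mD'\mD'^\T = \mD'\mDelta^\T + \mDelta\mD'^\T + \mDelta\mDelta^\T$ and run the classical Tu et al.\ case split on $\|\mDelta\|_F / \sigma_r(\mD)$: in the small regime, the linear contribution $\|\mD'\mDelta^\T\|_F^2 \geq \sigma_r^2(\mD)\|\mDelta\|_F^2$ dominates; in the large regime, the quartic term $\|\mDelta\mDelta^\T\|_F^2$ alone suffices. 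Optimizing the threshold produces the constant $2(\sqrt 2-1)$.

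The $\mC = \mzero$ case is immediate: $\mDelta = -\mD\mR$ gives $\|\mDelta\|_F^2 = \|\mD\|_F^2 = \sum_i\sigma_i^2(\mD)$, while $\|\mC\mC^\T - \mD\mD^\T\|_F^2 = \|\mD\mD^\T\|_F^2 = \sum_i\sigma_i^4(\mD)$, and every nonzero $\sigma_i(\mD)$ satisfies $\sigma_i(\mD) \geq \sigma_{r_2}(\mD)$, so $\sum_i\sigma_i^4(\mD) \geq \sigma_{r_2}^2(\mD)\sum_i\sigma_i^2(\mD)$.

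The main obstacle I expect is the $\min\{\sigma_{r_1}^2(\mC),\sigma_{r_2}^2(\mD)\}$ bound: a naive (non-Procrustes) choice of $\mR$ admits directions in $\mathrm{row}(\mC)+\mathrm{row}(\mD')$ that are simultaneously near-null for both $\mC$ and $\mD'$ (e.g., when the row spaces are misaligned), so the argument must exploit the symmetry and positive semidefiniteness of $\mC^\T\mD'$ to rule out such bad directions specifically for $\mathrm{row}(\mDelta)$. Tracking the eigenstructure of $\mC^\T\mC + \mD'^\T\mD'$ restricted to this subspace is the crux; the $2(\sqrt 2-1)\sigma_r^2(\mD)$ bound is standard by comparison but still requires care with constants.
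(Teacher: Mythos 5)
Your $\mC=\mzero$ case matches the paper's proof verbatim, and your plan for the $2(\sqrt{2}-1)\sigma_r^2(\mD)$ piece is essentially a re-derivation of \cite[Lemma 5.4]{tu2015low}, which is what the paper simply cites (the paper's entire proof for $\mC\neq\mzero$ consists of invoking \cite[Lemma 2]{li2016} for the $\min\{\sigma_{r_1}^2(\mC),\sigma_{r_2}^2(\mD)\}$ bound and \cite[Lemma 5.4]{tu2015low} for the other bound). The genuine problem is your ``base inequality.'' Writing $\mD'=\mD\mR$ and $\mDelta=\mC-\mD'$, the cross term in the expansion of $\|\mC\mDelta^\T+\mDelta\mD'^\T\|_F^2$ is $2\trace(\mC^\T\mDelta\,\mD'^\T\mDelta)=2\trace\bigl((\mS-\mD'^\T\mD')(\mC^\T\mC-\mS)\bigr)$ with $\mS=\mC^\T\mD'\succeq \mzero$; it is \emph{not} equal to $2\trace(\mDelta^\T\mDelta\,\mD'^\T\mC)$, and it can be strictly negative even under the Procrustes alignment. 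Concretely, take $n=2$, $r=1$, $\mC=(1,0)^\T$, $\mD=(0,1)^\T$: then $\mR=1$ is optimal, $\|\mC\mC^\T-\mD\mD^\T\|_F^2=2$, while $\trace\bigl(\mDelta^\T\mDelta(\mC^\T\mC+\mD'^\T\mD')\bigr)=2\cdot 2=4$, so the inequality you start from fails (the lemma itself holds here with equality, $2\geq 1\cdot 2$). Everything you build for the $\min\{\sigma_{r_1}^2(\mC),\sigma_{r_2}^2(\mD)\}$ bound rests on this false step, so that part of the argument collapses.

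Even granting a corrected starting point, the step you yourself flag as the ``main obstacle'' is where all the content of \cite[Lemma 2]{li2016} lives: the claim $\|\mC\vv\|^2+\|\mD'\vv\|^2\geq\min\{\sigma_{r_1}^2(\mC),\sigma_{r_2}^2(\mD)\}\|\vv\|^2$ is false for general $\vv$ in $\mathrm{row}(\mC)+\mathrm{row}(\mD')$ (two nearly parallel row spaces admit directions in their sum that are almost annihilated by both), and your proposal offers no actual mechanism by which $\mC^\T\mD'\succeq\mzero$ excludes such directions from $\mathrm{row}(\mDelta)$ --- it restates the needed lemma as a hope rather than proving it. Similarly, in the Tu et al.\ piece the crux is controlling the cross term $\langle\mD'\mDelta^\T+\mDelta\mD'^\T,\mDelta\mDelta^\T\rangle$ (and the naive bound $\|\mDelta\mDelta^\T\|_F^2\geq\|\mDelta\|_F^4/r$ carries a dimension factor that would destroy the constant), so ``the quartic term alone suffices'' needs more care. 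To make the proof complete you would need to reproduce the arguments of \cite[Lemma 2]{li2016} and \cite[Lemma 5.4]{tu2015low}, or, as the paper does, cite them.
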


We present one more useful result in the following Lemma.
\begin{lem}\label{lem:bound WDelta}\cite[Lemma 3]{li2016}
For any matrices $\mC,\mD \in\R^{n\times r}$, let $\mP_{\mC}$ be the orthogonal projector onto the range of $\mC$. Let $\mR = \argmin_{\mR'\in\calO_r}\|\mC - \mD\mR'\|_F$. Then
 \begin{align*}
 &\|\mC\left(\mC - \mD\mR\right)^\T \|_F^2 \leq \frac{1}{8}\|\mC\mC^\T - \mD\mD^\T \|_F^2 \\
 &+ (3+\frac{1}{2(\sqrt{2} - 1)}) \|(\mC\mC^\T - \mD\mD^\T)\mP_{\mC}\|_F^2.
\end{align*}
 \end{lem}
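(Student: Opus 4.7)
The starting point is the identity
\begin{equation*}
\mE := \mC\mC^\T - \mD\mD^\T = \mC\mDelta^\T + \mDelta\mC^\T - \mDelta\mDelta^\T,
\end{equation*}
which follows from $\mD\mD^\T = (\mC - \mDelta)(\mC - \mDelta)^\T$ since $\mR$ is orthogonal. The plan is to split $\mC\mDelta^\T$ along and orthogonal to $\mathrm{range}(\mC)$ and bound each piece by different algebraic manipulations of this identity.

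First, I would set $\mDelta_1 = \mP_\mC\mDelta$ and $\mDelta_2 = (\mId - \mP_\mC)\mDelta$. Because $\mDelta_1^\T \mDelta_2 = 0$, a direct computation yields the Pythagorean decomposition
\begin{equation*}
\|\mC\mDelta^\T\|_F^2 = \|\mC\mDelta_1^\T\|_F^2 + \|\mC\mDelta_2^\T\|_F^2,
\end{equation*}
reducing the task to bounding each summand.

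For the off-range piece $\|\mC\mDelta_2^\T\|_F$, right-multiplying the key identity by $\mId - \mP_\mC$ and using $\mC^\T(\mId - \mP_\mC) = 0$ gives $\mE(\mId - \mP_\mC) = \mC\mDelta_2^\T - \mDelta\mDelta_2^\T$. Further projecting on the left by $\mP_\mC$ and $\mId - \mP_\mC$ respectively yields $\mP_\mC\mE(\mId - \mP_\mC) = \mC\mDelta_2^\T - \mDelta_1\mDelta_2^\T$ and $(\mId - \mP_\mC)\mE(\mId - \mP_\mC) = -\mDelta_2\mDelta_2^\T$. Symmetry of $\mE$ then gives $\|\mP_\mC\mE(\mId - \mP_\mC)\|_F = \|(\mId - \mP_\mC)\mE\mP_\mC\|_F \leq \|\mE\mP_\mC\|_F$, and Pythagoras gives $\|\mDelta_2\mDelta_2^\T\|_F \leq \|\mE\|_F$. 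Combined with the Cauchy--Schwarz bound $\|\mDelta_1\mDelta_2^\T\|_F^2 \leq \|\mDelta_1\mDelta_1^\T\|_F\|\mDelta_2\mDelta_2^\T\|_F$, these ingredients control $\|\mC\mDelta_2^\T\|_F$ by a linear combination of $\|\mE\mP_\mC\|_F$ and $\|\mE\|_F$.

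For the aligned piece $\|\mC\mDelta_1^\T\|_F$, I would sandwich the key identity by $\mP_\mC$ to obtain
\begin{equation*}
\mP_\mC\mE\mP_\mC = \mC\mDelta_1^\T + \mDelta_1\mC^\T - \mDelta_1\mDelta_1^\T = \mC\mC^\T - (\mC - \mDelta_1)(\mC - \mDelta_1)^\T,
\end{equation*}
a symmetric PSD difference whose two square-root factors $\mC$ and $\mC - \mDelta_1$ both lie in $\mathrm{range}(\mC)$. Restricted to this subspace, the PSD bound with constant $2(\sqrt{2}-1)$ underlying Lemma~\ref{lem:CC - DD to WDelta}, together with the Procrustes symmetry $\mC^\T\mDelta = \mDelta^\T\mC$ inherited from the outer optimality of $\mR$, delivers $\|\mC\mDelta_1^\T\|_F^2 \lesssim \frac{1}{2(\sqrt{2}-1)}\|\mP_\mC\mE\mP_\mC\|_F^2 \leq \frac{1}{2(\sqrt{2}-1)}\|\mE\mP_\mC\|_F^2$, which is exactly where the $\frac{1}{2(\sqrt{2}-1)}$ term in the target constant comes from.

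Finally I would sum the two bounds and tune Young-inequality parameters to split the residual quartic-in-$\mDelta$ cross-terms (in particular $\|\mDelta_1\mDelta_2^\T\|_F$ and $\|\mDelta_1\mDelta_1^\T\|_F$) against $\|\mE\|_F^2$ via the bound $\|\mDelta_2\mDelta_2^\T\|_F \leq \|\mE\|_F$, so that the coefficient on $\|\mE\|_F^2$ lands on exactly $\frac{1}{8}$ and the coefficient on $\|\mE\mP_\mC\|_F^2$ on exactly $3 + \frac{1}{2(\sqrt{2}-1)}$. The main technical obstacle is this constant-tracking: the $\frac{1}{8}$ factor is fragile, since it is the outcome of balancing the cross-terms from both pieces against a single bound $\|\mDelta_2\mDelta_2^\T\|_F \leq \|\mE\|_F$, and a crude AM--GM split immediately yields a strictly larger coefficient on $\|\mE\|_F^2$.
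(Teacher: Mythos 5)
First, note that the paper does not actually prove this lemma; it is imported verbatim as \cite[Lemma 3]{li2016}, so there is no in-paper proof to compare against and your argument has to stand on its own.

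Your outline has the right ingredients --- the identity $\mE=\mC\mDelta^\T+\mDelta\mC^\T-\mDelta\mDelta^\T$, the split $\mDelta=\mP_\mC\mDelta+(\mId-\mP_\mC)\mDelta$, the Pythagorean decomposition of $\|\mC\mDelta^\T\|_F^2$, and the projected identities $\mP_\mC\mE(\mId-\mP_\mC)=\mC\mDelta_2^\T-\mDelta_1\mDelta_2^\T$ and $(\mId-\mP_\mC)\mE(\mId-\mP_\mC)=-\mDelta_2\mDelta_2^\T$ are all correct, as is the Procrustes symmetry of $\mC^\T\mDelta$. But there are two genuine gaps. The first is the aligned piece: the claim $\|\mC\mDelta_1^\T\|_F^2\lesssim \frac{1}{2(\sqrt 2-1)}\|\mP_\mC\mE\mP_\mC\|_F^2$ does not follow from Lemma~\ref{lem:CC - DD to WDelta} as invoked. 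That lemma gives $\|\mA\mA^\T-\mB\mB^\T\|_F^2\geq 2(\sqrt2-1)\sigma_r^2(\mB)\,\|\mA-\mB\mR\|_F^2$, i.e.\ a bound on $\|\mDelta_1\|_F^2$ weighted by $\sigma_r^{-2}(\mP_\mC\mD\mR)$; passing from $\|\mDelta_1\|_F^2$ to $\|\mC\mDelta_1^\T\|_F^2$ costs a further factor $\sigma_1^2(\mC)$, leaving a condition-number ratio that cannot be absorbed into the absolute constant $\frac{1}{2(\sqrt2-1)}$ appearing in the target. A correct treatment of this piece has to work directly with $\|\mC\mDelta_1^\T\|_F$ (e.g.\ via $\|\mA+\mA^\T\|_F^2=2\|\mA\|_F^2+2\operatorname{trace}(\mA^2)$ with $\mA=\mC\mDelta_1^\T$ and $\operatorname{trace}(\mA^2)=\|\mC^\T\mDelta\|_F^2\ge0$), which is a different and more delicate computation than the one you cite.

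The second gap is that the quantity $\|\mDelta_1\mDelta_1^\T\|_F$, which enters through your Cauchy--Schwarz bound $\|\mDelta_1\mDelta_2^\T\|_F^2\le\|\mDelta_1\mDelta_1^\T\|_F\|\mDelta_2\mDelta_2^\T\|_F$ (and again in the aligned piece), is never controlled: unlike $\|\mDelta_2\mDelta_2^\T\|_F\le\|\mE\|_F$, there is no projected identity that isolates $\mDelta_1\mDelta_1^\T$, so bounding it requires either an auxiliary lemma of the form $\|\mDelta\mDelta^\T\|_F^2\le 2\|\mE\|_F^2$ (true for the Procrustes-optimal $\mR$, but not stated or proved anywhere in your argument) or a triangle inequality off $\mP_\mC\mE\mP_\mC$ that reintroduces $\|\mC\mDelta_1^\T\|_F$ circularly. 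Your closing paragraph concedes that the constants $\frac18$ and $3+\frac{1}{2(\sqrt2-1)}$ have not been verified and that a crude split overshoots; since the lemma \emph{is} the pair of constants (they are consumed exactly in step $(ii)$ of \eqref{eq:Hessian is negative}), this is not a cosmetic omission but the substance of the proof.
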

Finally, we provide the gradient and Hessian expressions for $\rho(\mW)$.  The gradient of $\rho(\mW)$ is given by
\begin{align*}
&\nabla_{\mU}\rho(\mU,\mV) = \nabla f(\mX)\mV + \mu\mU(\mU^\T\mU - \mV^\T\mV),\\
&\nabla_{\mV}\rho(\mU,\mV) = \nabla f(\mX)^\T\mU - \mu\mV(\mU^\T\mU - \mV^\T\mV).
\end{align*}
Standard computations give the the Hessian quadrature form  $[\nabla^2 \rho(\mW)](\mDelta,\mDelta)$ for any $\mDelta = \begin{bmatrix} \mDelta_{\mU}\\ \mDelta_{\mV}\end{bmatrix}$ where $\mDelta_{\mU}\in\R^{n\times r},\mDelta_{\mV}\in\R^{m\times r}$:
\begin{align*}
&[\nabla^2\rho(\mW)](\mDelta,\mDelta) \\
&= [\nabla^2f(\mX)](\mDelta_{\mU}\mV^\T+ \mU\mDelta_{\mV}^\T,\mDelta_{\mU}\mV^\T + \mU\mDelta_{\mV}^\T)\\
& \quad + 2\langle \nabla f(\mX),\mDelta_{\mU}\mDelta_{\mV}^\T \rangle+ [\nabla^2g(\mW)](\mDelta,\mDelta),
\end{align*}
where
\begin{align*}
&[\nabla^2g(\mW)](\mDelta,\mDelta)
= \mu\langle \widehat\mW^\T\mW,\widehat\mDelta^\T\mDelta\rangle\\
 &\quad+ \mu\langle \widehat\mW\widehat\mDelta^\T,\mDelta \mW^\T \rangle + \mu\langle \widehat\mW\widehat\mW^\T,\mDelta \mDelta^\T\rangle.
\end{align*}

\subsection{The Formal Proof}
\begin{proof}[Proof of Theorem~\ref{thm:stricit saddle}]
Any critical point $\mW$ of $\rho
(\mW)$ satisfies $\nabla \rho(\mW) = \mzero$, i.e.,
\begin{align}
&\nabla f(\mX)\mV + \mu\mU\left(\mU^\T\mU - \mV^\T\mV\right) = \mzero,\label{eq:proof thm cirtical 1}\\
& \nabla f(\mX)^\T\mU - \mu\mV\left(\mU^\T\mU - \mV^\T\mV\right) = \mzero.\label{eq:proof thm cirtical 2}
\end{align}
By~\eqref{eq:proof thm cirtical 2}, we obtain
\[
\mU^\T \nabla f(\mX) = \mu\left(\mU^\T\mU - \mV^\T\mV\right)\mV^\T.
\]
Multiplying \eqref{eq:proof thm cirtical 1} by $\mU^\T$ and plugging in the expression for $\mU^\T \nabla f(\mX)$ from the above equation $\mV^\T$ gives
\begin{align*}
(\mU^\T\mU - \mV^\T\mV)\mV^\T\mV + \mU^\T\mU(\mU^\T\mU - \mV^\T\mV) = \mzero,
\end{align*}
which further implies
\begin{align*}
\mU^\T\mU\mU^\T\mU=\mV^\T\mV\mV^\T\mV.
\end{align*}
Note that $\mU^\T\mU$ and $\mV^\T\mV$ are the principal square roots (i.e., PSD square roots) of $\mU^\T\mU\mU^\T\mU$ and $\mV^\T\mV\mV^\T\mV$, respectively. Utilizing the result that a PSD matrix has a unique principal square root~\cite{johnson2001uniqueness}, we obtain
\begin{align}
\mU^\T\mU = \mV^\T\mV.
\label{eq:critical point balanced}\end{align}
Thus, we can simplify \eqref{eq:proof thm cirtical 1} and \eqref{eq:proof thm cirtical 2} by
\begin{align}
&\nabla_{\mU}\rho(\mU,\mV) = \nabla f(\mX)\mV = \mzero,\label{eq:proof thm cirtical 3}\\
&\nabla_{\mV}\rho(\mU,\mV) = \nabla f(\mX)^\T\mU  = \mzero.\label{eq:proof thm cirtical 4}
\end{align}
Now we turn to prove the strict saddle property and that there are no spurious local minima.

First, note that as guaranteed by Proposition~\ref{prop:RIP to unique}, $\mX^\star$ is the unique $n\times m$ matrix with rank at most $r$. Also the gradient of $f(\mX)$ vanishes at $\mX^\star$ since~\eqref{eq:original problem} is an unconstraint optimization problem. Denote the set of critical points of $\rho(\mW)$ by
\begin{align*}
\calC:=\left\{\mW\in\R^{(n+m)\times r}:\nabla \rho(\mW)=\mzero\right\}.
\end{align*}
We separate $\calC$ into two subsets:
\begin{align*}
\calC_1: &= \calC\cap \left\{\mW\in\R^{(n+m)\times r}: \mU\mV^\T= \mX^\star\right\},\\
\calC_2: &= \calC\cap \left\{\mW\in\R^{(n+m)\times r}: \mU\mV^\T\neq \mX^\star\right\},
\end{align*}
satisfying $\calC = \calC_1 \cup \calC_2$. Since any critical point $\mW$ satisfies \eqref{eq:critical point balanced}, $g(\mW)$ achieves its global minimum at $\mW$. Also $f(\mX)$ achieves its global minimum at $\mX^\star$. We conclude that $\mW$ is the globally optimal solution of $\rho$ for any $\mW\in\calC_1$. If we show that any $\mW\in\calC_2$ is a strict saddle, then we prove that there are no spurious local minima as well as the strict saddle property. Thus, the remaining part is to show that $\calC_2$ is the set of strict saddles.

To show that $\calC_2$ is the set of strict saddles, it is sufficient to find a direction $\mDelta$ along which the Hessian has a strictly negative curvature for each of these points. 
We construct $\mDelta = \mW - \mW^\star\mR$, the difference from $\mW$ to its nearest global factor $\mW^\star$, where
\begin{align*}
\mR = \argmin_{\mR'\in\calO_r}\left\|\mW - \mW^\star\mR'\right\|_F.
\end{align*}
Such $\mDelta$ satisfies $\mDelta \neq \mzero$ since $\mX \neq \mX^\star$ implying $\mW\mW^\T\neq \mW^{\star}\mW^{\star\T}$.
Then we evaluate the Hessian bilinear form along the direction $\mDelta$:
\begin{equation}\begin{split}
&[\nabla^2\rho(\mW)](\mDelta,\mDelta) = 2\underbrace{\langle \nabla f(\mX),\mDelta_{\mU}\mDelta_{\mV}^\T \rangle}_{\Pi_1} \\
&+\underbrace{[\nabla^2f(\mX)](\mDelta_{\mU}\mV^\T+\mU\mDelta_{\mV}^\T,\mDelta_{\mU}\mV^\T+\mU\mDelta_{\mV}^\T)}_{\Pi_2}\\
&+ \mu \underbrace{\langle \widehat\mW\widehat\mDelta^\T,\mDelta \mW^\T  \rangle}_{\Pi_3} +\mu \underbrace{\langle \widehat\mW\widehat\mW^\T,\mDelta \mDelta^\T\rangle}_{\Pi_4}.
\end{split}
\label{eq:Hessian for Delta}
\end{equation}
The following result (which is proved in Appendix~\ref{sec:proof eq bound Pi}) states that $\Pi_1$ is strictly negative, while the remaining terms are relatively small, though they may be nonnegative:
\begin{equation}\begin{split}
\Pi_1 &\leq- \alpha \left\| \mX -\mX^\star \right\|_F^2,\quad \Pi_2 \leq  \beta \|\mW\mDelta^\T\|_F^2,\\
\Pi_3 &\leq  \|\mW\mDelta^\T\|_F^2,\quad \Pi_4 \leq 2 \left\| \mX -\mX^\star \right\|_F^2.\\
\end{split}
\label{eq:bound Pi}\end{equation}
Now, substituting \eqref{eq:bound Pi} into \eqref{eq:Hessian for Delta} gives
\begin{equation}\begin{split}
&[\nabla^2\rho(\mW)](\mDelta,\mDelta) \\
&= 2\Pi_1+ \Pi_2  + \mu \Pi_3 +\mu \Pi_4 \\
& \leq - 2\alpha \|\mX - \mX^\star\|_F^2 + (\beta+\mu) \cdot\|\mW\mDelta^\T\|_F^2 \\
&\quad + 2\mu\left\|\mX - \mX^\star\right\|_F^2 \\
&\stackrel{(i)}{\leq}(-2\alpha + 2\mu) \left\|\mX -\mX^\star\right\|_F^2\\
&\quad + (\beta+\mu)(\frac{1}{2}+ (12+\frac{2}{\sqrt{2} - 1})(\frac{\beta - \alpha}{\beta + \alpha})^2) \left\|\mX -\mX^\star\right\|_F^2\\
&\stackrel{(ii)}{\leq} -0.2\alpha \left\|\mX -\mX^\star\right\|_F^2,
\end{split}
\label{eq:Hessian is negative}\end{equation}
where $(i)$ utilizes Lemmas \ref{lem:bound:WW} and \ref{lem:bound WDelta}, $(ii)$ utilizes the following inequality (which is proved in Appendix~\ref{sec:prf eq bound 5})
\begin{align}
\left\|\mW\mW^\T - \mW^\star\mW^\star\right\|_F^2 \leq 4\left\|\mX - \mX^\star\right\|_F^2,
\label{eq:bound 5}\end{align}
and $(ii)$ holds because $\frac{\beta}{\alpha} \leq 1.5$ and $\mu \leq \frac{1}{16}\alpha$. Thus, if $\mX\neq \mX^\star$, $\left[\nabla^2\rho(\mX)\right](\mDelta,\mDelta)$ is always negative. This implies that $\mW$ is a strict saddle.

To complete the proof, we utilize Lemma~\ref{lem:CC - DD to WDelta}  to further bound the last term in~\eqref{eq:Hessian is negative}:
 \begin{align*}
&[\nabla^2\rho(\mW)](\mDelta,\mDelta)  \leq -0.05 \alpha \|\mW\mW^\T -\mW^\star{\mW^\star}^\T\|_F^2\\
&\leq-0.05\alpha\|\mDelta\|_F^2\left\{\begin{matrix} 2(\sqrt{2}-1)\sigma_{r}^2(\mW^\star), & r = r^\star, \\ \min\left\{\sigma_{r^c}^2(\mW),\sigma_{r^\star}^2(\mW^\star)\right\}, & r>r^\star,\\
 \sigma_{r^\star}^2(\mW^\star), & r_c = 0,
\end{matrix}\right.
\end{align*}
where $r^c$ is the rank of $\mW$, the fist inequality utilizes~\eqref{eq:bound 5}, and
the second inequality follows from Lemma \ref{lem:CC - DD to WDelta}. We complete the proof of Theorem~\ref{thm:stricit saddle} by noting that $\sigma_{\ell}^2(\mW^\star) = 2\sigma_\ell(\mX^\star)$ for all $\ell\in\{1,\ldots, r^\star\}$ since
\[
\mW^\star = \begin{bmatrix} \mQ_{\mU^\star}{\mSigma^\star}^{1/2} \\ \mQ_{\mV^\star}{\mSigma^\star}^{1/2} \end{bmatrix} = \begin{bmatrix} \mQ_{\mU^\star}/\sqrt{2} \\ \mQ_{\mV^\star}/\sqrt{2} \end{bmatrix} \left(\sqrt{2}{\mSigma^\star}^{1/2}\right)\mId
\]
 is an SVD of $\mW^\star$, where we recall that $\mX^\star = \mQ_{\mU^\star}\mSigma^\star\mQ_{\mV^\star}^\T$ is an SVD of $\mX^\star$.
 \end{proof}
\begin{remark}
From \eqref{eq:Hessian is negative}, we observe that a smaller $\mu$ yields a more negative bound on $\left[\nabla^2\rho(\mX)\right](\mDelta,\mDelta)$. This can be explained intuitively as follows. First note that any critical point $\mW$ satisfies \eqref{eq:critical point balanced} provided $\mu>0$, no matter how large or small $\mu$ is. The Hessian information about $g(\mW)$ is represented by the terms $\Pi_3$ and $\Pi_4$. We have
\begin{align*}
\Pi_3 + \Pi_4 &= \left\langle \widehat\mW\widehat\mDelta^\T,\mDelta \mW^\T  \right\rangle + \left\langle \widehat\mW\widehat\mW^\T,\mDelta \mDelta^\T\right\rangle\\
& = \left\langle \widehat\mW^\T\mDelta,\mDelta^\T\widehat\mW\right\rangle + \left\langle \widehat\mW^\T\mDelta,\widehat\mW^\T\mDelta\right\rangle\\
& = \left\langle \widehat\mW^\T\mDelta,\widehat\mW^\T\mDelta + \mDelta^\T\widehat\mW\right\rangle\\
& \geq 0,
\end{align*}
where the last line holds since for any $r\times r$ matrix $\mA$,
\begin{align*}
&\left\langle \mA, \mA + \mA^\T\right\rangle\\ &= \frac{1}{2}\left\langle \mA + \mA^\T, \mA + \mA^\T\right\rangle +  \frac{1}{2}\left\langle \mA - \mA^\T, \mA + \mA^\T\right\rangle \\ &= \frac{1}{2}\left\| \mA + \mA^\T\right\|_F^2\geq 0.
\end{align*}
Thus the Hessian of $\rho$ evaluated at any critical point $\mW$ is a PSD matrix\footnote{This can also be observed since any critical point $\mW$ is a global minimum point of $\rho(\mW)$, which directly indicates that $\nabla^2\rho(\mW)\succeq \mzero$.} instead of having a negative eigenvalue. In low-rank, PSD matrix optimization problems, the corresponding objective function (without any regularizer {such as} $g(\mW)$) is proved to have the strict saddle property~\cite{bhojanapalli2016lowrankrecoveryl,li2016}. Therefore, $h(\mW)$ is also expected to have the strict saddle property, and so is $\rho(\mW)$ when $\mu$ is small, i.e., the Hessian of $g(\mW)$ has little influence on the Hessian of $\rho(\mW)$ when $\mu$ is small. Our results also indicate that when the restricted strict convexity constant $\alpha$ is not provided a priori, we can always choose a small $\mu$ to ensure the strict saddle property of $\rho(\mW)$ is met, and hence we are guaranteed the global convergence of a number of local search algorithms applied to~\eqref{eq:factored problem}.
\end{remark}

\section{Experiments}\label{sec:experiments}
In this section, we present a set of experiments on matrix sensing, matrix completion, and 1-bit matrix completion to demonstrate the performance of iterative algorithms for low-rank matrix optimization. Unless noted otherwise, we denote the matrix factorization approach by NVX and
use the minFunc package\footnote{Software available at\\ https://www.cs.ubc.ca/$\sim$schmidtm/Software/minFunc.html} to perform the local search algorithms for the factored problem.

\subsection{Matrix Sensing}\label{sec:exp matrix sensing}
We first present some experiments to illustrate the performance of local search algorithms for the matrix sensing problem with the factorization approach~\eqref{eq:matrix sensing}. In these experiments, we set $n = 50$, $m =50$ and vary the rank $r$ from $1$ to $19$. We generate a rank-$r$ $n\times m$ random matrix $\mX^\star$ by setting $\mX^\star = \widetilde\mU\widetilde\mV^\T$ where $\widetilde\mU$ and $\widetilde\mV$ are respectively $n\times r$ and $m\times r$ matrixes of normally distributed random numbers. We then obtain $p$ random measurements $\vy = \calA(\mX^\star)$ with
\[
y_i = \left\langle \mX^\star, \mY_i \right\rangle,
\]
where the entries of each $n\times m$ matrix $\mY_i$ are independent and identically distributed (i.i.d.) normal random variables with zero mean and variance $\frac{1}{p}$ for $i\in\{1,2,\ldots p\}$. For each pair of $r$ and the number of measurements, 10 Monte Carlo trials are carried out and for each trial, {and} we claim matrix recovery to be successful if the relative reconstruction error satisfies
\[
\frac{\|\mX^\star - \widehat \mX\|_F}{\|\mX^\star\|_F}\leq 10^{-4},
\]
where we denote by $\widehat \mX$ the reconstructed matrix. Figure~\ref{fig:Phase} displays the phase transition for factorized gradient descent starting from a random initialization, the singular value projection (SVP) method proposed in~\cite{jain2010guaranteed} which requires a SVD in each iteration, and the convex approach which solves
\begin{equation}\begin{split}
&\minimize_{\mX} \|\mX\|_* \\
& \st \vy = \calA(\mX).
\end{split}\label{eq:nuclear norm for sensing}\end{equation}
{We see that there are only negligible differences between the different approaches for matrix sensing; these approaches also have very similar performance guarantees when the Gaussian sensing operator $\calA$ satisfies the RIP~\cite{candes2011tight}.} We note that with or without the regularizer $g$ as defined in \eqref{eq:define g}, local search algorithms have similar performance with random initialization. Hence, {throughout all of} the experiments, we simply discard the regularizer $g$, but we stress that identical performance is observed if we have this regularizer $g$.

{The previous experiments suppose that $r$ is known for  SVP and the matrix factorization approach.  We note, however, that our result in Theorem~\ref{thm:stricit saddle} also covers the over-parameterization case where $r>r^\star$. To illustrate the possible influence of over-parameterization, we generate a rank-$r^\star$ random matrix $\mX^\star\in\R^{n\times m}$ with $r^\star = 4$ and $n = m = 50$ and obtain $p = 4Rn$ random measurements (so that the measurement operator $\calA$ satisfies the RIP of rank $R$), where $R = 7$. We then solve the matrix factorization problem\footnote{To avoid tuning the parameters (such as step-size) for different $r$, we use the minFunc package with the default setting, which solves the factored problem by the ``LBFGS" algorithm~\cite{liu1989limited}.} with $r = 4,5,6,7$ and display the corresponding convergence results in Figure~\ref{fig:over param}. As can been seen, the matrix factorization approach converges to the target matrix $\mX^\star$ in both the exact-parameterization and over-parameterization cases. However, we also observe that it converges slower in the over-parameterization case (i.e., $r>r^\star$) than in the exact-parameterization case (i.e., $r = r^\star$). 
}

\begin{figure}[htb!]
\begin{minipage}{0.52\linewidth}
\centerline{
\includegraphics[width=2in]{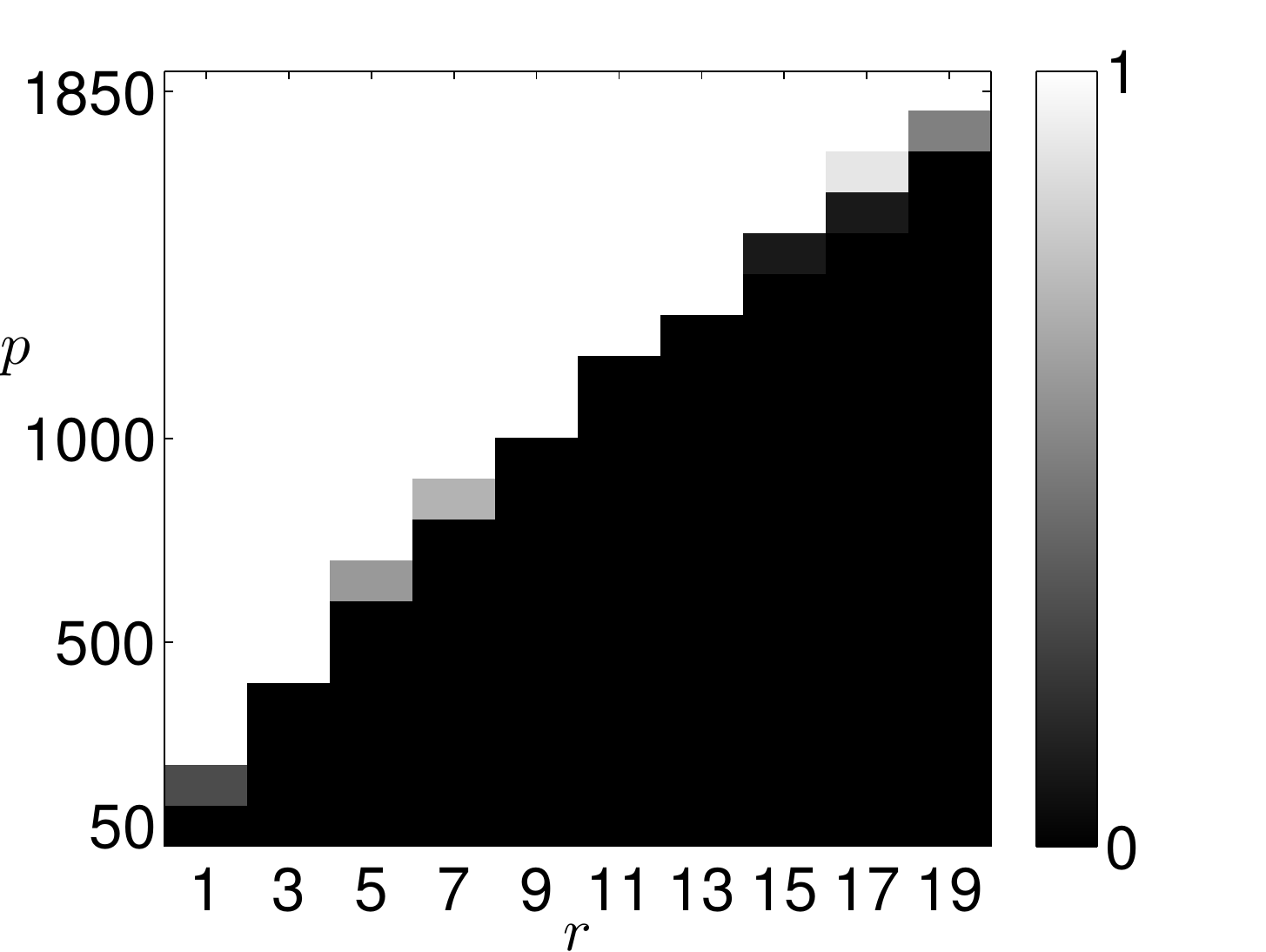}}
\centering{(a)}
\end{minipage}
\hfill
\begin{minipage}{0.44\linewidth}
\centerline{
\includegraphics[width=2in]{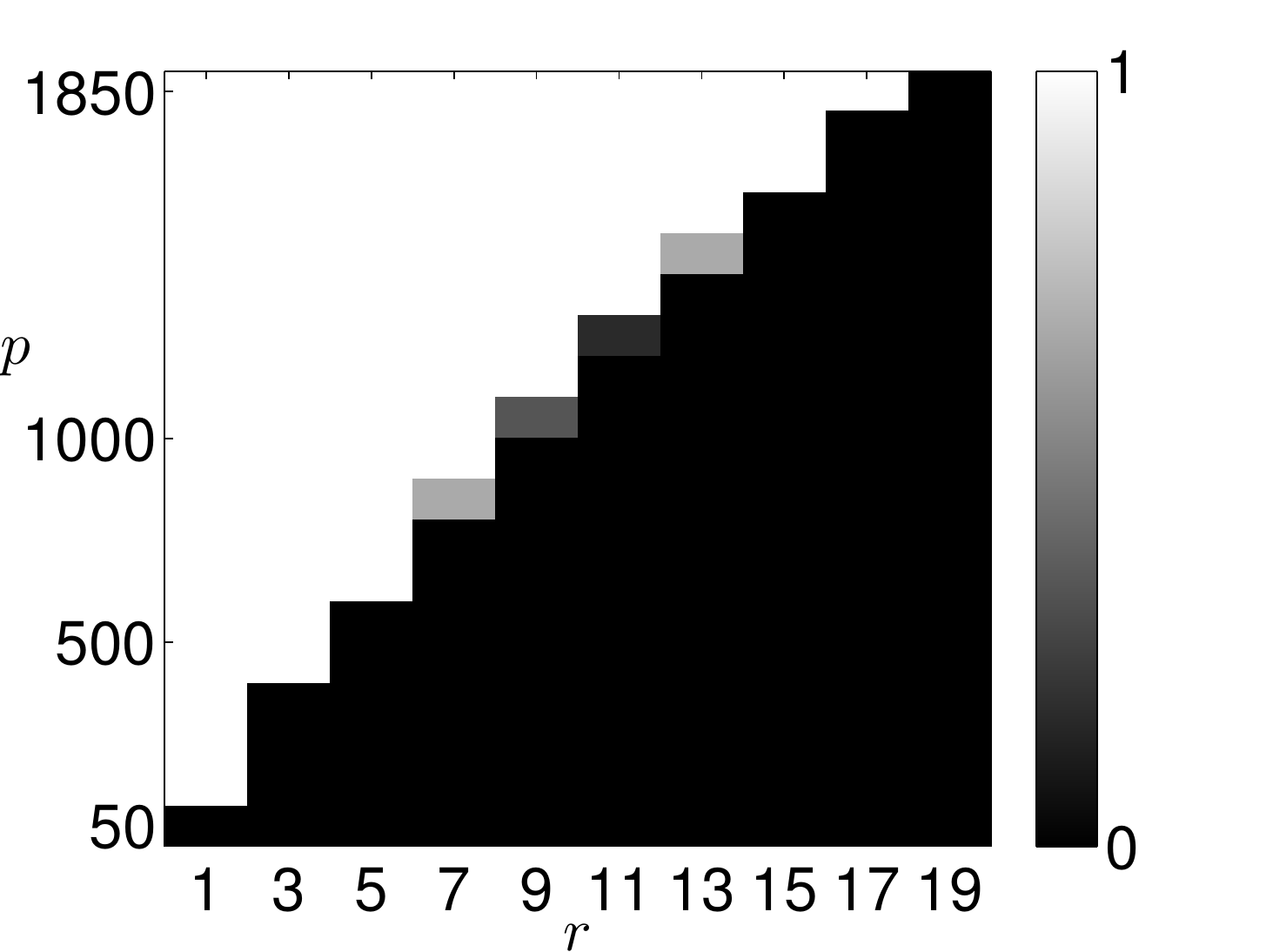}}
\centering{(b)}
\end{minipage}
\vfill
\centerline{
\includegraphics[width=2in]{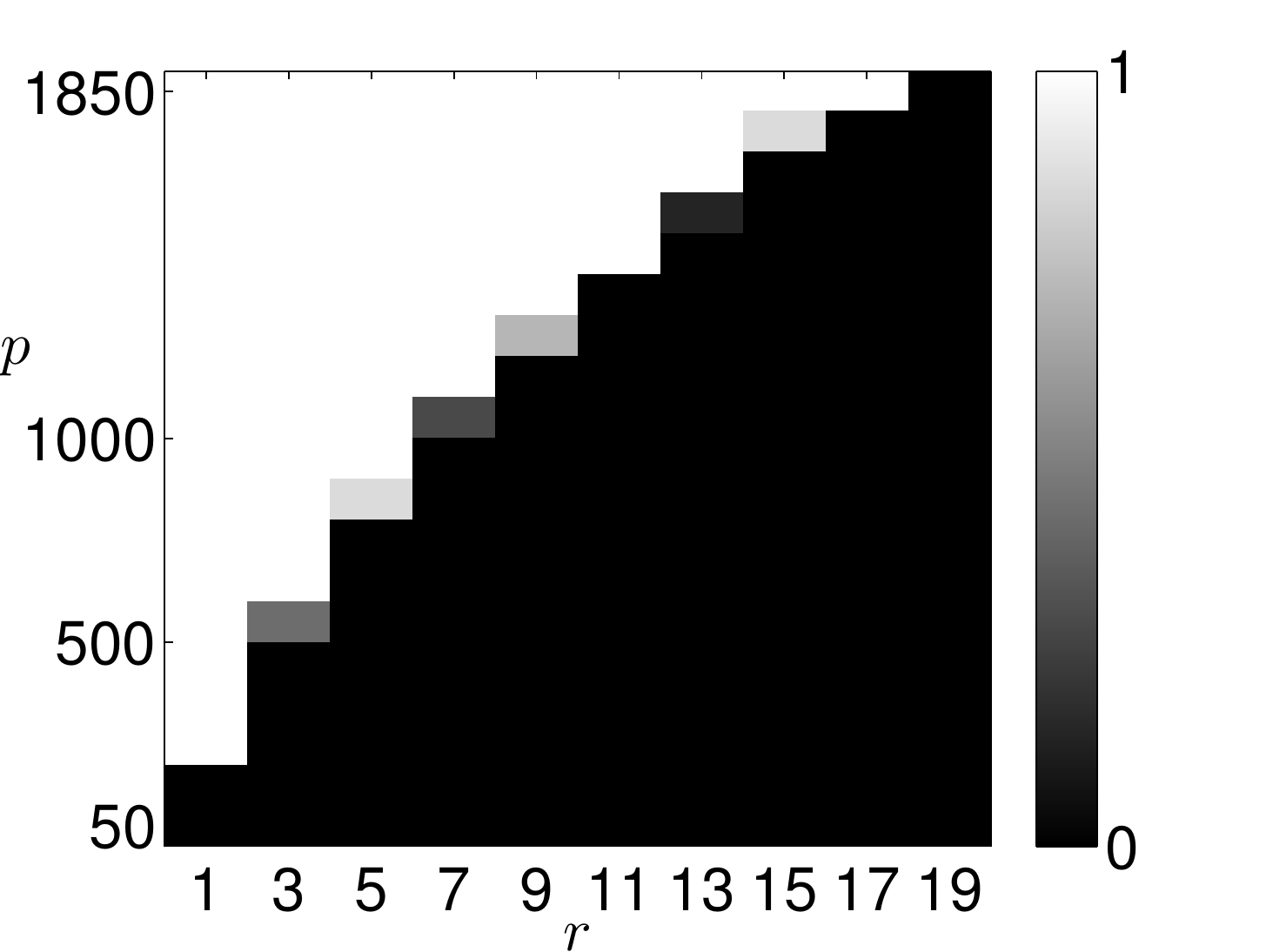}}
\centering{(c)}
\caption{\label{fig:Phase} Rate of success for matrix sensing by (a) solving the factorized problem~\eqref{eq:matrix sensing} with gradient descent; (b) SVP~\cite{jain2010guaranteed}; (c) solving the convex problem \eqref{eq:nuclear norm for sensing}.}
\end{figure}

\begin{figure}[htb!]
\begin{minipage}{0.48\linewidth}
\centerline{
\includegraphics[width=1.9in]{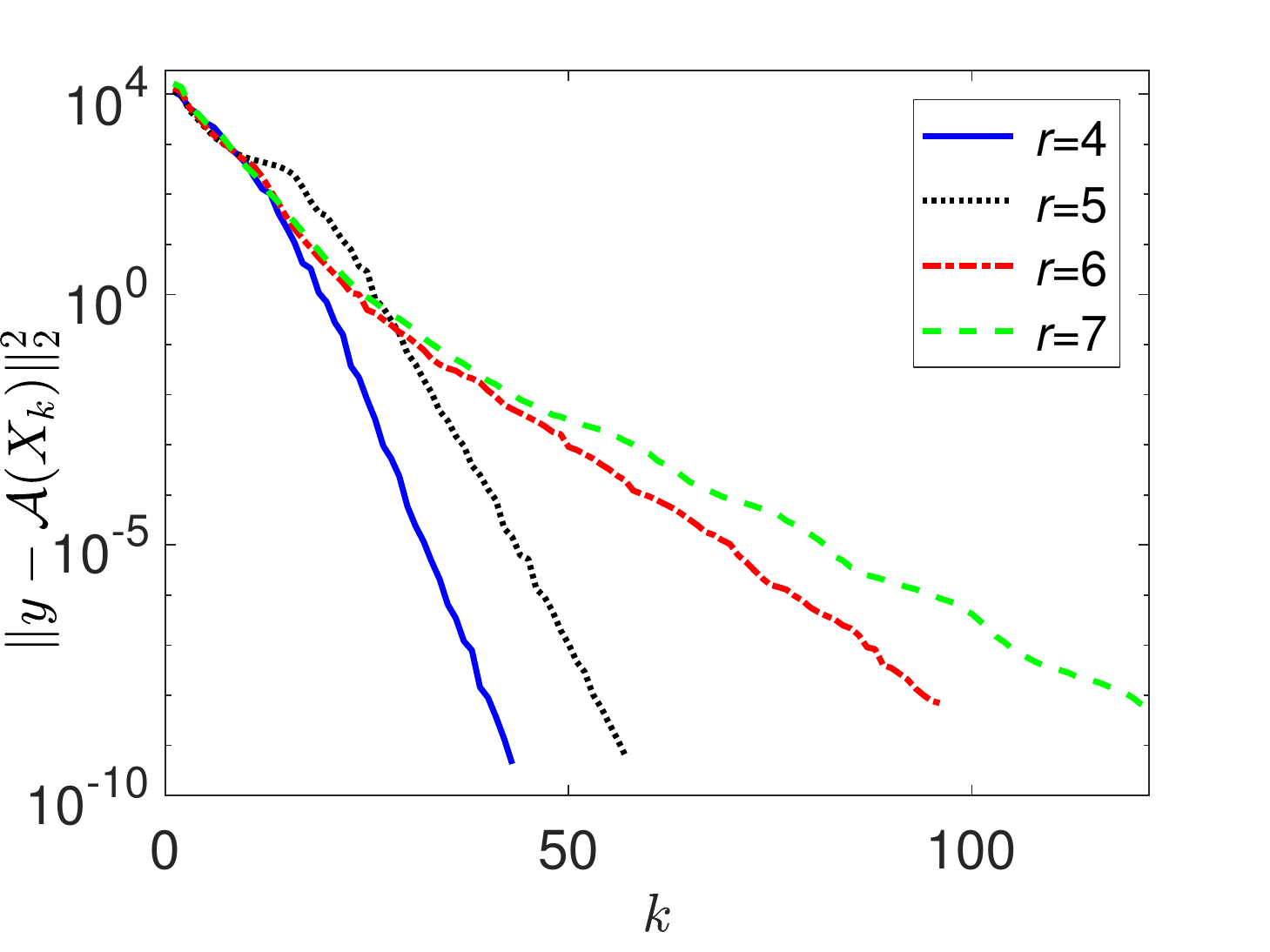}}
\centering{(a)}
\end{minipage}
\hfill
\begin{minipage}{0.48\linewidth}
\centerline{
\includegraphics[width=1.9in]{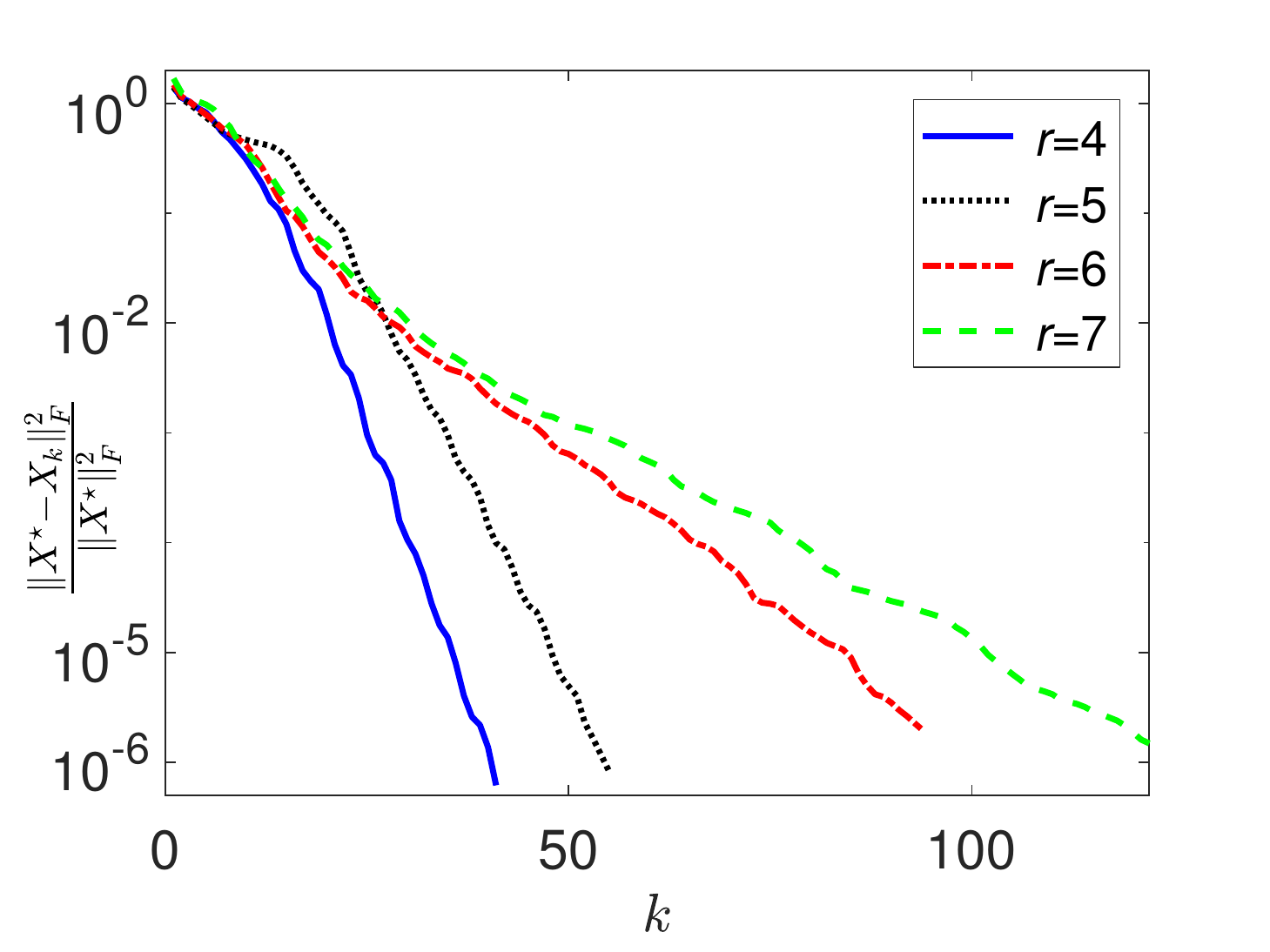}}
\centering{(b)}
\end{minipage}
\caption{\label{fig:over param}{ The performance in terms of (a) objective value and (b) the relative Frobenius norm of the error versus the iteration $k$ for the matrix factorization approach solving matrix sensing  with $r^\star = 4, n = m = 50, p = 4Rn, R = 7$ and $r$ varying from $r^\star$ to $R$.}}
\end{figure}

\subsection{Matrix Completion}
We compare the performance of the matrix factorization approach with SVP~\cite{jain2010guaranteed}, the convex approach, and singular value thresholding\footnote{Software available at http://svt.stanford.edu/} (SVT)~\cite{cai2010singular} for matrix completion where we want to recover a low-rank matrix $\mX^\star$ from incomplete measurements $\{X^\star_{ij}\}_{(i,j)\in\Omega}$, where $\Omega \subset [n]\times [m]$. Let $\calP_\Omega$ denote the projection onto the index set $\Omega$. The convex approach (denoted by CVX) attempts to use the nuclear norm as a convex relaxation of the rankness and solves
\begin{equation}\begin{split}
&\minimize_{\mX} \|\mX\|_* \\
& \st  \calP_\Omega(\mX) = \calP_\Omega(\mX^\star).
\end{split}\label{eq:nuclear norm for mc}\end{equation}
To make the recovery of $\mX^\star$ well-posed, we require $\mX^\star$ to be incoherent such that the information in $\mX$ is not concentrated in a small number of entries~\cite{candes2009exact}. A matrix $\mX\in\R^{n\times m}$ with singular value decomposition $\mX = \mL\mSigma \mQ^\T$ is $u$-incoherent if~\cite[Definition 2.1]{jain2010guaranteed}
\[
\max_{ij}|L_{ij}|\leq \sqrt\frac{u}{n},\ \max_{ij}|Q_{ij}|\leq \sqrt\frac{u}{m}.
\]
Though $\calP_\Omega$ does not satisfy the $r$-RIP \eqref{eq:RIP} for all low-rank matrices $\mX$, it satisfies the RIP when restricted to low-rank incoherent matrices.
\begin{thm} \cite[Theorem 4.2]{jain2010guaranteed} Without loss of generality, assume $n\geq m$. There exists a constant $C\geq 0$ such that for $\Omega\in[n]\times [m]$ chosen according to the Bernouli model with density greater than $C u^2 r^2 \log n /\delta^2 m$, with probability at least $1 - e^{-n\log n}$, the RIP holds for all $\mu$-incoherent matrices $\mX$ of rank at most $r$.
\label{thm:RIP MC}\end{thm}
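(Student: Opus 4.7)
The plan is to establish that a rescaled sampling operator $\frac{1}{\sqrt{p}}\calP_\Omega$, with $p$ denoting the Bernoulli sampling probability, acts as an approximate isometry on every $u$-incoherent matrix of rank at most $r$. The argument has three ingredients: a pointwise concentration estimate for a fixed incoherent rank-$r$ matrix via Bernstein's inequality, a discretization of the set of incoherent rank-$r$ unit-Frobenius matrices via an $\epsilon$-net, and a union bound plus a short perturbation step that promotes the pointwise bound to a uniform one.

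For the pointwise estimate, I would fix a $u$-incoherent rank-$r$ matrix $\mX = \mL\mSigma\mQ^\T$ with $\|\mX\|_F = 1$. The entry bounds $|L_{i\ell}|\leq \sqrt{u/n}$ and $|Q_{j\ell}|\leq \sqrt{u/m}$, together with $\|\mSigma\|_{\mathrm{op}}\leq \|\mX\|_F$ and Cauchy--Schwarz, give $X_{ij}^2\leq u^2 r/(nm)$. Writing $\|\calP_\Omega(\mX)\|_F^2 = \sum_{i,j}\delta_{ij}X_{ij}^2$ for independent Bernoulli$(p)$ variables $\delta_{ij}$, the summands are uniformly bounded by $u^2 r/(nm)$ and the total variance is at most $p\sum_{ij}X_{ij}^4\leq p\,u^2 r/(nm)$. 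Bernstein's inequality then yields $\P{\,|\tfrac{1}{p}\|\calP_\Omega(\mX)\|_F^2-1|\geq \delta\,} \leq 2\exp(-c\,\delta^2 p\,m/(u^2 r))$ after using $n\geq m$, for a universal constant $c>0$. Taking $p\geq C u^2 r^2 \log n/(\delta^2 m)$ with $C$ large enough reduces this probability to $\exp(-c'\,n r \log n)$, much smaller than the target failure probability.

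To globalize the estimate, I would construct an $\epsilon$-net of the unit sphere in the space of rank-$r$ matrices by separately covering the $\mL$, $\mSigma$, and $\mQ$ factors; standard volume estimates yield a net of cardinality at most $(c_0/\epsilon)^{(n+m+1)r}$. A union bound over the net with $\epsilon$ polynomially small in $n$ gives the approximate isometry at every net element with probability at least $1-e^{-n\log n}$. A routine Lipschitz-in-Frobenius argument, using that $\tfrac{1}{p}\calP_\Omega^*\calP_\Omega - \mId$ is deterministically bounded in operator norm restricted to rank-$r$ matrices, extends the estimate from the net to every incoherent rank-$r$ matrix of unit Frobenius norm, and homogeneity completes the claim.

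The main obstacle is that an SVD-based $\epsilon$-net of rank-$r$ matrices does not automatically consist of $u$-incoherent matrices, yet the pointwise Bernstein bound crucially used the incoherence-driven $\ell_\infty$ bound on entries. I would handle this either by thickening the net, showing that each $u$-incoherent matrix has a net approximation that is $(u(1+o(1)))$-incoherent, so only the incoherence parameter is slightly enlarged, or by localizing: apply matrix Bernstein to $\frac{1}{p}\calP_\Omega-\mId$ restricted to the tangent space at a fixed incoherent rank-$r$ matrix and then cover the Grassmannian of such tangent spaces. Either route preserves the sample-complexity scaling $p\gtrsim u^2 r^2 \log n/(\delta^2 m)$ and yields the claimed probability $1-e^{-n\log n}$.
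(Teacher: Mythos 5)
The paper does not prove this statement: Theorem~\ref{thm:RIP MC} is quoted verbatim from \cite[Theorem 4.2]{jain2010guaranteed} and used as an imported ingredient, so there is no in-paper argument to compare yours against. Judged on its own terms, your plan---a pointwise Bernstein bound for a fixed incoherent rank-$r$ matrix, an $\epsilon$-net, a union bound, and a crude Lipschitz extension off the net---is the standard route and is essentially the one taken in the cited source.

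Two comments. First, the obstacle you flag, that an SVD-based net need not consist of incoherent matrices, is the real crux, and your first fix is the right one: build the net \emph{inside} the set of $u$-incoherent, unit-Frobenius-norm matrices of rank at most $r$. An internal $\epsilon$-net of a subset has cardinality at most the $\epsilon/2$-covering number of the ambient rank-$r$ unit ball, which is $(c_0/\epsilon)^{(n+m+1)r}$, so every net point is itself $u$-incoherent and the pointwise bound applies to it; the tangent-space/Grassmannian alternative is unnecessary. Second, there is an arithmetic slip in your Bernstein step that, taken literally, breaks the union bound. Both the uniform bound on the summands and the variance proxy carry the factor $u^2 r/(nm)$, so the correct exponent is of order $-c\,\delta^2 p\,nm/(u^2 r)$; replacing $nm$ by $m$ ``using $n\geq m$'' discards the factor of $n$ you need. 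With the weakened exponent and $p \geq C u^2 r^2 \log n/(\delta^2 m)$ the pointwise failure probability is only $\exp(-cCr\log n)$, which cannot absorb a union bound over a net of cardinality $\exp\left(\Theta\left((n+m)r\log n\right)\right)$; keeping the full $nm$ gives $\exp(-cCnr\log n)$, the union bound closes for $C$ large, and the crude extension off the net (Lipschitz constant $O(1/p)$, absorbed by taking $\epsilon$ polynomially small, which costs only a $\log n$ factor in the exponent of the net size) then yields the claimed uniform RIP with probability at least $1-e^{-n\log n}$, exactly as your final sentence asserts.
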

Thus, {if local search algorithms (such as gradient descent) start with a random initialization and the iterates remain incoherent,} then Theorem~\ref{thm:stricit saddle} guarantees the global convergence of the matrix factorization approach with these algorithms. We note that this hypothesis is also required for SVP~\cite{jain2010guaranteed}. Though we can add a regularizer for incoherence as in \cite{ge2016matrix}, empirical evidence supports this hypothesis that the iterates in gradient descent are incoherent.

In the first set of experiments, we set $n=m =100$ and vary the rank $r$ from $1$ to $30$. Similar to the setup for matrix sensing in Section~\ref{sec:exp matrix sensing}, we generate a rank-$r$ random matrix and randomly obtain $p$ entries, i.e., $|\Omega| = p$. Figure~\ref{fig:Phase MC} displays the phase transition for gradient descent with a random initialization, SVP~\cite{jain2010guaranteed}, singular value thresholding (SVT)~\cite{cai2010singular}, and the convex approach. As can been seen, the matrix factorization approach has similar phase transition to SVP, and is slightly better than SVT and the convex approach in terms of the number of measurements needed for successful recovery.

In the second set of experiments, we set $r = 5$ and $p = 3r(2n - r)$ (3 times the number of degrees of freedom within a rank-$r$ $n\times n$ matrix), and vary $n$ from $40$ to $5120$. We compare the time needed for the four approaches in Figure~\ref{fig:time}; our matrix factorization approach is much faster than the other methods. The time savings for the matrix factorization approach comes from  avoiding performing the SVD, which is needed both for SVT and SVP in each iteration. We also observe that convex approach has {the} highest computational complexity and is not scalable (which is the reason that we only present its time for $n$ up to $640$).

\begin{figure}[htb!]
\begin{minipage}{0.48\linewidth}
\centerline{
\includegraphics[width=2in]{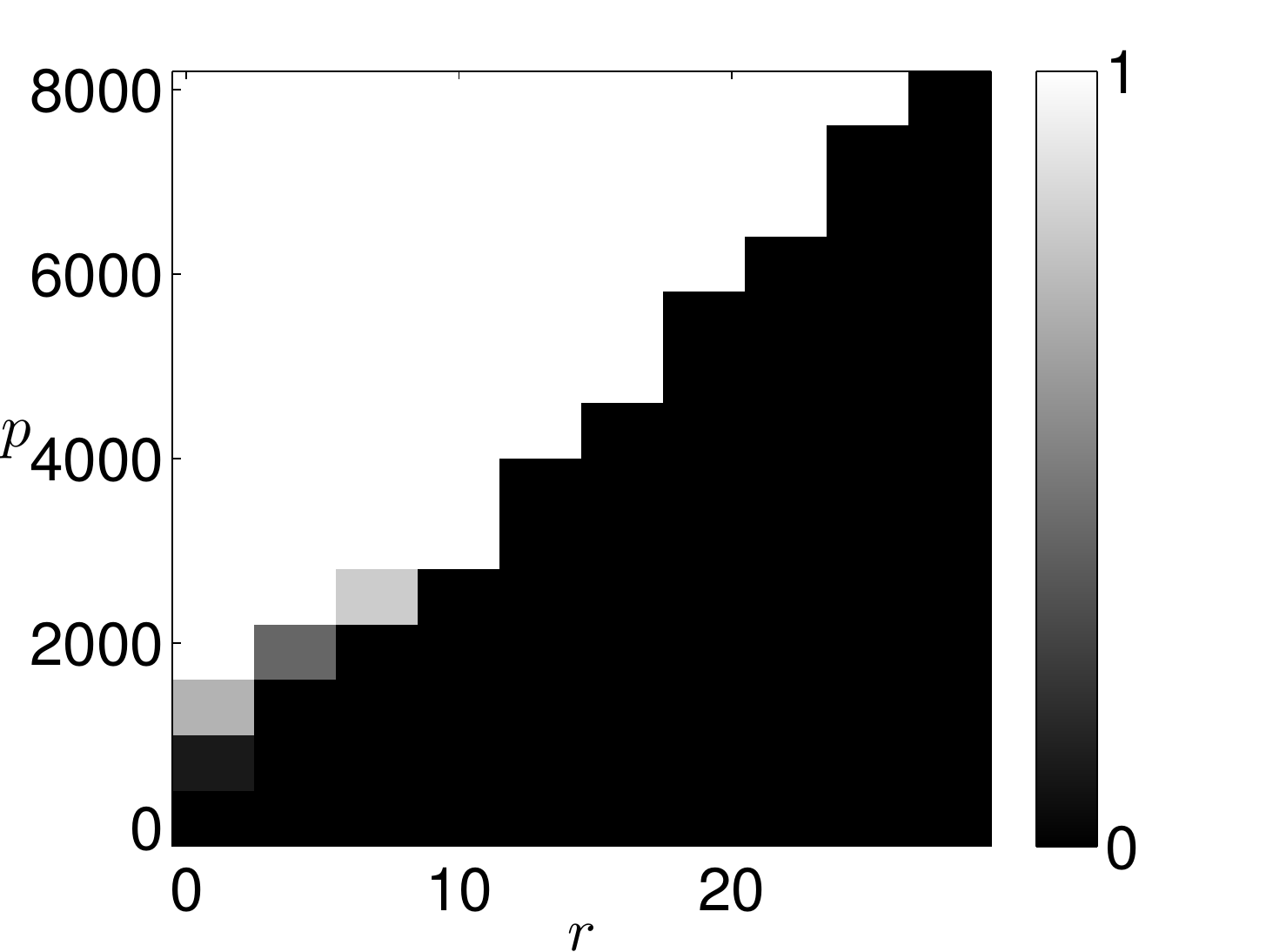}}
\centering{(a)}
\end{minipage}
\hfill
\begin{minipage}{0.48\linewidth}
\centerline{
\includegraphics[width=2in]{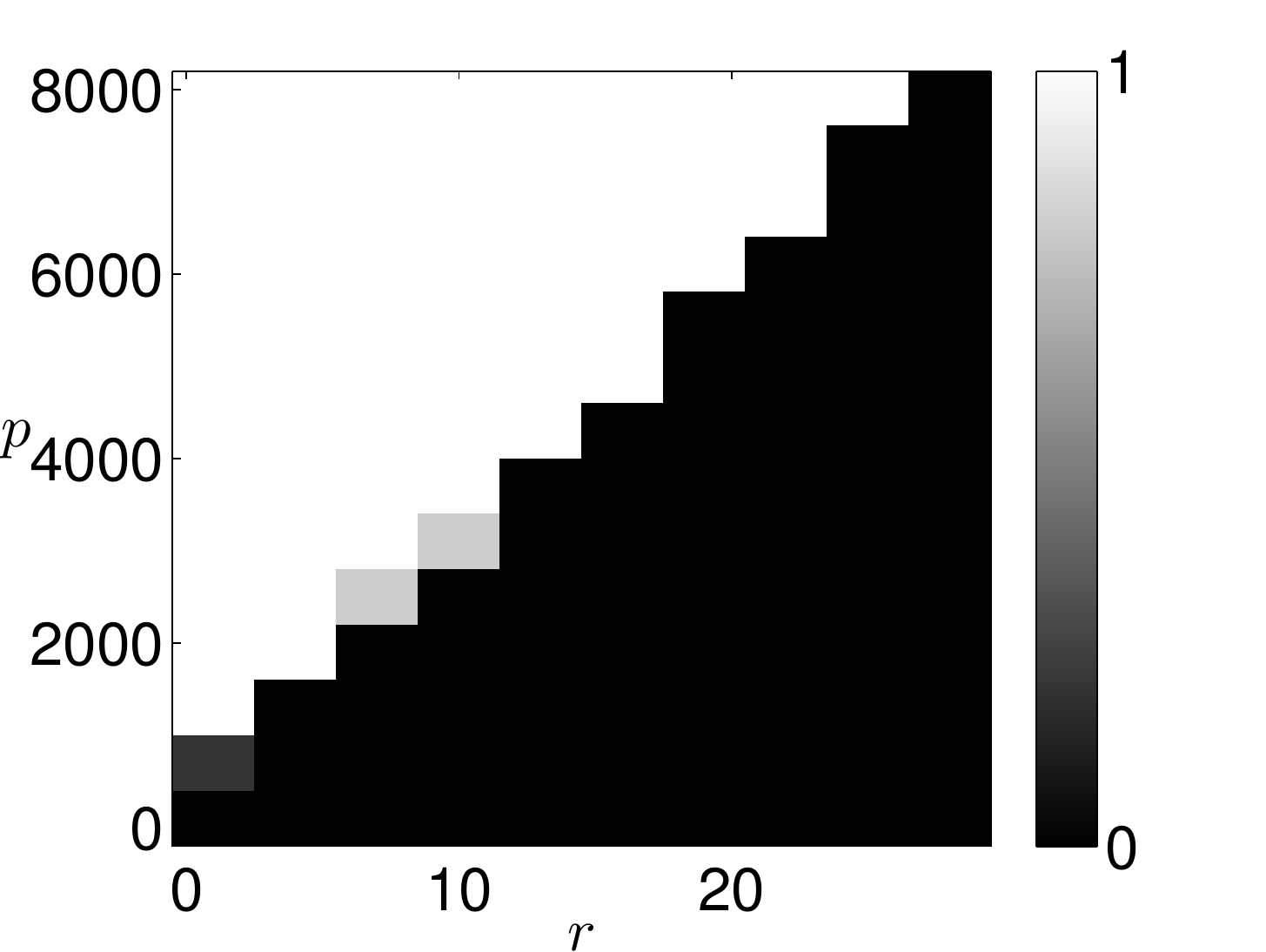}}
\centering{(b)}
\end{minipage}
\vfill
\begin{minipage}{0.48\linewidth}
\centerline{
\includegraphics[width=2in]{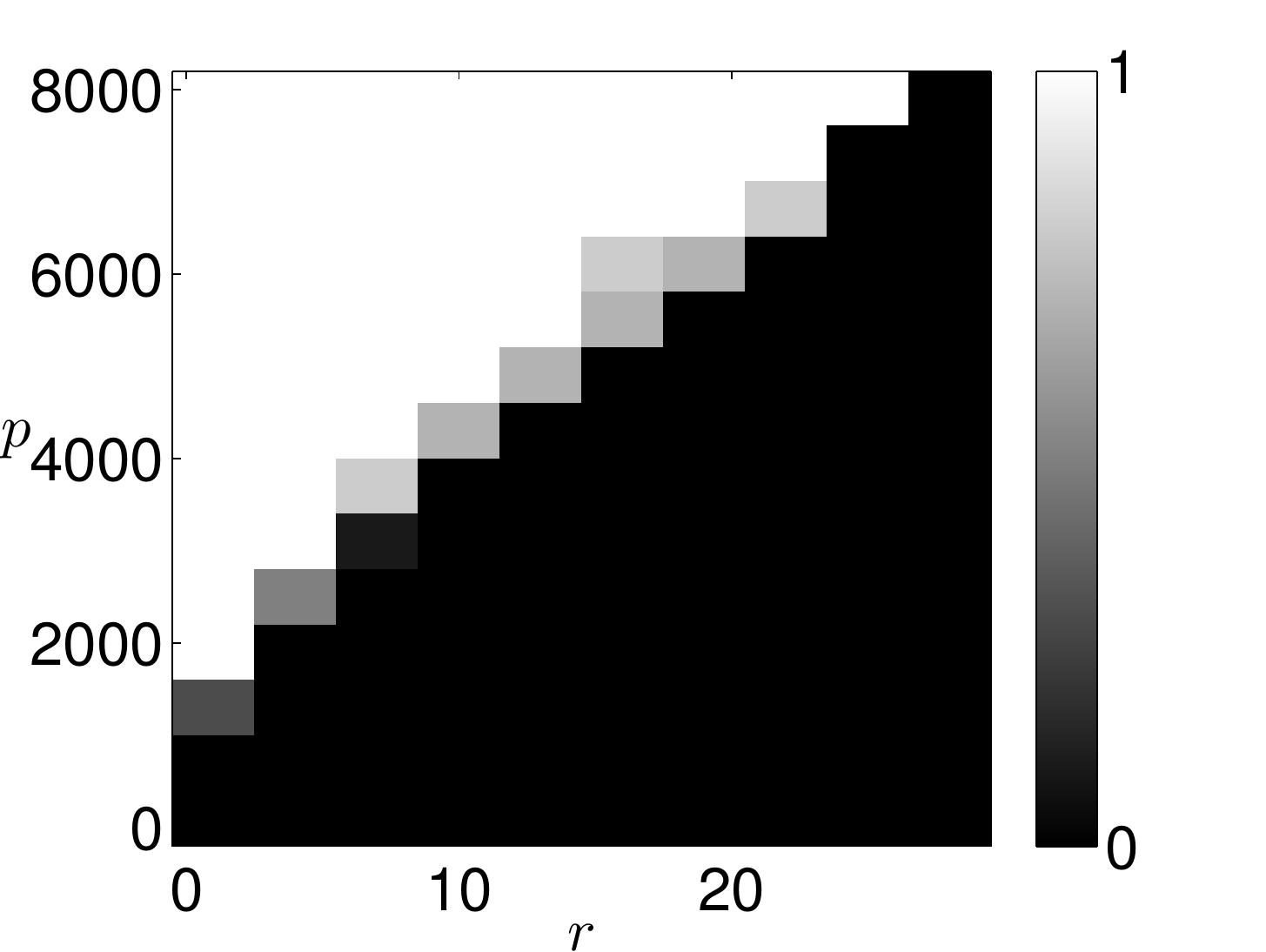}}
\centering{(c)}
\end{minipage}
\hfill
\begin{minipage}{0.48\linewidth}
\centerline{
\includegraphics[width=2in]{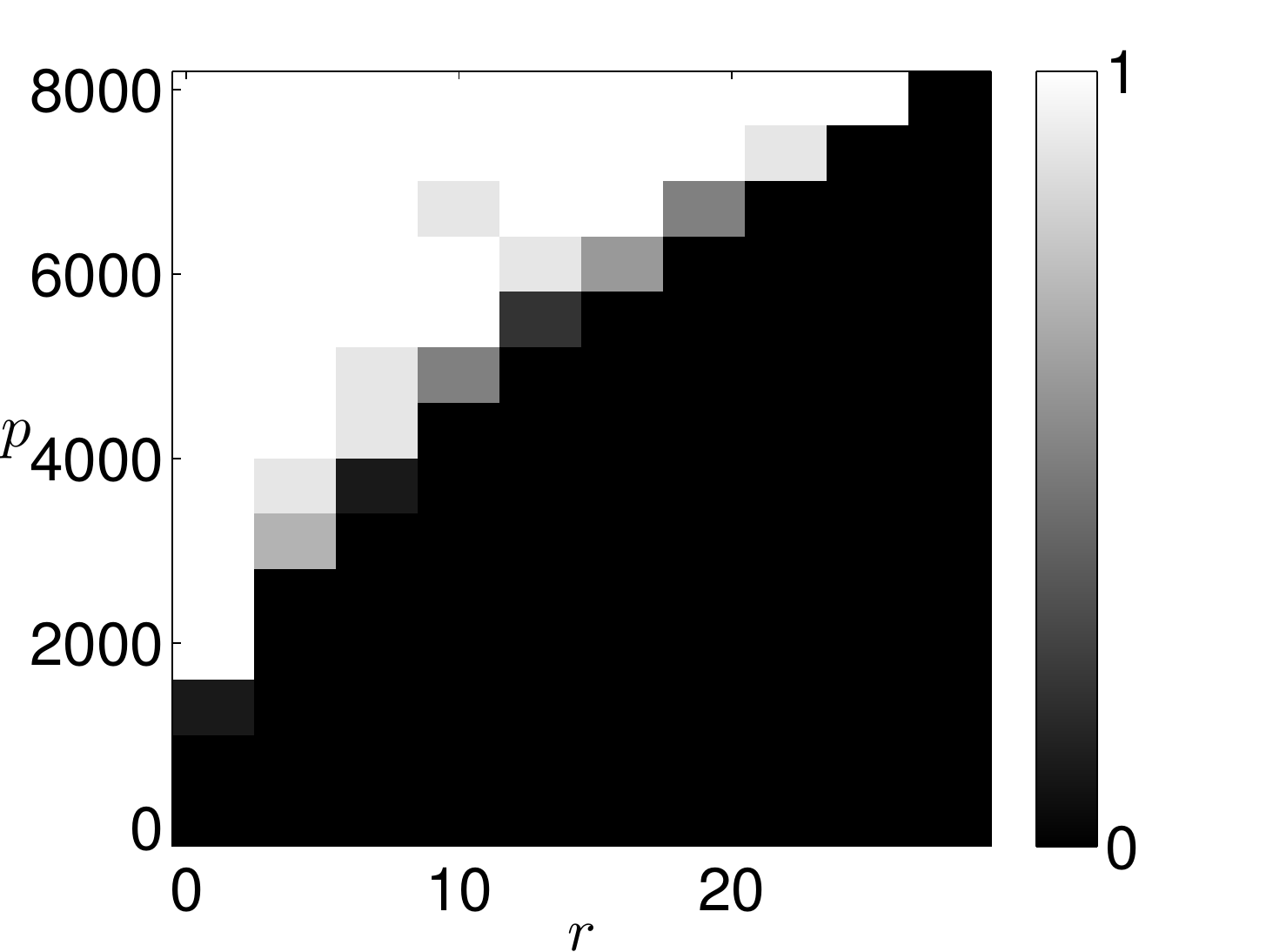}}
\centering{(d)}
\end{minipage}
\caption{\label{fig:Phase MC} Rate of success for matrix sensing by (a) the matrix factorization approach with gradient descent; (b) SVP~\cite{jain2010guaranteed}; (c) solving the convex problem \eqref{eq:nuclear norm for mc}; (d) SVT~\cite{jain2010guaranteed}. }
\end{figure}

\begin{figure}[htb!]
\includegraphics[width=3.5in]{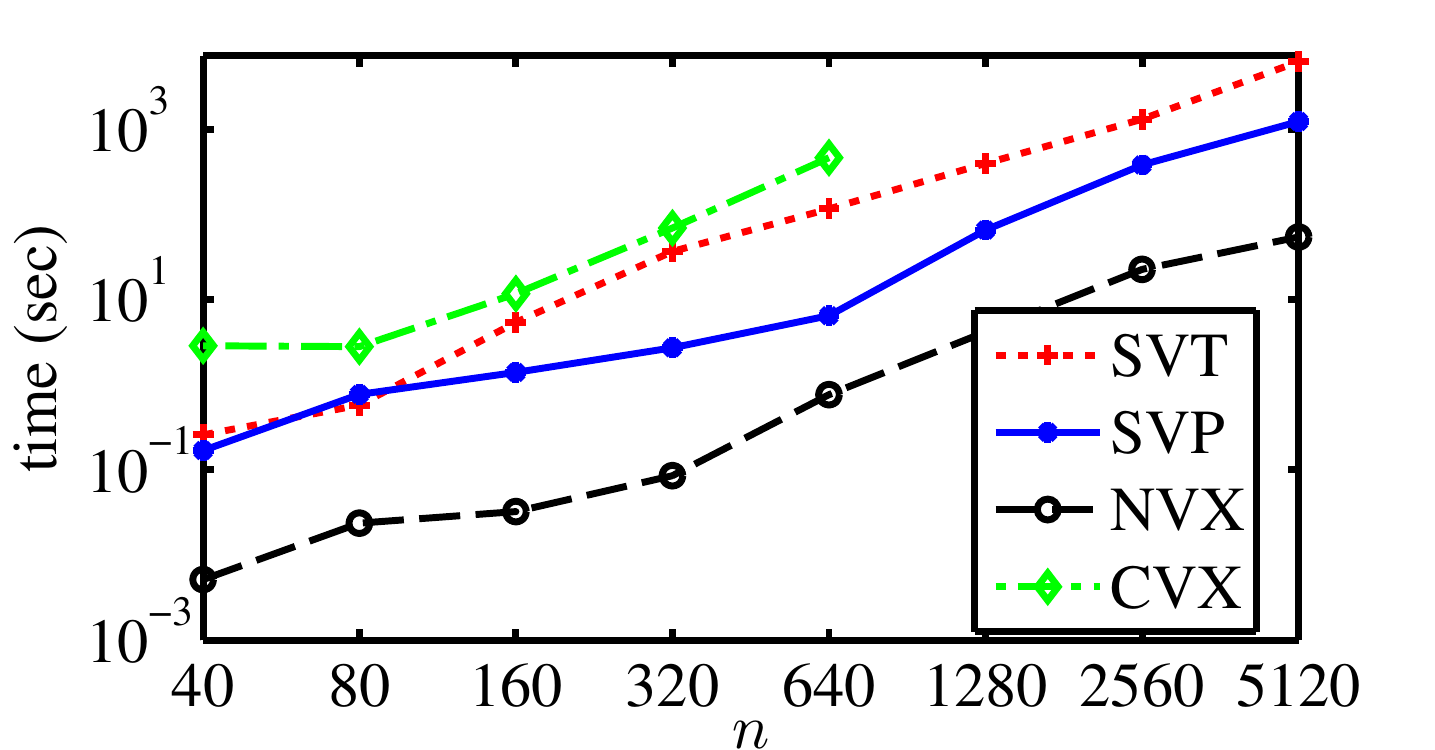}
\caption{\label{fig:time} Average computation time needed for different algorithms solving matrix completion.}
\end{figure}

\subsection{1-bit Matrix Completion}\label{sec:exp 1bit}
In the last set of experiments, we compare the performance of the matrix factorization approach with the convex approach\footnote{Software available at http://mdav.ece.gatech.edu/software/}  in \cite{davenport20141} for 1-bit matrix completion. We first note that to make the recovery problem well-posed, a constraint on $\|\mX\|_\infty$ (the entry-wise maximum of the matrix $\mX$) is applied in \cite{davenport20141} to require that the matrix is not too ``spiky''.
Instead of using the constraint on $\|\mX\|_\infty$, we add a smooth regularizer $\|\mX\|_F^2$ and turn to minimize the following objective function
\[
f_{\Omega,\mY}(\mX) = F_{\Omega,\mY}(\mX) + \frac{\eta}{2}\|\mX\|_F^2,
\]
which is also a convex function over $\mX$ and satisfies a similar restricted strong convexity and smoothness condition to $F_{\Omega,\mY}$ in Lemma~\ref{lem:1bit}. {In the case where we only observe part of the entries, then in light of Theorem \ref{thm:RIP MC}, the corresponding objective function is expected to satisfy the strong convexity and smoothness condition for all incoherent matrices.}
Thus, we factorize $\mX$ into $\mU\mV^\T$ and solve the following optimization problem over the $n\times r$ and $m\times r$ matrices $\mU$ and $\mV$:
\begin{align}
\minimize_{\mU,\mV}\rho_{\Omega,\mY}(\mU,\mV) = f_{\Omega,\mY}(\mU\mV^\T).
\label{eq:1bit fact}\end{align}

To evaluate the performance of this factorization approach on 1-bit matrix completion, we generate $n\times r$ matrices $\mU^\diamond$ and $\mV^\diamond$ with entries drawn i.i.d. from a uniform distribution on $[-\frac{1}{2}, \frac{1}{2}]$ and
construct a random $n\times n$ matrix $\mX^\diamond$ with rank $r$. Similar to the setup in \cite{davenport20141},  the matrix is then scaled so that $\|\mX^\diamond\| = 1$. We obtain 1-bit observations $\{Y_{i,j}\}_{(i,j)\in\Omega}$ by adding Gaussian noise of variance $\sigma^2$ and recording the sign of the resulting value \eqref{eq:1bit observ}, where the subset of indices $\Omega$ is chosen at random with $\E|\Omega| = p$. We compare the performance of the factorization approach and the convex approach \cite{davenport20141} over a range of different values of $n$, $p$, $r$ or $\sigma$. Figures~\ref{fig:1bit vary}(a)-(d) show the normalized squared Frobenius norm of the error $\frac{\|\widehat\mX - \mX^\diamond\|}{\|\mX^\diamond\|_F^2}$ (where $\widehat \mX$ denotes the reconstructed matrix) and average the results over 10 draws of Monte Carlo trials. We observe that matrix factorization approach has {slightly better performance than} the convex approach for 1-bit matrix completion~\cite{davenport20141}. {Note that this phenomenon (the factorization approach having better performance) is also observed in \cite{bhaskar20151}. We repeat these experiments but obtaining 1-bit observations with the logistic regression model where $g(x) = \frac{e^x}{1+e^x}$ for \eqref{eq:1bit observ} and display the results in Figure~\ref{fig:1bit vary logistic}.}




\begin{figure}[htb!]
\begin{minipage}{0.48\linewidth}
\centerline{
\includegraphics[width=1.9in]{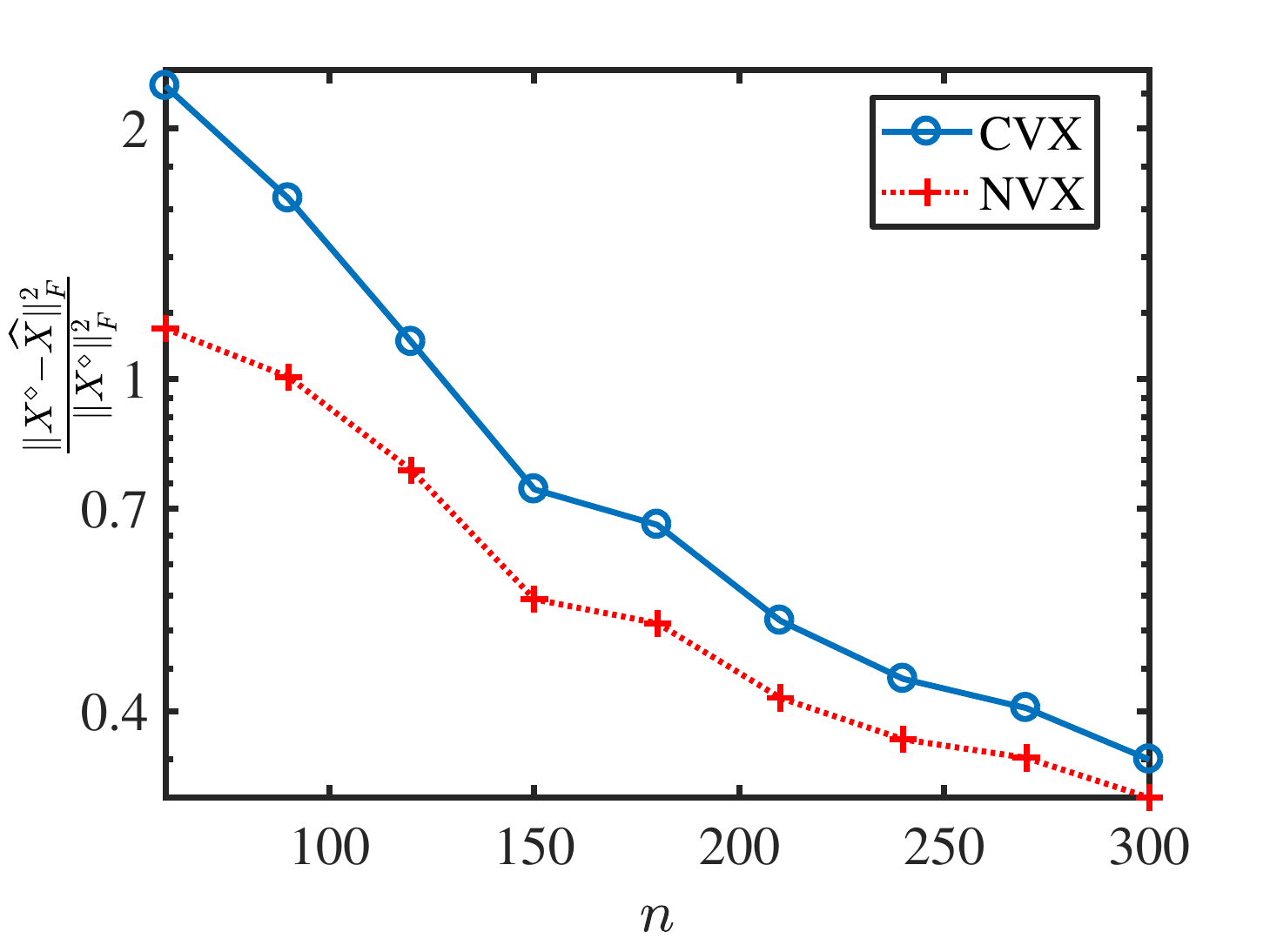}}
\centering{(a)}
\end{minipage}
\hfill
\begin{minipage}{0.48\linewidth}
\centerline{
\includegraphics[width=1.9in]{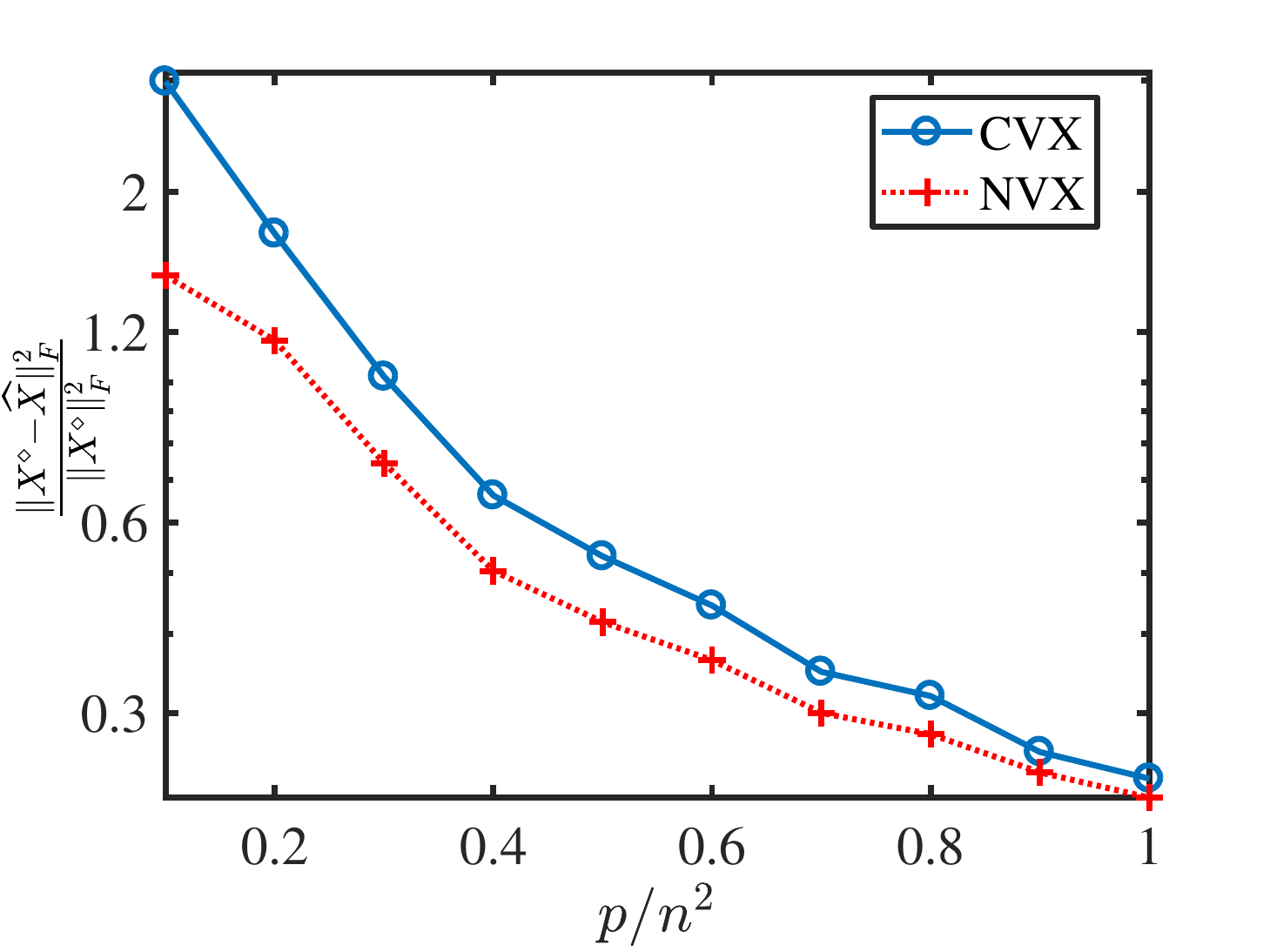}}
\centering{(b)}
\end{minipage}
\vfill
\begin{minipage}{0.48\linewidth}
\centerline{
\includegraphics[width=1.9in]{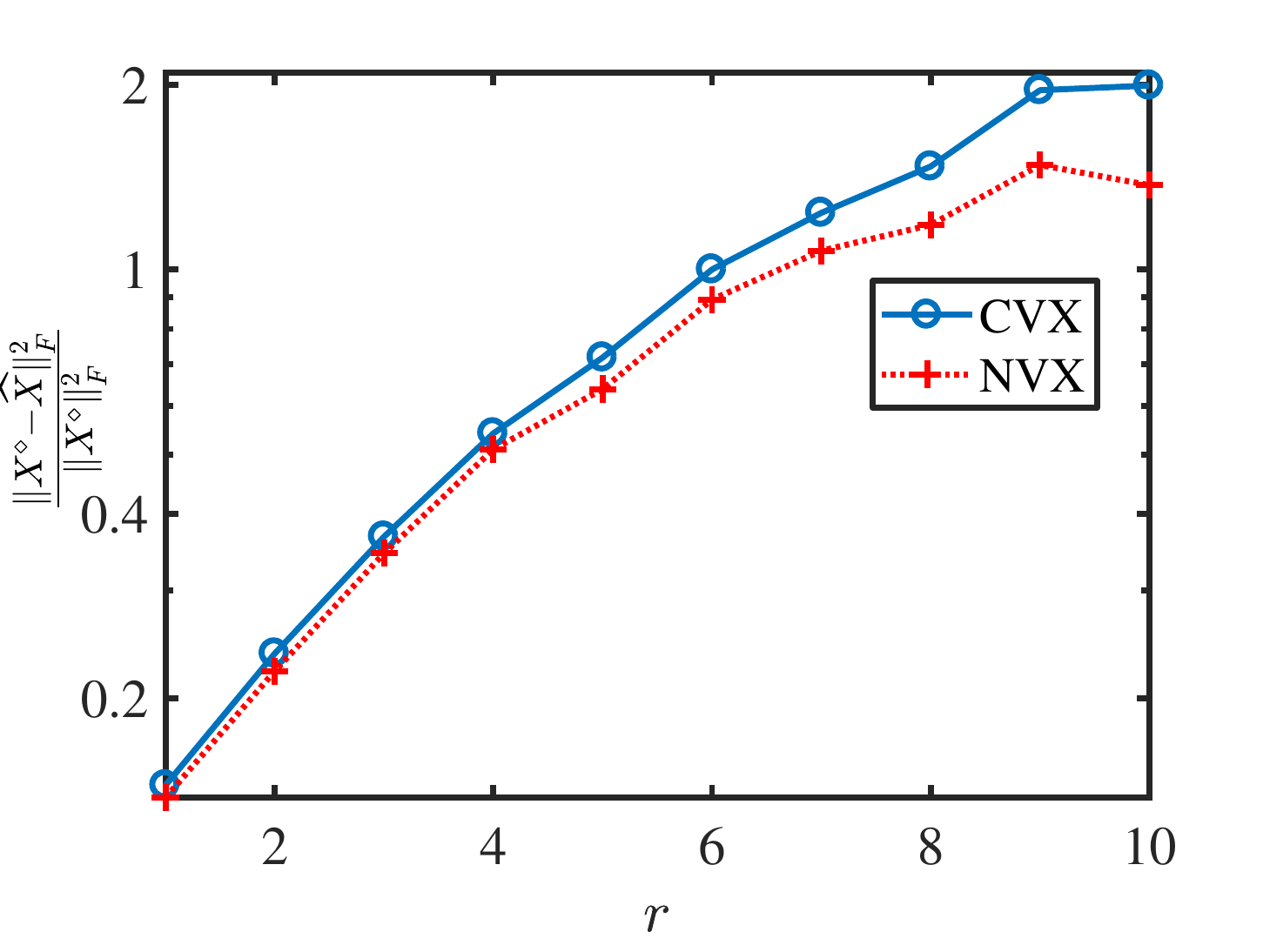}}
\centering{(c)}
\end{minipage}
\hfill
\begin{minipage}{0.48\linewidth}
\centerline{
\includegraphics[width=1.9in]{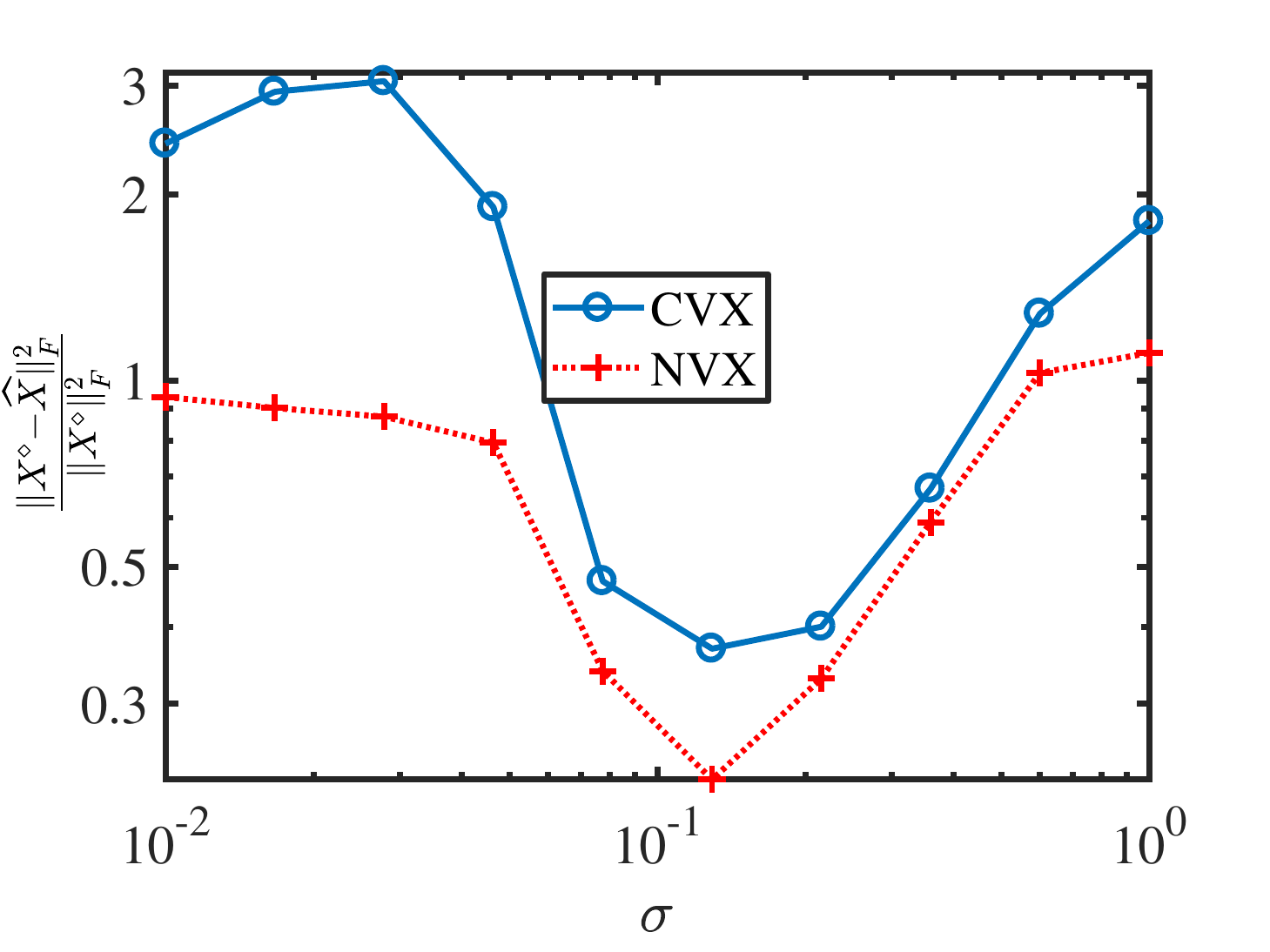}}
\centering{(d)}
\end{minipage}
\caption{\label{fig:1bit vary}The performance in terms of the relative Frobenius norm of the error for the matrix factorization approach (denoted by NVX) and the convex approach in \cite{davenport20141} (denoted by CVX) for solving the 1-bit matrix completion with probit regression model and (a) varying $n$ and $\sigma = 0.3$, $r = 7$, $p = 0.5n^2$; (b) varying $p$ and  $\sigma = 0.3$, $n = 200$, $r = 7$; (c) varying $r$ and $\sigma = 0.3$, $n = 200$, $p = 0.25n^2$; (d) varying $\sigma$ and $n = 200$, $r = 4$, $p = 0.25n^2$. The results are plotted in the log scale.}
\end{figure}

\begin{figure}[htb!]
\begin{minipage}{0.48\linewidth}
\centerline{
\includegraphics[width=1.9in]{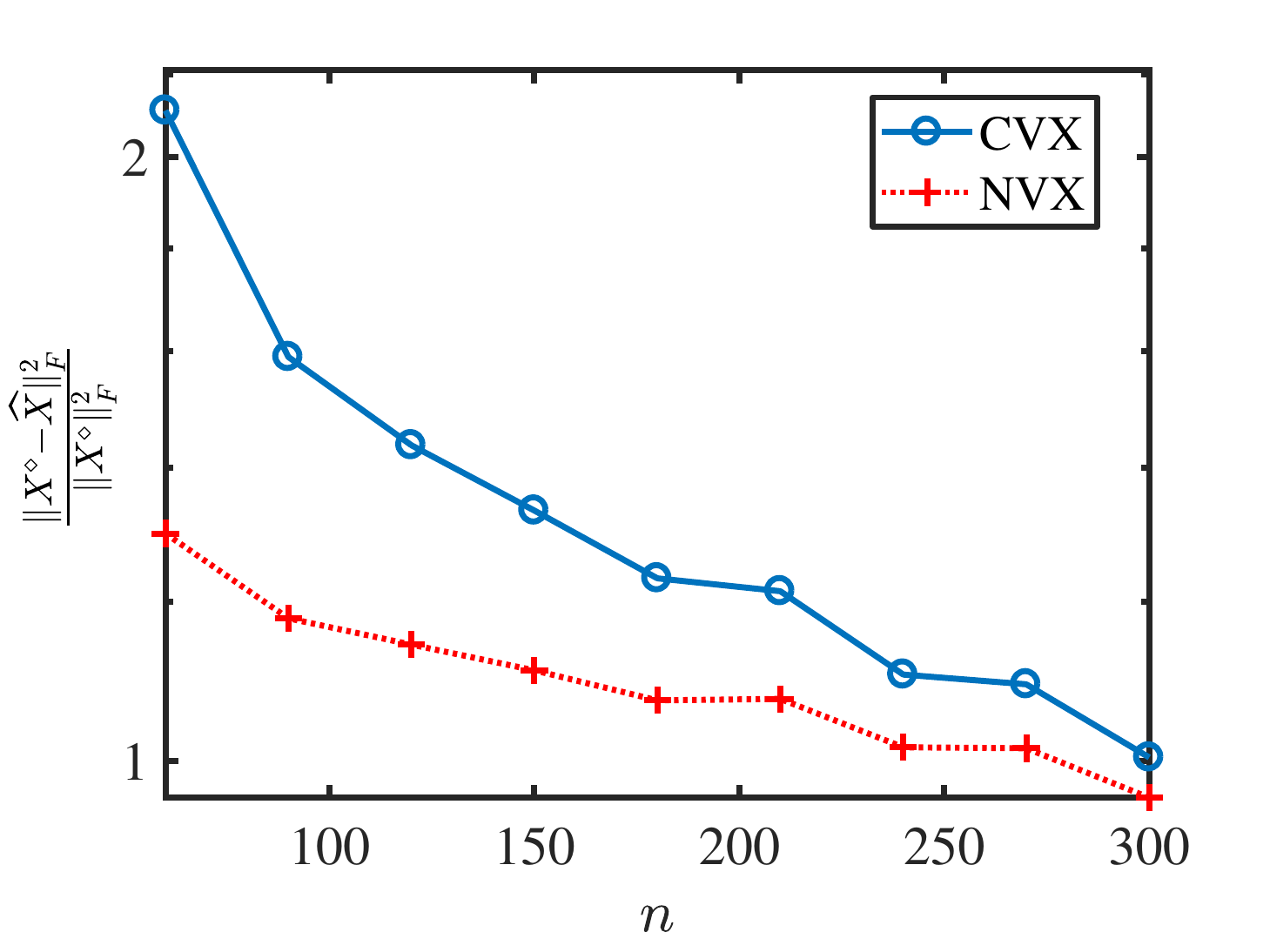}}
\centering{(a)}
\end{minipage}
\hfill
\begin{minipage}{0.48\linewidth}
\centerline{
\includegraphics[width=1.9in]{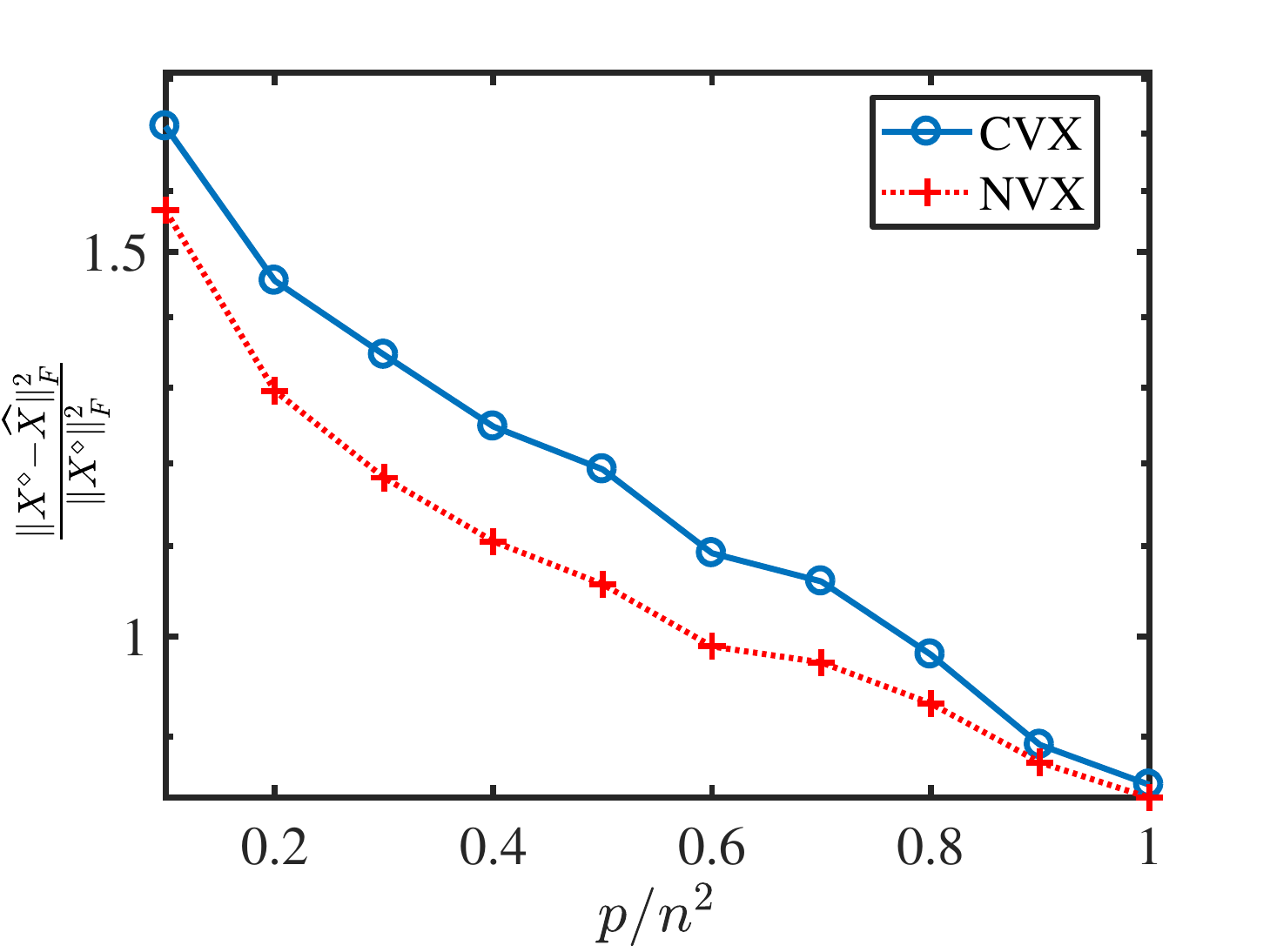}}
\centering{(b)}
\end{minipage}
\caption{\label{fig:1bit vary logistic} The performance in terms of the relative Frobenius norm of the error for the matrix factorization approach (denoted by NVX) and the convex approach in \cite{davenport20141} (denoted by CVX) for solving the 1-bit matrix completion with logistic regression model and (a) varying $n$ and $r = 2$, $p = 0.5n^2$; (b) varying $p$ and  $n = 200$, $r = 2$. The results are plotted in the log scale.}
\end{figure}

\section{Conclusion}\label{sec:conclusion}
This paper considers low-rank matrix optimization {on general (nonsymmetric and rectangular) matrices} with general objective functions. By focusing on general objective functions, we provide a unifying framework for low-rank matrix optimizations with the factorization approach. Although the resulting optimization problem is not convex, we show that the reformulated objection function has a simple landscape: there are no spurious local minima and any critical point not being a local minimum is a strict saddle such that the Hessian evaluated at this point has a strictly negative eigenvalue. These properties guarantee that a number of iterative optimization algorithms (such as gradient descent and the trust region method) will converge to the global optimum from a random initialization.

\appendices

\section{Proof of Lemma~\ref{lem:1bit}}\label{sec:prf 1bit}
\begin{proof}[Proof of Lemma~\ref{lem:1bit}]
We compute the partial derivative of $F_{\Omega,\mY}$ in terms of $X_{i,j}$ as
\begin{align*}
\frac{\partial F_{\Omega,\mY}}{\partial X_{i,j}} &=-\bbmone_{(Y_{i,j}=1)}\frac{q'(X_{i,j})}{q(X_{i,j})} +\bbmone_{(Y_{i,j}=-1)}\frac{q'(X_{i,j})}{1- q(X_{i,j})},
\end{align*}
which implies
\begin{align*}
\frac{\partial^2 F_{\Omega,\mY}}{\partial X_{i,j}\partial X_{i,j}}  =& \bbmone_{(Y_{i,j}=1)}\frac{(q'(X_{i,j}))^2 - q(X_{i,j})q''(X_{i,j})}{q^2(X_{i,j})} \\
+&  \bbmone_{(Y_{i,j}=-1)}\frac{(q'(X_{i,j}))^2 +(1- q(X_{i,j}))q''(X_{i,j})}{(1-q(X_{i,j}))^2}
\end{align*}
and
\begin{align*}
\frac{\partial^2 F_{\Omega,\mY}}{\partial X_{i,j}\partial X_{k,\ell}}  = 0
\end{align*}
for all $(k,\ell)\neq (i,j)$. Thus, the bilinear form for the Hessian of $\nabla^2 F_{\Omega,\mY}(\mX)$ can be computed as
\[
[\nabla^2 F_{\Omega,\mY}(\mX)](\mG,\mG) = \sum_{i}\sum_{j} \frac{\partial^2 F_{\Omega,\mY}}{\partial X_{i,j}\partial X_{i,j}} G^2_{i,j}
\]
for any $\mG\in\R^{n\times m}$. Now since by assumption $\|\mX\|_\infty\leq \gamma$, we have
\[
\alpha_{q,\gamma}\|\mG\|_F^2 \leq [\nabla^2 F_{\Omega,\mY}(\mX)](\mG,\mG) \leq \beta_{q,\gamma} \|\mG\|_F^2.
\]
\end{proof}

\section{Proof of Proposition~\ref{prop:RIP like}}\label{sec:prf RIP like}
\begin{proof}[Proof of Proposition~\ref{prop:RIP like}]
This proof follows similar steps to the proof of~\cite[Lemma 2.1]{candes2008restricted}. First note that  the bilinear form $[\nabla^2f(\mZ)](\mG,\mH) = \sum_{i,j,k,l}\frac{\partial^2 f(\mZ)}{\partial \mZ_{ij}\partial \mZ_{kl}}\mG_{ij}\mH_{kl}$ implies $[\nabla^2f(\mZ)](\mG,\mH)$ is invariant under all scalings for both $\mG$ and $\mH$, i.e.,
\[
[\nabla^2f(\mZ)](a\mG,b\mH) = ab [\nabla^2f(\mZ)](\mG,\mH)
\]
for any $a,b\in\R$. If either $\mG$ or $\mH$ is zero, \eqref{eq:RIP like} holds since both sides are $0$.

Now suppose both $\mG$ or $\mH$ are nonzero. By the scaling invariance property of both sides in \eqref{eq:RIP like}, we assume $\|\mG\|_F = \|\mH\|_F = 1$ without loss of generality. Note that the $(2r,4r)$-restricted  strong convexity and smoothness condition~\eqref{eq:RIP like} implies
\begin{align*}
\alpha \left\|\mG \pm \mH\right\|_F^2 &\leq [\nabla^2 f(\mX)](\mG\pm\mH,\mG\pm \mH)\\ &\leq \beta \left\|\mG \pm \mH\right\|_F^2.
\end{align*}
Thus we have
\begin{align*}
&-\frac{\beta-\alpha}{2} \left(\left\|\mG\right\|_F^2 +\left\|\mH\right\|_F^2\right)\\
&\leq 2\left[\nabla^2f(\mZ)\right](\mG,\mH) - (\alpha+\beta)\left\langle \mG,\mH \right\rangle\\
&\leq \frac{\beta-\alpha}{2} \left(\left\|\mG\right\|_F^2 +\left\|\mH\right\|_F^2\right),
\end{align*}
which further implies
\begin{align*}
&\left| 2\left[\nabla^2f(\mZ)\right](\mG,\mH) - (\alpha+\beta)\left\langle \mG,\mH \right\rangle\right| \\
&\leq \beta-\alpha = (\beta-\alpha)\left\|\mG\right\|_F \left\|\mH\right\|_F.
\end{align*}
\end{proof}

\section{Proof of Lemma \ref{lem:bound:WW}}\label{sec:proof bound WW - W*W*QQ}
\begin{proof}[Proof of Lemma \ref{lem:bound:WW}]
First recall the notation $\mX = \mU\mV^\T$, $\mX^\star = \mU^\star\mV^\star $, and
\[
\mW = \begin{bmatrix} \mU \\ \mV \end{bmatrix}, \ \widehat\mW = \begin{bmatrix} \mU \\ -\mV \end{bmatrix}, \mW^\star = \begin{bmatrix} \mU^\star \\ \mV^\star \end{bmatrix}, \ \widehat\mW^\star = \begin{bmatrix} \mU^\star \\ -\mV^\star \end{bmatrix}.
\]
It follows from \eqref{eq:proof thm cirtical 3} and \eqref{eq:proof thm cirtical 4} that any critical point $\mW$ satisfies
\[
\begin{bmatrix}\mzero & \nabla f(\mX) \\ \nabla f(\mX)^\T & \mzero\end{bmatrix}\mW =\mzero ,
\]
which gives
\begin{align}\label{eq:proof bound WW - W*W*QQ 0}\begin{split}
&0 = \langle \begin{bmatrix}\mzero & \nabla f(\mX) \\ \nabla f(\mX)^\T & \mzero\end{bmatrix}, \mZ \mW^\T \rangle\\
&=\langle \begin{bmatrix}\mzero & \nabla f(\mX) - \nabla f(\mX^\star) \\ \nabla f(\mX)^\T - \nabla f(\mX^\star)^\T & \mzero\end{bmatrix}, \mZ \mW^\T \rangle\\
& = \underbrace{\langle  \nabla f(\mX) - \nabla f(\mX^\star) - \frac{\alpha + \beta}{2}(\mX -\mX^\star ) , \mZ_{\mU} \mV^\T + \mU\mZ_{\mV}^\T \rangle}_{\daleth_1}\\
& \quad + \frac{\alpha + \beta}{2}\underbrace{\left\langle \mX -\mX^\star, \mZ_{\mU} \mV^\T + \mU\mZ_{\mV}^\T \right\rangle}_{\daleth_2}
\end{split}\end{align}
for any $\mZ = \begin{bmatrix}\mZ_{\mU}\\\mZ_{\mV}  \end{bmatrix}\in\R^{(n+m)\times r}$. Here the second line utilizes the fact $\nabla f(\mX^\star)=\mzero$. We bound $\daleth_1$ by first using integral form of the mean value theorem for $\nabla f(\mX)$:
\begin{align*}
&\daleth_1 =\\&  \int_0^1\left[ \nabla^2 f(t\mX + (1-t)\mX^\star)\right](\mX -\mX^\star, \mZ_{\mU} \mV^\T + \mU\mZ_{\mV}^\T) d  t \\ & \quad- \frac{\alpha + \beta}{2}\left\langle \mX -\mX^\star , \mZ_{\mU} \mV^\T + \mU\mZ_{\mV}^\T \right\rangle.
\end{align*}
Noting that all the three matrices $t\mX + (1-t)\mX^\star$, $\mX -\mX^\star$ and $\mZ_{\mU} \mV^\T + \mU\mZ_{\mV}^\T$ have rank at most $2r$, it follows from Proposition~\ref{prop:RIP like} that
\begin{align*}
\left|\daleth_1 \right|
&\leq \frac{\beta - \alpha}{2}\left\|\mX -\mX^\star\right\|_F\left\|\mZ_{\mU} \mV^\T + \mU\mZ_{\mV}^\T\right\|_F,
\end{align*}
which when plugged into \eqref{eq:proof bound WW - W*W*QQ 0} gives
\begin{equation}\begin{split}
&\frac{\alpha + \beta}{2}\daleth_2 = -\daleth_1 \\
&\leq \frac{\beta - \alpha}{2}\left\|\mX -\mX^\star\right\|_F\left\|\mZ_{\mU} \mV^\T + \mU\mZ_{\mV}^\T\right\|_F.
\end{split}\label{eq:proof bound WW - W*W*QQ 1}
\end{equation}

Now let $\mZ = (\mW\mW^\T - \mW^\star\mW^{\star\T}){\mW^T}^{\dagger}$, which gives $\mZ\mW^\T = (\mW\mW^\T - \mW^\star\mW^{\star\T})\mP_{\mW}$. Here $\dagger$ denotes the pseudoinverse of a matrix and $\mP_{\mW}$ is the orthogonal projector onto the range of $\mW$. Utilizing the fact $\widehat\mW^\T\mW = \mzero$ from~\eqref{eq:thm eq 1}, we further connect the left hand side of \eqref{eq:proof bound WW - W*W*QQ 1} with $\left\| \left(\mW\mW^\T - \mW^\star\mW^{\star\T}\right)\mP_{\mW}
\right\|_F^2$ by
\begin{equation}\label{eq:proof bound WW - W*W*QQ 2}
\begin{split}
&\frac{\alpha + \beta}{2}\daleth_2 = \frac{\alpha + \beta}{2}\daleth_2 + \frac{\alpha + \beta}{4}\langle\widehat\mW\widehat\mW^\T,\mZ\mW^\T  \rangle\\
& = \frac{\alpha + \beta}{4} \left\langle \mW\mW^\T -  \mW^\star\mW^{\star\T}, \left(\mW\mW^\T - \mW^\star\mW^{\star\T}\right)\mP_{\mW}
\right\rangle \\
& \quad + \frac{\alpha + \beta}{4} \left\langle   \widehat\mW^\star\widehat\mW^{\star\T}, \left(\mW\mW^\T - \mW^\star\mW^{\star\T}\right)\mP_{\mW}
\right\rangle\\
& \geq \frac{\alpha + \beta}{4} \left\langle \mW\mW^\T -  \mW^\star\mW^{\star\T}, \left(\mW\mW^\T - \mW^\star\mW^{\star\T}\right)\mP_{\mW}
\right\rangle \\
&=
 \frac{\alpha + \beta}{4} \left\| \left(\mW\mW^\T - \mW^\star\mW^{\star\T}\right)\mP_{\mW}
\right\|_F^2,
\end{split}\end{equation}
where the inequality follows because $\left\langle   \widehat\mW^\star\widehat\mW^{\star\T}, \mW^\star\mW^{\star\T}\mP_{\mW}
\right\rangle = 0$ (noting that $\widehat\mW^{\star\T}\widehat\mW^{\star}= \mzero$) and $\left\langle   \widehat\mW^\star\widehat\mW^{\star\T}, \mW\mW^\T\mP_{\mW}
\right\rangle = \left\langle   \widehat\mW^\star\widehat\mW^{\star\T}, \mW\mW^\T
\right\rangle \geq  0$ since it is the inner product between two PSD matrices.

On the other hand, we give an upper bound on the right hand side of \eqref{eq:proof bound WW - W*W*QQ 1}:
\begin{align*}
&\left\|\mX -\mX^\star\right\|_F  \left\|\mZ_{\mU} \mV^\T + \mU\mZ_{\mV}^\T\right\|_F \\&\leq \left\|\mX -\mX^\star\right\|_F\sqrt{2\left\|\mZ_{\mU} \mV^\T\right\|_F^2 +2 \left\|\mU\mZ_{\mV}^\T\right\|_F^2}\\
&\leq \left\|\mX -\mX^\star\right\|_F  \left\| \left(\mW\mW^\T - \mW^\star\mW^{\star\T}\right)\mP_{\mW}
\right\|_F,
\end{align*}
where the last line follows because $\left\|\mZ_{\mU} \mV^\T\right\|_F^2 + \left\|\mZ_{\mV}\mU^\T\right\|_F^2 = \left\|\mZ_{\mU} \mU^\T\right\|_F^2 + \left\|\mZ_{\mV}\mV^\T\right\|_F^2$ (since $\mU^\T\mU = \mV^\T\mV$), implying $2\left\|\mZ_{\mU} \mV^\T\right\|_F^2 +2 \left\|\mU\mZ_{\mV}^\T\right\|_F^2 = \left\|\mZ\mW^\T\right\|_F^2$. This together with \eqref{eq:proof bound WW - W*W*QQ 1} and \eqref{eq:proof bound WW - W*W*QQ 2} completes the proof.
\end{proof}

\section{Proof of Lemma \ref{lem:CC - DD to WDelta}}
\label{sec:proof CC - DD to WDelta}
\begin{proof}[Proof of Lemma \ref{lem:CC - DD to WDelta}]
When $\mC\neq \mzero$, the proof follows directly from the following results.
\begin{lem}\label{lem:CC - DD to WDelta 1}\cite[Lemma 2]{li2016} For any matrices $\mC,\mD \in\R^{n\times r}$ with rank $r_1$ and $r_2$, respectively,  let $\mR = \argmin_{\widetilde \mR\in\calO_r}\|\mC - \mD\mR\|_F$. Then
 \begin{align*}
 \left\|\mC\mC^\T - \mD\mD^\T \right\|_F \geq \min\left\{\sigma_{r_1}(\mC), \sigma_{r_2}(\mD)\right\}\cdot \left\|\mC - \mD\mR\right\|_F.
\end{align*}
 \end{lem}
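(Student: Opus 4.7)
My plan is to reduce to the aligned case via the orthogonal Procrustes optimality condition, apply a symmetric algebraic identity to decompose $\|\mC\mC^\T - \mD\mD^\T\|_F^2$, and then extract the required factor $\min(\sigma_{r_1}(\mC),\sigma_{r_2}(\mD))$ through a singular-value argument.

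First, since $\mD\mD^\T = (\mD\mR)(\mD\mR)^\T$ and $\|\mC - \mD\mR\|_F$ are both invariant under the substitution $\mD \mapsto \widetilde{\mD} := \mD\mR$, I reduce to the case $\mR = \mId$. The first-order optimality of the Procrustes problem $\min_{\mR'\in\calO_r}\|\mC - \widetilde{\mD}\mR'\|_F$ at $\mR' = \mId$ guarantees that $\widetilde{\mD}^\T\mC$ is symmetric and positive semidefinite. Setting $\mDelta := \mC - \widetilde{\mD}$, this in turn makes $\widetilde{\mD}^\T\mDelta = \widetilde{\mD}^\T\mC - \widetilde{\mD}^\T\widetilde{\mD}$ symmetric.

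Next, I would exploit the algebraic identity
\[
2(\mC\mC^\T - \widetilde{\mD}\widetilde{\mD}^\T) = (\mC + \widetilde{\mD})\mDelta^\T + \mDelta(\mC + \widetilde{\mD})^\T,
\]
together with the observation that $(\mC + \widetilde{\mD})^\T\mDelta = \mC^\T\mC - \widetilde{\mD}^\T\widetilde{\mD}$ is symmetric (the Procrustes cross-terms $\widetilde{\mD}^\T\mC$ and $\mC^\T\widetilde{\mD}$ cancel). Expanding the squared Frobenius norm then yields the clean identity
\[
2\|\mC\mC^\T - \widetilde{\mD}\widetilde{\mD}^\T\|_F^2 = \|(\mC + \widetilde{\mD})\mDelta^\T\|_F^2 + \|\mC^\T\mC - \widetilde{\mD}^\T\widetilde{\mD}\|_F^2.
\]
Because $(\mC + \widetilde{\mD})^\T(\mC + \widetilde{\mD}) = \mC^\T\mC + \widetilde{\mD}^\T\widetilde{\mD} + 2\widetilde{\mD}^\T\mC \succeq \mC^\T\mC + \widetilde{\mD}^\T\widetilde{\mD}$ (again by PSD-ness of $\widetilde{\mD}^\T\mC$), I bound the first summand below by $\|\mC\mDelta^\T\|_F^2 + \|\widetilde{\mD}\mDelta^\T\|_F^2$.

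The main obstacle I anticipate is extracting the exact factor $\min(\sigma_{r_1}^2(\mC),\sigma_{r_2}^2(\widetilde{\mD}))\|\mDelta\|_F^2$ from these pieces. Naively invoking the smallest nonzero eigenvalue of $\mC^\T\mC + \widetilde{\mD}^\T\widetilde{\mD}$ is insufficient, since when the row spaces of $\mC$ and $\widetilde{\mD}$ overlap nontrivially this eigenvalue can be strictly smaller than the claimed minimum. To finish, I would decompose $\mDelta$ along the row spaces of $\mC$, $\widetilde{\mD}$, and their complement: on directions in $\operatorname{row}(\mC)$ the bound $\|\mC\mDelta^\T\|_F^2 \geq \sigma_{r_1}^2(\mC)\|\mDelta\mP_{\operatorname{row}(\mC)}\|_F^2$ applies (and symmetrically for $\widetilde{\mD}$), while the residual $\|\mC^\T\mC - \widetilde{\mD}^\T\widetilde{\mD}\|_F^2$ controls the remaining ``mismatch'' modes where the two Gram matrices disagree. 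A mode-by-mode verification after aligning coordinates via the singular value decompositions of $\mC$ and $\widetilde{\mD}$ then yields the stated inequality.
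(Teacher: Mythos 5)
First, a point of reference: the paper does not actually prove this statement --- it is quoted verbatim from \cite[Lemma 2]{li2016}, so there is no in-paper proof to compare against. Evaluating your argument on its own terms: the opening reductions are correct and are indeed the standard ingredients for results of this type. Replacing $\mD$ by $\mD\mR$ to assume $\mR=\mId$, the fact that global Procrustes optimality forces $\widetilde{\mD}^\T\mC$ to be symmetric PSD (note that \emph{first-order} optimality alone only gives symmetry; you need the global argmin, which you have, for positive semidefiniteness), the identity $2(\mC\mC^\T-\widetilde{\mD}\widetilde{\mD}^\T)=(\mC+\widetilde{\mD})\mDelta^\T+\mDelta(\mC+\widetilde{\mD})^\T$, the symmetry of $(\mC+\widetilde{\mD})^\T\mDelta=\mC^\T\mC-\widetilde{\mD}^\T\widetilde{\mD}$, the resulting clean expansion of $2\|\mC\mC^\T-\widetilde{\mD}\widetilde{\mD}^\T\|_F^2$, and the bound $(\mC+\widetilde{\mD})^\T(\mC+\widetilde{\mD})\succeq\mC^\T\mC+\widetilde{\mD}^\T\widetilde{\mD}$ are all verified correctly.

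The genuine gap is the final step, which you yourself flag as the main obstacle and then leave as a plan rather than a proof. You have reduced the lemma to showing
\[
\trace\bigl(\mDelta(\mC^\T\mC+\widetilde{\mD}^\T\widetilde{\mD})\mDelta^\T\bigr)+\|\mC^\T\mC-\widetilde{\mD}^\T\widetilde{\mD}\|_F^2\;\geq\;2\min\{\sigma_{r_1}^2(\mC),\sigma_{r_2}^2(\mD)\}\,\|\mDelta\|_F^2,
\]
and the sketched finish does not obviously close it. The bounds $\|\mC\mDelta^\T\|_F^2\geq\sigma_{r_1}^2(\mC)\|\mDelta\mP_{\operatorname{row}(\mC)}\|_F^2$ and its analogue for $\widetilde{\mD}$ control $\mDelta$ only through $\mP_{\operatorname{row}(\mC)}+\mP_{\operatorname{row}(\widetilde{\mD})}$, and this sum of projectors is \emph{not} bounded below by the projector onto $\operatorname{row}(\mC)+\operatorname{row}(\widetilde{\mD})$: if the two row spaces have a small nonzero principal angle $\theta$, there is a unit direction $w$ in their span with $w^\T(\mP_{\operatorname{row}(\mC)}+\mP_{\operatorname{row}(\widetilde{\mD})})w=1-\cos\theta\approx 0$, so rows of $\mDelta$ along such directions are essentially uncontrolled by these two terms. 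Everything therefore hinges on showing quantitatively that the Gram-mismatch term $\|\mC^\T\mC-\widetilde{\mD}^\T\widetilde{\mD}\|_F^2$ (together with the Procrustes constraint, which rules out some bad configurations of $\mDelta$) compensates exactly in those directions with the right constant --- and that is precisely the substance of the lemma, not a routine ``mode-by-mode verification.'' Note also that the SVDs of $\mC$ and $\widetilde{\mD}$ do not share right singular vectors in general, so there is no single coordinate alignment in which the problem decouples into independent modes. As written, the argument establishes a correct chain of reductions but does not prove the inequality; the decisive step is missing.
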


\begin{lem}\label{lem:CC - DD to WDelta 2}\cite[Lemma 5.4]{tu2015low} For any matrices $\mC,\mD \in\R^{n\times r}$ with $\rank(\mD) =r$,  let $\mR = \argmin_{\widetilde \mR\in\calO_r}\|\mC - \mD\mR\|_F$. Then
 \begin{align*}
 \left\|\mC\mC^\T - \mD\mD^\T \right\|_F^2 \geq 2(\sqrt{2}-1)\sigma_r^2(\mD) \left\|\mC - \mD\mR\right\|_F^2.
\end{align*}
 \end{lem}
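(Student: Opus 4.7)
The plan is to split Lemma~\ref{lem:CC - DD to WDelta} into three separate claims handled by different arguments: (i) the lower bound $\|\mC\mC^\T-\mD\mD^\T\|_F^2\geq \min\{\sigma_{r_1}^2(\mC),\sigma_{r_2}^2(\mD)\}\|\mC-\mD\mR\|_F^2$, (ii) the lower bound $\|\mC\mC^\T-\mD\mD^\T\|_F^2\geq 2(\sqrt{2}-1)\sigma_r^2(\mD)\|\mC-\mD\mR\|_F^2$ (which is nontrivial only when $\mD$ has full column rank), and (iii) the sharper, dedicated bound $\|\mC\mC^\T-\mD\mD^\T\|_F^2\geq \sigma_{r_2}^2(\mD)\|\mC-\mD\mR\|_F^2$ in the degenerate case $\mC=\mzero$. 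Taking the pointwise maximum of the bounds from (i) and (ii) gives the first inequality of the lemma, and (iii) handles the second statement.

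The degenerate case (iii) is a direct SVD computation. When $\mC=\mzero$, I may take $\mR=\mId$ without loss of generality, so $\|\mC-\mD\mR\|_F^2=\|\mD\|_F^2=\sum_{i=1}^{r_2}\sigma_i^2(\mD)$ while $\|\mC\mC^\T-\mD\mD^\T\|_F^2=\|\mD\mD^\T\|_F^2=\sum_{i=1}^{r_2}\sigma_i^4(\mD)$. Since $\sigma_i^2(\mD)\geq \sigma_{r_2}^2(\mD)$ for every $i\leq r_2$, the inequality follows term by term. Note that (iii) is strictly sharper than what (i) would deliver here, because $\sigma_{r_1}(\mC)=0$ when $\mC=\mzero$.

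For (i) and (ii), the common setup is to write $\mN=\mD\mR$ and $\mDelta=\mC-\mN$, and to exploit the algebraic identity
\begin{equation*}
\mC\mC^\T-\mD\mD^\T \;=\; \mN\mDelta^\T+\mDelta\mN^\T+\mDelta\mDelta^\T.
\end{equation*}
The Procrustes optimality of $\mR$ implies that $\mR^\T\mD^\T\mC=\mN^\T\mC$ is symmetric and positive semi-definite, so $\mN^\T\mDelta$ is also symmetric. For (i), I would use this symmetry together with the variational characterization of singular values to argue that the Frobenius norm of the right-hand side of the identity cannot be smaller than $\min\{\sigma_{r_1}(\mC),\sigma_{r_2}(\mD)\}\|\mDelta\|_F$, following the route used to establish Lemma~\ref{lem:CC - DD to WDelta 1}. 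For (ii), I would follow the approach behind Lemma~\ref{lem:CC - DD to WDelta 2}: expand $\|\mC\mC^\T-\mD\mD^\T\|_F^2$ as a quadratic-plus-higher-order expression in $\|\mDelta\|_F$ using the identity above, use $\rank(\mD)=r$ to lower bound the cross term $\mN\mDelta^\T+\mDelta\mN^\T$ by $\sigma_r^2(\mD)$ times a Frobenius-norm factor, and then minimize the resulting ratio over the feasible range of $\|\mDelta\|_F$; the optimization yields the specific constant $2(\sqrt{2}-1)$.

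The main obstacle is (ii): the constant $2(\sqrt{2}-1)$ is sharp only under the full column rank assumption $\rank(\mD)=r$, and extracting it requires careful handling of the cross term $\langle \mN\mDelta^\T+\mDelta\mN^\T,\mDelta\mDelta^\T\rangle$ and a tight use of the symmetry of $\mN^\T\mDelta$. Once (i) and (ii) are established, the first inequality of the lemma follows immediately by taking their pointwise maximum (the bound in (ii) is vacuous whenever $\sigma_r(\mD)=0$, so no compatibility check is needed), and (iii) supplies the separate statement for $\mC=\mzero$.
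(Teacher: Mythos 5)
The statement you were asked to prove is precisely the piece of your plan that you label (ii) and then defer. Note first that the paper itself contains no proof of Lemma~\ref{lem:CC - DD to WDelta 2}: it is imported verbatim from \cite[Lemma 5.4]{tu2015low} and used as a black box inside the proof of Lemma~\ref{lem:CC - DD to WDelta} (just as part (i) is imported from \cite[Lemma 2]{li2016}, and only the $\mC=\mzero$ case is argued in-house). Your setup is the correct one and matches the source: write $\mN=\mD\mR$ and $\mDelta=\mC-\mN$, use the identity $\mC\mC^\T-\mD\mD^\T=\mN\mDelta^\T+\mDelta\mN^\T+\mDelta\mDelta^\T$, and note that Procrustes optimality makes $\mN^\T\mC$ symmetric PSD, hence $\mN^\T\mDelta$ symmetric. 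But everything after that is asserted rather than carried out: the sentence ``the optimization yields the specific constant $2(\sqrt{2}-1)$'' \emph{is} the content of the lemma, you yourself flag (ii) as ``the main obstacle,'' and saying you would ``follow the approach behind Lemma~\ref{lem:CC - DD to WDelta 2}'' in order to prove Lemma~\ref{lem:CC - DD to WDelta 2} is circular.

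The gap is not cosmetic, because the step you wave at does not close under the obvious attack. Expanding with the symmetry of $\mN^\T\mDelta$ gives $\|\mC\mC^\T-\mD\mD^\T\|_F^2 = 2\|\mN\mDelta^\T\|_F^2 + 2\|\mN^\T\mDelta\|_F^2 + 4\trace\bigl((\mN^\T\mDelta)(\mDelta^\T\mDelta)\bigr) + \|\mDelta\mDelta^\T\|_F^2$, and the cross term can be negative since $\mN^\T\mDelta$ is symmetric but not PSD. Absorbing it by Cauchy--Schwarz and AM--GM, $4\trace((\mN^\T\mDelta)(\mDelta^\T\mDelta)) \geq -2t\|\mN^\T\mDelta\|_F^2 - (2/t)\|\mDelta\mDelta^\T\|_F^2$, forces $t\leq 1$ to keep the $\|\mN^\T\mDelta\|_F^2$ contribution nonnegative and $t\geq 2$ to keep the $\|\mDelta\mDelta^\T\|_F^2$ contribution nonnegative, which is impossible; the constant $2(\sqrt{2}-1)$ comes out of a genuinely finer balancing argument in \cite{tu2015low} that your proposal neither reproduces nor replaces. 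Your parts (i) and (iii) are fine --- (iii) in particular matches the paper's own computation for $\mC=\mzero$ term by term --- but for the statement actually at issue you have supplied a correct skeleton and no proof; either carry out the Tu et al.\ argument in full or, as the paper does, cite it explicitly as an external result.
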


If $\mC = \mzero$, then we have
\begin{align*}
& \left\|\mC\mC^\T - \mD\mD^\T \right\|_F^2 =  \left\|\mD\mD^\T \right\|_F^2 = \sum_{i=1}^{r_2}\sigma_i^4(\mD)\\& \geq \sigma_{r_2}^2(\mD)\sum_{i=1}^{r_2}\sigma_i^2(\mD)  = \sigma_{r_2}^2(\mD) \left\|\mC - \mD\mR\right\|_F^2.
\end{align*}
\end{proof}

\section{Proof of~\eqref{eq:bound Pi}}\label{sec:proof eq bound Pi}
\begin{proof}[Proof of~\eqref{eq:bound Pi}] We prove the upper bounds for the four terms as follows.

{\noindent\bf Bounding term $\Pi_1$:}
Utilizing the fact that $\mDelta_{\mU} = \mU- \mU^\star\mR$ and $\mDelta_{\mV} = \mV- \mV^\star\mR$, we have
\begin{align*}
\Pi_1& =\left\langle \nabla f(\mX),\mDelta_{\mU}\mDelta_{\mV}^\T \right\rangle \\
&= \left\langle \nabla f(\mX), (\mU- \mU^\star\mR)(\mV- \mV^\star\mR)^\T\right\rangle\\
& = \left\langle \nabla f(\mX), \mX + \mX^\star -\mU^\star\mR^\T\mV^T - \mU\mR^\T{\mV^\star}^\T\right\rangle \\
&\stackrel{(i)}{=} -\left\langle \nabla f(\mX), \mX -\mX^\star\right \rangle \\
&\stackrel{(ii)}{=} -\left\langle \nabla f(\mX) - \nabla f(\mX^\star), \mX -\mX^\star \right\rangle \\
&\stackrel{(iii)}{\leq} - \alpha \left\| \mX -\mX^\star\right\|_F^2,
\end{align*}
where $(i)$ follows from \eqref{eq:proof thm cirtical 3} and \eqref{eq:proof thm cirtical 4},
$(ii)$ utilizes $\nabla f(\mX^\star)=\mzero$, and $(iii)$ follows by using the $(2r,4r)$-restricted strict convexity property \eqref{eq:RIP like}:
\begin{align*}
&\left\langle \nabla f(\mX) - \nabla f(\mX^\star), \mX -\mX^\star \right\rangle\\
&= \int_0^1\left[ \nabla^2 f(t\mX + (1-t)\mX^\star)\right]\left(\mX -\mX^\star, \mX -\mX^\star\right) d  t\\
&\geq  \int_0^1\alpha \left\langle \mX -\mX^\star, \mX -\mX^\star\right\rangle d  t\\
&= \alpha \left\|\mX - \mX^\star\right\|_F^2,
\end{align*}
where the first line follows from the integral form of the mean value theorem for vector-valued functions, and the second line uses the fact that both $t\mX + (1-t)\mX^\star$ and $\mX -\mX^\star$ have rank at most $2r$,  and the $(2r,4r)$-restricted strong convexity of the Hessian $\nabla^2 f(\cdot)$.

\vspace{.1in}
{\noindent\bf Bounding term $\Pi_2$:}
By the smoothness condition \eqref{eq:RIP like}, we have
\begin{align*}
\Pi_2& = \left[\nabla^2f(\mX)\right]\left(\mDelta_{\mU}\mV^\T+\mU\mDelta_{\mV}^\T,\mDelta_{\mU}\mV^\T+\mU\mDelta_{\mV}^\T\right)\\
&\leq \beta\left\|\mDelta_{\mU}\mV^\T+\mU\mDelta_{\mV}^\T\right\|_F^2\\
&\leq  2\beta\left(\left\|\mDelta_{\mU}\mV^\T\right\|_F^2+\left\|\mU\mDelta_{\mV}^\T\right\|_F^2\right)\\
& = \beta \left\|\mW\mDelta^\T\right\|_F^2,
\end{align*}
where the last line holds because $ \left\|\mD\mU^\T\right\|_F^2  = \left\|\mD\mV^\T\right\|_F^2$
for any $\mD\in\R^{p\times r}$ with arbitrary $p\geq 1$ since any critical point $\mW$ satisfies $\mU^\T\mU = \mV^\T\mV$.

\vspace{.1in}
{\noindent \bf Bounding term $\Pi_3$:}
\begin{align*}
&\Pi_3\\ &= \langle \mU\mDelta_{\mU}^\T,\mDelta_{\mU}\mU^\T\rangle + \langle\mV\mDelta_{\mV}^\T,\mDelta_{\mV}\mV^\T \rangle- 2 \langle\mU\mDelta_{\mV}^\T,\mDelta_{\mU}\mV^\T \rangle\\
&\leq\left \|\mU\mDelta_{\mU}^\T\right\|_F^2 + \left\|\mV\mDelta_{\mV}^\T\right\|_F^2 + \left\|\mU\mDelta_{\mV}^\T\right\|_F^2 +\left\|\mV\mDelta_{\mU}^\T\right\|_F^2 \\
& = \left\|\mW\mDelta^\T\right\|_F^2.
\end{align*}

\vspace{.1in}
{\noindent \bf Bounding term $\Pi_4$:}
\begin{align*}
\Pi_4 &= \left\langle \widehat \mW\widehat\mW^\T, \left(\mW - \mW^\star\mR\right)\left(\mW - \mW^\star\mR\right)^\T \right\rangle \\
&\stackrel{(i)}{=}-\left\langle \widehat \mW\widehat\mW^\T, \mW\mW^\T -\mW^\star{\mW^\star}^\T \right\rangle\\
&\stackrel{(ii)}{\leq} -\left\langle \widehat \mW\widehat\mW^\T, \mW\mW^\T -\mW^\star{\mW^\star}^\T \right\rangle \\ &\quad +\left\langle \widehat \mW^\star\widehat\mW^{\star\T}, \mW\mW^\T -\mW^\star{\mW^\star}^\T \right\rangle\\
& = -\left\langle \widehat \mW\widehat\mW^\T - \widehat \mW^\star\widehat\mW^{\star\T}, \mW\mW^\T -\mW^\star{\mW^\star}^\T \right\rangle\\
& \leq 2\left\|\mX - \mX^\star\right\|_F^2,
\end{align*}
where $(i)$ holds because $\widehat\mW^\T\mW = \mzero$, and $(ii)$ follows because  $\widehat\mW^{\star\T} \mW^\star = \mzero$ and $\langle \widehat \mW^\star\widehat\mW^{\star\T}, \mW\mW^\T \rangle\geq 0$ since it is the inner product between two PSD matrices.
\end{proof}

\section{Proof of~\eqref{eq:bound 5}}\label{sec:prf eq bound 5}
\begin{proof}[Proof of~\eqref{eq:bound 5}]
To show~\eqref{eq:bound 5}, expanding the left hand side of~\eqref{eq:bound 5}, it is equivalent to show
\[
\left\|\mU\mU^\T - \mU^{\star}\mU^{\star\T} \right\|_F^2 + \left\|\mV\mV^\T - \mV^{\star}\mV^{\star\T} \right\|_F^2 \leq 2\left\|\mX - \mX^\star\right\|_F^2.
\]
Expanding both sides of the above equation and utilizing the fact $\mU^\T\mU = \mV^\T\mV$ and $\mU^{\star\T}\mU^\star = \mV^{\star\T}\mV^\star$, the remaining step is to show
\begin{align*}
&\trace\left(\mU\mU^\T\mU^{\star}\mU^{\star\T}\right) + \left( \mV\mV^\T\mV^{\star}\mV^{\star\T}\right) \\ & \geq 2 \trace\left(\mU\mV^\T\mV^{\star}\mU^{\star\T}\right).
\end{align*}
Thus, we obtain~\eqref{eq:bound 5} by noting that the above equation is equivalent to
\[
\trace\left(\left(\mU^{\star\T}\mU -\mV^{\star\T}\mV\right)^2\right)\geq 0.
\]
\end{proof}

\bibliographystyle{ieeetr}
\bibliography{nonconvex}
\end{document}